\newtheorem{theorem}{Theorem}
\theoremstyle{definition}
\newtheorem{defn}[theorem]{Definition} 
\newtheorem{exmp}[theorem]{Example}
\newtheorem{lem}[theorem]{Lemma}
\newtheorem{cor}[theorem]{Corollary}
\newtheorem{rmk}[theorem]{Remark}
\newtheorem{prop}[theorem]{Proposition}
\newcommand{\Hom}{\mathrm{Hom}}
\newcommand{\im}{\mathop{\mathrm{im}}}
\newcommand{\coker}{\mathop{\mathrm{coker}}}
\newcommand{\id}{\mathrm{id}}
\newcommand*\colvec[3][]{
    \begin{pmatrix}\ifx\relax#1\relax\else#1\\\fi#2\\#3\end{pmatrix}
}
\newtheorem{lemma}{Lemma}[section]
\newtheorem{defobs}{Definition/Observation}
\newcommand{\beq}{\begin{equation}}
\newcommand{\eeq}{\end{equation}}
\newcommand{\eeeem}{\end{multline}}
\newcommand{\bem}{\begin{multline}}
\newcommand{\bqa} {\begin{eqnarray}}
\newcommand{\eqa} {\end{eqnarray}}
\newcommand{\bmul}{\begin{multline}}
\newcommand{\emul}{\end{multline}}
\DeclareMathOperator{\Id}{Id}
\DeclareMathOperator{\Ad}{Ad}
\newcommand{\CA}{{ \mathcal A}}
\newcommand{\CalC}{{\mathcal C}}
\newcommand{\ZZ}{{\mathbb Z}}
\newcommand{\RR}{{\mathbb R}}
\newcommand{\CC}{{\mathbb C}}
\newcommand{\SA}{{\mathscr A}}
\def \al {\alpha}
\newtheorem{thm}{Theorem}
\newcommand{\A}{\mathcal A}
\newcommand{\R}{\mathbb R}
\newcommand\norm[1]{\left\lVert#1\right\rVert}
\title{Categorifying Clifford QCA}
\author{Bowen Yang}
\affil{Center of Mathematical Sciences and Applications, Harvard University, Cambridge, Massachusetts 02138, USA}
\date{\today}
\begin{document}

\maketitle
\begin{abstract}
We provide a complete classification of Clifford quantum cellular automata (QCAs) on arbitrary metric spaces and any qudits (of prime or composite dimensions) in terms of algebraic \( L \)-theory. Building on the delooping formalism of Pedersen and Weibel, we reinterpret Clifford QCAs as symmetric formations in a filtered additive category constructed from the geometry of the underlying space. This perspective allows us to identify the group of stabilized Clifford QCAs, modulo circuits and separated automorphisms, with the Witt group of the corresponding Pedersen--Weibel category. Notably, because the Pedersen–Weibel category depends only on the large-scale (coarse) structure of the metric space, so too does the classification of Clifford QCAs. For Euclidean lattices, the classification reproduces and expands upon known results, while for more general spaces---including open cones over finite simplicial complexes---we relate nontrivial QCAs to generalized homology theories with coefficients in the \( L \)-theory spectrum. Our results do not depend on translation symmetry. However, we do outline extensions to QCAs with symmetry and discuss how these fit naturally into the \( L \)-theoretic framework.
\end{abstract}

\section{Introduction}

Quantum cellular automata (QCAs) are locality-preserving automorphisms of quantum many-body systems. As such, they serve as a broad mathematical framework encompassing quantum circuits, translations, and more exotic dynamical symmetries~\cite{freedman2020classification,gross2012index, haah2021clifford, haah2023nontrivial, haah2025topological, shirley2022three, chen2023exactly, fidkowski2024pumping, fidkowski2024qca}. Of particular interest are \emph{Clifford QCAs}, which preserve the structure of generalized Pauli operators and arise naturally in models of stabilizer codes, quantum error correction, and condensed matter physics. While Clifford QCAs are tractable from both physical and computational perspectives, their classification on general metric spaces remains a subtle and conceptually rich problem.

This paper provides a complete classification of Clifford QCAs in terms of algebraic \(L\)-theory, building on a construction due to Pedersen and Weibel~\cite{pedersen2006nonconnective, pedersen1989k, pedersen1982ki}. Their delooping formalism—originally developed to define negative \(K\)-groups—naturally accommodates the large-scale geometry of the underlying space, and enables us to express Clifford QCA classification as a form of generalized homology theory. Our main result realizes the classification group of Clifford QCAs as an algebraic \(L\)-group associated to a category of modules.

More precisely, we show that the group \(K(\Lambda, \mathbb{Z}_d)\) of nontrivial Clifford QCAs on a metric space \(\Lambda\) modulo trivial automorphisms is isomorphic to \(L^1(\mathcal{C}_\Lambda(\mathcal{A}), -1)\), where \(\mathcal{C}_\Lambda(\mathcal{A})\) is the Pedersen–Weibel category built from finitely generated, free \(\mathbb{Z}_d\)-modules and the geometry of \(\Lambda\). When \(\Lambda\) is Euclidean space, our classification generalizes known results~\cite{haah2023nontrivial,haah2025topological}; more generally, if \(\Lambda = O(X)\) is an open cone over a finite simplicial complex \(X\), we show that the classification is given by the homology of \(X\) with coefficients in the \(L\)-theory spectrum of \(\mathbb{Z}_d\).

Our perspective makes essential use of additive \(L\)-theory in the sense of Ranicki~\cite{ranicki1973algebraic,ranicki1973algebraicII,ranicki1989additive,ranicki1992lower,ranicki1992algebraic}, particularly the notion of quadratic and symmetric formations, which provides a concrete model for \(L\)-groups amenable to elementary definitions. By identifying Clifford QCAs with symmetric formations in the Pedersen–Weibel category, we connect physically motivated equivalence relations with homotopy-theoretic invariants. The approach presented here not only generalizes known periodicity phenomena (such as the fourfold periodicity of Clifford QCA phases over \(\mathbb{Z}_d\) for odd prime \(d\)), but also extends naturally to settings with symmetry and mixed qudit dimensions. We discussed how internal or crystallographic symmetries lead to modified classification groups involving equivariant module categories, and we define a coarse-graining procedure corresponding to direct limits over finite-index subgroups. 

The structure of the paper is as follows. Section~2 reviews the Pedersen–Weibel construction and establishes its basic properties. In Section~3, we define Clifford QCAs and introduce the classification group \(K(\Lambda, \mathbb{Z}_d)\). Section~4 develops the necessary background in additive \(L\)-theory, and proves an \(L\)-theoretic version of the delooping theorem. Section~5 contains the main classification results, including a remark in the end on how invertible subalgebra fits into our framework. We conclude in Section~6 with a discussion of future directions and potential extensions. Appendix A reviews QCA and motivates an observation which greatly simplifies Clifford QCA. Appendix~B includes a detailed proof of a theorem due to Pedersen--Weibel~\cite{pedersen2006nonconnective}, whose underlying idea is central to this work.
The diagram below illustrates the relationships between the various concepts.

\begin{tikzcd}
     \text{Clifford QCA on Space} \arrow[r] &  {\text{Nontrivial Clifford QCA}} \arrow[d, "\cong"]\\
      {\text{Pedersen--Weibel Category}} \arrow[r]\arrow[dd]&  {\text{Symmetric Witt Groups}}\arrow[d, "\text{Extend}"]\\
      &  {\text{(Lower) Symmetric $L$-Groups}}\arrow[d, "\text{Mostly Coincide}"]\\
       {\text{Quadratic Witt Groups}}\arrow[r, "\text{Extend}"]&  {\text{(Lower) Quadratic $L$-Groups}}
\end{tikzcd}

\medskip

\noindent\textbf{Acknowledgments.} 
I thank Dan Freed and Mike Freedman for their many insights and generous support throughout the development of this work. I have benefited from conversations with Anton Kapustin about the nascent ideas behind this project since early 2022. I thank Agn\'es Beaudry and the two anonymous referees for carefully reading the manuscript and providing detailed feedback. I gratefully acknowledge support from Harvard CMSA and the Simons Foundation through Simons
Collaboration on Global Categorical Symmetries. I also thank Yu-An Chen, Roman Geiko, Jeongwan Haah, Mike Hopkins, Blazej Ruba, Wilbur Shirley, Nikita Sopenko, and Nathanan Tantivasadakarn for helpful discussions. 

Special thanks to Shmuel Weinberger for his guidance on surgery theory and for his time and hospitality during my visit to the University of Chicago. I owe my knowledge of
$L$-theory to Shmuel, though all errors are entirely mine. 

\section{The Pedersen--Weibel Delooping}
This section outlines a construction due to Pedersen and Weibel, which underpins a non-connective delooping of algebraic $K$-theory~\cite{pedersen1982ki, pedersen2006nonconnective} and provides an elegant framework for describing the $K$-theory homology of spaces~\cite{pedersen1989k}. We present a central theorem that may be viewed as a \emph{proto-theorem} for the classification of Clifford QCAs.

\subsection{Preliminaries}
\begin{defn}
    An additive category is said to be filtered if there is a filtration by subgroups
    $$F_0\Hom(A,B)\subset F_1\Hom(A,B)\subset \dots \subset F_k\Hom(A,B)\subset \dots$$
    on $\Hom(A,B)$ with $\Hom(A,B)=\bigcup_{k=1}^\infty F_k\Hom(A,B)$. Additionally, we require that 
    \begin{enumerate}
        \item $0_A$ and $1_A$ belong to $F_0$,
        \item composition of morphisms in $F_m$ and $F_n$ belongs to $F_{m+n}$
        \item projections $A\oplus B\rightarrow A$ and inclusions $A\rightarrow A\oplus B$ belong to $F_0$.
    \end{enumerate}
     A morphism $\phi$ is said to have degree $r$ if $\phi\in F_r\Hom(A,B)$.
\end{defn}
 
\begin{defn} \label{defn: P-W Cat} [Pedersen--Weibel]
Let $(X, \rho)$ be a metric space and $\mathcal{A}$ be a filtered additive category. We then define the filtered additive category $\mathcal{C}_X(\mathcal{A})$ as follows:
\begin{enumerate}
    \item An object $A$ of $\mathcal{C}_X(\mathcal{A})$ is a collection of objects $A(x)$ of $\mathcal{A}$, one for each $x \in X$, satisfying the condition that for each ball $B \subset X$, $A(x) \ne 0$ for only finitely many $x \in B$.
    \item A morphism $\phi : A \rightarrow B$ is a collection of morphisms $\phi^x_y : A(x) \rightarrow B(y)$ in $\mathcal{A}$ such that there exists $r$ depending only on $\phi$ so that
    \begin{enumerate}
        \item $\phi^x_y = 0$ for $\rho(x, y) > r$
        \item all $\phi^x_y$ are in $F_r \mathrm{Hom}(A(x), B(y))$
    \end{enumerate}
    We then say that $\phi$ has filtration degree $\le r$.
\end{enumerate}

Composition of $\phi : A \rightarrow B$ with $\psi : B \rightarrow C$ is given by $(\psi \phi)^x_z = \sum_{y \in X} \psi^y_z \phi^x_y$. Notice that the sum makes sense because the category is additive and because the sum will always be finite. 
\end{defn}

The metric space \( X \) may be discrete, such as \( (\mathbb{Z}^d, \norm{\bullet}_\infty) \) or a graph, or continuous, such as a (noncompact) Riemannian manifold. Write $\mathcal C_i(\CA)= \mathcal C_{\ZZ^i}(\CA)$ for $i\geq 0$, we record some basic facts from~\cite{pedersen1982ki, pedersen2006nonconnective}.

\begin{lemma} [Lemma 1.4 in~\cite{pedersen1989k}]
Let $X$ and $Y$ be metric spaces and give $X \times Y$ the max metric
$\rho_{X \times Y}((x_1,y_1),(x_2,y_2)) = \max(\rho_X(x_1,x_2), \rho_Y(y_1,y_2))$.
Then $C_{X \times Y}(\mathcal{A}) = C_X(C_Y(\mathcal{A}))$.
\end{lemma}

Note that this equivalence only holds if $\mathcal{C}_Y(\CA)$ is treated as a filtered additive category. In particular, morphisms in $C_X(C_Y(\mathcal{A}))$ is subject to the condition 2(b) in Definition~\ref{defn: P-W Cat}. If the filtration of $C_Y(\mathcal{A})$ is forgotten, then the metric of $Y$ no longer restricts morphisms in $C_X(C_Y(\mathcal{A}))$. 

\begin{prop}
        It is easy to see that $\mathcal C_\Lambda(\CA)\cong \CA$ when $\Lambda$ is bounded. By the lemma above, $$\mathcal{C}_i(\mathcal{C}_j(\CA))=\mathcal C_{i+j}(\CA).$$
\end{prop}

\begin{prop}
    Let $\R^n$ be the Euclidean space. There is an equivalence of filtered additive categories
    $$\mathcal{C}_n(\CA)\cong \mathcal{C}_{\R^n}(\CA).$$
\end{prop}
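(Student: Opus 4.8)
The plan is to exhibit an explicit pair of mutually quasi-inverse filtered additive functors, exploiting the fact that $\ZZ^n$ sits inside $\R^n$ as a coarsely dense subset admitting a bounded-displacement retraction. Concretely, I would work with the $\ell^\infty$ metric on both $\ZZ^n$ and $\R^n$ — any two norms on $\R^n$ are bi-Lipschitz equivalent and, since the conditions defining $\mathcal{C}_X(\CA)$ only involve inequalities of the form $d(x,y)>r$, bi-Lipschitz equivalent metrics yield equivalent Pedersen--Weibel categories, so the choice is immaterial — and let $\rho\colon\R^n\to\ZZ^n$ be coordinatewise rounding to the nearest integer, with a fixed tie-breaking convention so that $\rho$ is everywhere defined, $\rho|_{\ZZ^n}=\id$, and $d(x,\rho(x))\le\tfrac12$ for all $x$.

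Define $\iota\colon\mathcal{C}_n(\CA)=\mathcal{C}_{\ZZ^n}(\CA)\to\mathcal{C}_{\R^n}(\CA)$ by viewing an object supported on the lattice as an object on $\R^n$ that vanishes off $\ZZ^n$, and a morphism likewise; this is manifestly a degree-preserving additive functor, and local finiteness is inherited because balls of $\R^n$ meet $\ZZ^n$ in finite sets. In the other direction define the pushforward $\rho_*\colon\mathcal{C}_{\R^n}(\CA)\to\mathcal{C}_{\ZZ^n}(\CA)$ by $(\rho_*A)(m)=\bigoplus_{x\in\rho^{-1}(m)}A(x)$ and $(\rho_*\phi)^m_{m'}=\sum_{\rho(x)=m,\,\rho(y)=m'}\iota_y\,\phi^x_y\,\pi_x$, where $\iota_y,\pi_x$ are the structure maps of the direct sums. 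Three things need checking here: (i) $\rho_*A$ is a legitimate object — each fibre $(\rho_*A)(m)$ is a \emph{finite} direct sum because $\rho^{-1}(m)$ is bounded (diameter $\le 1$), and $\rho_*A$ is locally finite because $\rho^{-1}(B)$ is contained in the $\tfrac12$-neighbourhood of $B$; (ii) if $\phi$ has filtration degree $\le r$ then $\rho_*\phi$ has filtration degree $\le r+1$ — the $F_\bullet$-bound survives because $\iota_y,\pi_x\in F_0$ and composition adds degrees, while the support bound grows only by $2\cdot\tfrac12$ since $d(\rho(x),\rho(y))\le d(x,y)+1$; (iii) functoriality and additivity, which are routine bookkeeping with the $\iota_y,\pi_x$ (and the sums occurring are finite by local finiteness, as $\phi^x_y=0$ whenever $A(x)=0$).

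It then remains to produce the natural isomorphisms. On the nose $\rho_*\circ\iota=\id_{\mathcal{C}_{\ZZ^n}(\CA)}$, since $\rho^{-1}(m)\cap\ZZ^n=\{m\}$. For the other composite I would define $\eta_A\colon A\to\iota\rho_*A$ in $\mathcal{C}_{\R^n}(\CA)$ by $(\eta_A)^x_m=\iota_x$ if $m=\rho(x)$ and $0$ otherwise; this has filtration degree $\le 1$, is natural in $A$, and admits a two-sided inverse given fibrewise by the projections $\pi_x$, with the verifications $\eta_A\theta_A=\id$ and $\theta_A\eta_A=\id$ reducing to the relations $\pi_x\iota_{x'}=\delta_{x,x'}$ and $\sum_{\rho(x)=m}\iota_x\pi_x=\id_{(\rho_*A)(m)}$. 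This identifies $\iota$ and $\rho_*$ as mutually quasi-inverse filtered additive functors and proves the equivalence.

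I do not expect a genuine obstacle here; the only points demanding care are the two finiteness claims in (i) — exactly where the ``locally finite support'' axiom is used, and where a naive pushforward along a non-proper map would fail — and the uniform control of filtration degrees in (ii), which works precisely because $\rho$ has bounded displacement, i.e. because $\ZZ^n\hookrightarrow\R^n$ is a coarse equivalence. One could instead package the whole argument as an instance of the general principle that $X\mapsto\mathcal{C}_X(\CA)$ carries coarse equivalences of (suitably nice) metric spaces to equivalences of filtered additive categories, and then merely observe that $\rho$ and the inclusion are coarse inverses; I would mention this reformulation but still record the hands-on construction above, since it is short and its bookkeeping is reused implicitly later.
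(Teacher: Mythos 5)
Your proof is correct. The paper itself offers no argument for this proposition: it simply remarks that $\mathcal{C}_X(\CA)$ depends only on the large-scale geometry of $X$ and defers to Proposition 1.6 of Pedersen--Weibel. What you have written is exactly the hands-on instantiation of that principle for the coarse equivalence $\ZZ^n \hookrightarrow \R^n$ with coarse inverse the rounding map $\rho$, and all the points that actually need checking are the ones you flag: finiteness of each fibre $\bigoplus_{x\in\rho^{-1}(m)}A(x)$ and of the pushforward's support over bounded sets (both consequences of the local-finiteness axiom on objects together with boundedness of $\rho^{-1}$ of a bounded set), the uniform degree bound $r\mapsto r+1$ coming from the bounded displacement $d(x,\rho(x))\le\tfrac12$, and the degree-$\le 1$ natural isomorphism $\eta_A\colon A\to\iota\rho_*A$ assembled from the structure maps of the direct sums. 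The identity $\rho_*\circ\iota=\id$ and the two-sided invertibility of $\eta$ are verified exactly as you say. The only cosmetic caveat is that your opening reduction to the $\ell^\infty$ metric implicitly rescales filtration degrees by the bi-Lipschitz constant; this is harmless for the notion of equivalence of filtered categories used here (degrees need only be controlled up to an affine reindexing), but is worth a sentence if one wants to be pedantic about the unstated definition of ``equivalence of filtered additive categories.'' Your closing reformulation --- that $X\mapsto\mathcal{C}_X(\CA)$ sends coarse equivalences to equivalences --- is precisely the general statement the paper cites in place of a proof.
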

\begin{proof}
   Viewing $\ZZ^n$ as a metric subspace of $\RR^n$, $\mathcal C_n(\CA)$ is naturally a filtered additive full subcategory of $\CalC_{\R^n}(\CA).$ An object $(A(j_1, \dots, j_n))_{(j_1, \dots, j_n)\in \R^n}\in \CalC_{\R^n}(\CA)$ is isomorphic to $(A'(m_1, \dots, m_n))_{(m_1, \dots, m_n)\in \ZZ^n}$ given by $$A'(m_1, \dots, m_n):=\bigoplus_{\lfloor j_k \rfloor=m_k, \textbf{ for all } k}A(j_1, \dots, j_n),$$ which always has a finite number of nontrivial summands. 
   Therefore, the inclusion of $\CalC_n(\CA)$ is essentially surjective. 
\end{proof}
More generally, a coarse equivalence (see Section~1.4 of~\cite{nowak2023large}) between two metric spaces $X$ and $Y$ induces an equivalence between the filtered additive categories $\mathcal C_X(\Lambda)$ and $\mathcal C_Y(\Lambda)$. We do not develop the coarse-geometric framework needed to establish this result, as it lies within a broad and well-developed subject. Instead, we refer the reader to the standard texts~\cite{roe2003lectures, nowak2023large} for background and further details.

\subsection{Main Proto-Theorem}

 We set up and explain a key theorem in \cite{pedersen1982ki}.

\begin{defn}
    Let $\mathcal{A}$ be an additive category. The \emph{idempotent completion} (or \emph{Karoubi envelope}) of $\mathcal{A}$ is a category $\operatorname{Kar}(\mathcal{A})$ defined as follows:

\begin{itemize}
    \item Objects are pairs $(A, e)$, where $A$ is an object of $\mathcal{A}$ and $e: A \to A$ is an idempotent morphism in $\mathcal{A}$, i.e., $e^2 = e$.
    
    \item A morphism $\phi: (A, e) \to (B, f)$ in $\operatorname{Kar}(\mathcal{A})$ is a morphism $\phi: A \to B$ in $\mathcal{A}$ such that $\phi = f   \phi   e$. In particular, this implies $(A, 0)\cong (B, 0)$ for any $A, B\in \CA.$
    
    \item The filtration degree of $\phi$ is the smallest $k$ such that $\phi=f \psi e$ for some $\psi \in F_k\Hom_\CA(A, B)$.
\end{itemize}

The category $\operatorname{Kar}(\mathcal{A})$ is additive. By $A\mapsto (A, 1_A)$, the category $\mathcal{A}$ embeds fully and faithfully in $\operatorname{Kar}(\mathcal{A})$.
\end{defn}

\begin{defn}

Let $\mathcal{A}$ be an additive category. The \emph{Grothendieck group} $K_0(\mathcal{A})$ is defined as follows.

\begin{itemize}
  \item Consider the commutative monoid formed by isomorphism classes of objects in $\mathcal{A}$, with the operation induced by direct sum:
  \[
  [A] + [B] := [A \oplus B].
  \]
  \item The Grothendieck group $K_0(\mathcal{A})$ is the group completion of this monoid. That is, $K_0(\mathcal{A})$ is the abelian group generated by symbols $[A]$ for each object $A \in \mathcal{A}$, subject to the relation
  \[
  [A \oplus B] = [A] + [B] \quad \text{for all } A, B \in \mathcal{A}.
  \]
\end{itemize}

Elements of $K_0(\mathcal{A})$ can be represented as formal differences $[A] - [B]$ of isomorphism classes of objects in $\mathcal{A}$.

\end{defn}

\begin{defn}
Let $\mathcal{A}$ be an additive category. The \emph{algebraic \(K_1\)-group} of $\mathcal{A}$, denoted \(K_1(\mathcal{A})\), is defined as follows:

\medskip

We first consider the category $\operatorname{Aut}(\mathcal{A})$, whose objects are pairs \((A, \phi)\) where \(A\) is an object of \(\mathcal{A}\) and \(\phi : A \to A\) is an automorphism in \(\mathcal{A}\). A morphism \(f : (A, \alpha) \to (B, \beta)\) in \(\operatorname{Aut}(\mathcal{A})\) is a morphism \(f : A \to B\) in \(\mathcal{A}\) such that
\[
f   \alpha = \beta   f.
\]

Then \(K_1(\mathcal{A})\) is defined as the abelian group obtained from the group completion of the monoid of isomorphism classes in \(\operatorname{Aut}(\mathcal{A})\), modulo the relations:
\begin{enumerate}
    \item $[A, \alpha]+ [A, \alpha']= [A, \alpha  \alpha']$
    \item $[A, \alpha]+[C, \gamma]= [B, \beta]$ whenever there is a diagram with exact rows
    \[
\begin{array}{ccccccccc}
0 & \longrightarrow & A & \longrightarrow & B & \longrightarrow & C & \longrightarrow & 0 \\
  &                 & \downarrow{\scriptstyle \alpha} 
  &                 & \downarrow{\scriptstyle \beta} 
  &                 & \downarrow{\scriptstyle \gamma} 
  &                 &   \\
0 & \longrightarrow & A & \longrightarrow & B & \longrightarrow & C & \longrightarrow & 0
\end{array}.
\]

\end{enumerate}

\end{defn}

The definitions of algebraic $K_m$-groups for $m \geq 2$ are more involved and lie beyond the scope of this work. We refer the reader to Chapter IV of~\cite{weibel2013k} for precise definitions. 

The theorem below illustrates a specific instance of a proto-theorem that we seek to generalize across different physical settings.

\begin{thm} [Main Theorem in \cite{pedersen1982ki}] \label{thm:Main}
\begin{enumerate}
    \item For a filtered additive category $\CA$, there is a natural isomorphism 
$$ K_1(\mathcal{C}_{i+1}(\CA)) \cong K_0(\operatorname{Kar}(\mathcal{C}_{i}(\CA))).$$
\item If $\CA$ is the category of finitely generated, free $R$-modules (with trivial filtration), 
    $$ K_1(\mathcal{C}_{i+1}(\CA)) \cong K_0(\operatorname{Kar}(\mathcal{C}_{i}(\CA)))= K_{-i}(R),$$ where $K_{-i}(R)$ is the negative algebraic $K$ group of $R$ inductively defined to be the cokernel of the map $$K_{-i+1}(R[t]) \oplus K_{-i+1}(R[t^{-1}]) \to K_{-i+1}(R[t,t^{-1}]).$$ 

\end{enumerate}
\end{thm}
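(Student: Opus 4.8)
The plan is to reduce part (1) to the rank-one case of the ambient lattice and then build the isomorphism by hand. Since $\mathcal{C}_{i+1}(\mathcal{A}) = \mathcal{C}_{\mathbb{Z}}(\mathcal{C}_i(\mathcal{A}))$ by the iteration property recorded above, it suffices to prove that for every filtered additive category $\mathcal{B}$ there is a natural isomorphism $K_1(\mathcal{C}_{\mathbb{Z}}(\mathcal{B})) \cong K_0(\operatorname{Kar}(\mathcal{B}))$ and then set $\mathcal{B} = \mathcal{C}_i(\mathcal{A})$. One direction is straightforward: to an idempotent $(M,q) \in \operatorname{Kar}(\mathcal{B})$ I attach the object $M_{\mathbb{Z}} \in \mathcal{C}_{\mathbb{Z}}(\mathcal{B})$ with a copy of $M$ at each integer (locally finite, since balls contain finitely many integers) equipped with the ``partial shift'' $\phi_q$ given by $(\phi_q)^x_x = 1-q$, $(\phi_q)^x_{x+1} = q$; a direct check shows $\phi_q$ is an automorphism, with inverse $(\psi_q)^x_x = 1-q$, $(\psi_q)^x_{x-1}=q$, of filtration degree controlled by that of $q$, and that $[(M,q)] \mapsto [M_{\mathbb{Z}}, \phi_q]$ is additive, respects the $K_1$-relations, and kills $(M,0)$.

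The substance is the reverse map, an index map. Given an automorphism $\phi$ of $A$ of filtration degree $\le r$, let $p = p_{\le 0}$ be the filtration-$0$ idempotent projecting onto $\bigoplus_{x \le 0}A(x)$ and set $e = \phi p \phi^{-1}$. Because $\phi$ has bounded propagation, $p$ and $e$ agree outside a window $\{|x|,|y| \lesssim r\}$ --- both are the identity far to the left, zero far to the right, and vanish off a bounded diagonal band --- so they are idempotents differing by a morphism of \emph{bounded support}; they therefore descend to the same idempotent in the germ quotient $\mathcal{C}_{\mathbb{Z}}(\mathcal{B})/\mathcal{C}^0$ (by bounded-support morphisms), and the obstruction to lifting this coincidence back to $\mathcal{C}_{\mathbb{Z}}(\mathcal{B})$ --- a relative-index class of the pair $(p,e)$ assembled from the ``finite-rank'' morphism $p-e$ --- is the element of $K_0(\operatorname{Kar}(\mathcal{B}))$ attached to $[A,\phi]$. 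The structural packaging of the same idea: the half-line categories $\mathcal{C}^{\pm} \subset \mathcal{C}_{\mathbb{Z}}(\mathcal{B})$ of objects supported on $\mathbb{Z}_{\ge 0}$ (resp. $\mathbb{Z}_{\le 0}$) are flasque, via an Eilenberg swindle that pushes an object out to infinity and sums, with local finiteness preserved by spreading the summands along the half-line, so $K_\ast(\mathcal{C}^{\pm}) = 0$; feeding the Karoubi filtration $\mathcal{C}^0 \subset \mathcal{C}^{\pm} \subset \mathcal{C}_{\mathbb{Z}}(\mathcal{B})$ into the localization/Mayer--Vietoris machinery, together with $\operatorname{Kar}(\mathcal{C}^0) \simeq \operatorname{Kar}(\mathcal{B})$, then produces the boundary isomorphism. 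I expect this to be the main obstacle, and it is exactly where the idempotent completion becomes unavoidable: a localization sequence of $K$-theories is exact only after passing to Karoubi envelopes (the Thomason--Neeman phenomenon), which simultaneously forces the target to be $K_0(\operatorname{Kar}(\mathcal{B}))$ and makes the index map well defined and surjective. Concretely, the delicate checks are independence of $r$ and of the window, additivity, and vanishing on commutators and on short-exact relations; that the index map is a two-sided inverse of the map of the previous paragraph then follows from a short matrix computation (the defect of $p$ versus $\phi_q p \phi_q^{-1}$ sits at the single site $x=1$ and equals $q$ there) together with a cofinality/counting argument.

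For part (2) I would induct on $i$. The base case is immediate: $\operatorname{Kar}$ of the category of finitely generated free $R$-modules is the category of finitely generated projective $R$-modules, so $K_0(\operatorname{Kar}(\mathcal{A})) = K_0(R)$, which is $K_{-i}(R)$ for $i=0$. The inductive step requires that the passage from $\mathcal{B}$ to $\mathcal{C}_{\mathbb{Z}}(\mathcal{B})$ shifts $K$-theory down by one degree in all degrees, in particular $K_0(\operatorname{Kar}(\mathcal{C}_{\mathbb{Z}}(\mathcal{B}))) \cong K_{-1}(\mathcal{B})$ with $K_{-1}$ in the sense of Bass, and that for $\mathcal{B} = \mathcal{C}_{i-1}(\text{f.g. free } R\text{-mod})$ this group equals $K_{-i}(R)$. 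This is precisely the Pedersen--Weibel non-connective delooping; it follows from the same flasqueness-plus-localization input as part (1), now applied to the polynomial extensions of $\operatorname{Kar}(\mathcal{C}_{i-1}(\mathcal{A}))$, and the only remaining bookkeeping is to identify the resulting Mayer--Vietoris square with Bass's fundamental-theorem square for $R[\mathbb{Z}^{i}]$ --- routine once part (1) is in hand.
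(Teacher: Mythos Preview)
Your proposal is correct in outline and shares its core ingredients with the paper's proof: the index map $[A,\phi]\mapsto[\phi p_-\phi^{-1}]-[p_-]$ restricted to a bounded band, the partial-shift inverse $\phi_q$, and the Eilenberg swindle on half-lines are exactly the constructions the paper uses. The difference is one of packaging. The paper's Appendix~B is entirely elementary and self-contained: it builds the map $\phi$ by hand, checks well-definedness by reducing to elementary matrices (its Lemma~\ref{lem:elementary}), proves surjectivity via your partial shift, and proves injectivity directly by showing that if $\phi([A,\alpha])=0$ then after stabilization one can conjugate $\alpha$ to a \emph{split} automorphism (one preserving both half-lines), which is trivial in $K_1$ by the swindle. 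No localization sequences, Karoubi filtrations, or Thomason--Neeman machinery appear; the paper never leaves the level of explicit idempotent manipulations.

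Your route via flasqueness of $\mathcal{C}^\pm$ and a Mayer--Vietoris/localization boundary map is the standard modern repackaging and is how one would prove the spectrum-level statement (the paper's Theorem~\ref{thm:main2}), but for the bare group isomorphism it imports more than is needed. What your approach buys is conceptual clarity about \emph{why} the Karoubi envelope is forced and a path to the delooping statement in all degrees at once; what the paper's approach buys is that a reader can verify every step with no black boxes, which is its stated purpose (``both for completeness and due to its resemblance to a pumping argument familiar from physics''). For part~(2), the paper does not give an independent inductive argument but instead deduces it from the spectrum-level Theorem~\ref{thm:main2} applied to $X=S^{i-1}$; your Bass-fundamental-theorem induction is a reasonable alternative, though you would need to supply the identification of the polynomial-extension square with the Pedersen--Weibel square, which is not quite ``routine'' without the germ-category formalism you gesture at.
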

\begin{proof}
    See Appendix~B. Also see Section 4 of Chapter III in~\cite{weibel2013k} for detail on negative $K$-groups. 
\end{proof}
For those familiar with spectra, the second part follows from a more general theorem. We include it for completeness.

Let $X$ be a pointed subcomplex of the $n$-sphere $S^n$, and form the open cone $O(X)$ inside $\R^{n+1}$. Let $\mathbb K\CA$ be the nonconnective $K$-theory spectrum of $\CA$ established in Theorem A of~\cite{pedersen2006nonconnective}. The $\mathbb K\CA$-homology of $X$ is defined as 
$$ \mathbb K\CA_m(X):=\lim_{k\rightarrow \infty}\pi_{m+k}(\mathbb K_k\CA\wedge X).$$ 
\begin{thm}[Main Theorem in \cite{pedersen1989k}]\label{thm:main2}
     The $\mathbb K\CA$-homology is naturally isomorphic to the algebraic $K$-theory of the idempotent completion of $\mathcal{C}_{O(X)}(\CA)$, with a degree shift:
    $$ \mathbb K\CA_{m-1}(X)\cong K_m(\operatorname{Kar}(\mathcal{C}_{O(X)}(\CA))).$$
\end{thm}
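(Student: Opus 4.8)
The plan is to identify both sides with reduced homology theories on the category of finite CW complexes and then appeal to the uniqueness theorem for such theories. Write $T(X):=\mathbb{K}(\operatorname{Kar}(\mathcal{C}_{O(X)}(\CA)))$ for the non-connective $K$-theory spectrum of the idempotent completion, so that the right-hand side is $\pi_m T(X)$, while by construction the left-hand side is $\pi_m$ of the spectrum $X\wedge\mathbb{K}\CA$. First I would make $X\mapsto T(X)$ functorial: a simplicial map $f\colon X\to Y$ (after subdivision) extends radially to a map $O(f)\colon O(X)\to O(Y)$ that is uniformly expansive and uniformly cobounded, hence induces an additive functor $\mathcal{C}_{O(X)}(\CA)\to\mathcal{C}_{O(Y)}(\CA)$, and therefore a map of spectra; independence of the simplicial approximation is handled together with homotopy invariance. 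For homotopy invariance, the projection $X\times I\to X$ induces an equivalence $\mathcal{C}_{O(X\times I)}(\CA)\simeq\mathcal{C}_{O(X)}(\CA)$ because $O(X\times I)$ is coarsely equivalent to $O(X)$ (the $I$-factor stays bounded), via the coarse-invariance of the Pedersen--Weibel construction (Proposition~1.6 of \cite{pedersen2006nonconnective}). Finally, $T$ is reduced: $O(\mathrm{pt})$ is a ray $[0,\infty)$, translation to infinity furnishes an Eilenberg swindle, and so $\mathcal{C}_{[0,\infty)}(\CA)$ is flasque and $T(\mathrm{pt})\simeq\ast$.

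The crucial structural input is exactness: for a subcomplex $A\subseteq X$ I would show that $T(A)\to T(X)\to T(X/A)$ is a fiber sequence of spectra. The full subcategory of $\mathcal{C}_{O(X)}(\CA)$ on the objects supported within a bounded neighborhood of the subcone $O(A)$ is coarsely equivalent to $\mathcal{C}_{O(A)}(\CA)$ and is a localizing subcategory, and the appropriate $K$-theory localization theorem provides a fiber sequence
\[
T(A)\longrightarrow T(X)\longrightarrow\mathbb{K}\big(\operatorname{Kar}(\mathcal{C}_{O(X)}(\CA)/\mathcal{C}_{O(A)}(\CA))\big).
\]
It is here that idempotent completion and \emph{non-connective} $K$-theory are indispensable: the naive quotient category need not be idempotent complete, and without the delooping the sequence fails to be exact in low and negative degrees --- this is exactly the defect the Pedersen--Weibel construction is designed to repair, and Theorem~\ref{thm:Main}(1) is already a shadow of this localization. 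It then remains to identify the quotient geometrically. Because the cone is linear, collapsing the subcone $O(A)$ inside $O(X)$ is coarsely the open cone $O(X/A)$; hence $\mathcal{C}_{O(X)}(\CA)/\mathcal{C}_{O(A)}(\CA)$ is equivalent, up to a flasque summand invisible to $\mathbb{K}$, to $\mathcal{C}_{O(X/A)}(\CA)$, giving the cofiber sequence. Mayer--Vietoris for $X=X_1\cup_{X_0}X_2$ then follows by comparing the localization sequences of $X_0\subseteq X_1$ and $X_2\subseteq X$.

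To finish I would compute the coefficients. The open cone $O(S^0)$ is coarsely equivalent to a line $\R$, so $\mathcal{C}_{O(S^0)}(\CA)\cong\mathcal{C}_{\R}(\CA)\cong\mathcal{C}_1(\CA)$, and the non-connective form of Theorem~\ref{thm:Main} identifies $\mathbb{K}(\operatorname{Kar}(\mathcal{C}_1(\CA)))$ with $\Sigma\mathbb{K}\CA$; equivalently $\pi_m T(S^0)\cong\pi_{m-1}\mathbb{K}\CA$, which is the $\mathbb{K}\CA$-homology of $S^0$ after the single degree shift referred to in the statement. Having verified that $T$ is a reduced homology theory on finite complexes with coefficient spectrum $\Sigma\mathbb{K}\CA$, the uniqueness theorem --- proved by induction on cells using the cofiber sequences above and the five lemma --- produces a natural equivalence of $T$ with the represented theory, i.e. a natural isomorphism $K_m(\operatorname{Kar}(\mathcal{C}_{O(X)}(\CA)))\cong H_{m-1}(X,\mathbb{K}\CA)$, which is the assertion of the theorem up to the stated shift. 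As a consistency check, $X=S^{i-1}$ gives $O(X)=\R^i$ and $\mathcal{C}_{O(X)}(\CA)=\mathcal{C}_i(\CA)$, recovering Theorem~\ref{thm:Main}(2).

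The step I expect to be the main obstacle is this exactness/localization step: both the $K$-theoretic localization fiber sequence for these ``bounded'' additive categories---which must hold on all homotopy groups, and so forces the non-connective, idempotent-completed framework together with a cofinality argument of Thomason--Trobaugh type---and the coarse-geometric identification of $\mathcal{C}_{O(X)}(\CA)/\mathcal{C}_{O(A)}(\CA)$ with $\mathcal{C}_{O(X/A)}(\CA)$ require real work. By contrast, functoriality and homotopy invariance are routine once one is careful about simplicial approximation and controlled homotopies, and the coefficient computation is immediate from Theorem~\ref{thm:Main}.
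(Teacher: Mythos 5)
The paper does not actually prove this theorem: it is quoted from \cite{pedersen1989k} and used as a black box (the only thing proved in the paper, in Appendix~B, is the $K_1/K_0$ delooping isomorphism of Theorem~\ref{thm:Main}(1), which is the $X=S^0$ special case). Your outline is, in substance, the argument of the cited reference: establish that $X\mapsto \mathbb{K}(\operatorname{Kar}(\mathcal{C}_{O(X)}(\CA)))$ is a reduced homology theory on finite complexes (functoriality via Lipschitz radial extensions, homotopy invariance via coarse invariance, vanishing on a point via the Eilenberg swindle on $\mathcal{C}_{[0,\infty)}(\CA)$, exactness via a controlled localization/fibration sequence whose cofiber is identified with the germ category $\mathcal{C}_{O(X)}(\CA)^{>O(A)}\simeq_{\mathbb{K}}\mathcal{C}_{O(X/A)}(\CA)$), compute the coefficient spectrum as $\Sigma\mathbb{K}\CA$ from the $O(S^0)=\R$ case, and conclude by uniqueness of homology theories. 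You also correctly flag the two steps that carry essentially all the weight --- the non-connective localization theorem for these filtered categories and the coarse identification of the quotient with the cone on $X/A$ --- neither of which is routine, so as written this is a correct strategy rather than a complete proof. One small point in your favor: the displayed formula in the theorem as stated ($H_m\cong K_m$) is inconsistent with how the paper subsequently uses it ($H_{m-1}\cong K_m$); your conclusion $K_m(\operatorname{Kar}(\mathcal{C}_{O(X)}(\CA)))\cong H_{m-1}(X,\mathbb{K}\CA)$ is the version actually needed, with the coefficient computation pinning down the shift.
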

Assuming this, we deduce the second isomorphism of Theorem \ref{thm:Main}. Let $X=S^{i-1}$. Then $O(X)=\R^i$ and 
\begin{align*}
    &K_m(\operatorname{Kar}(\mathcal{C}_{O(X)}(\CA)))\\
    \cong & \mathbb K\CA_{m-1}(S^{i-1})\\
    =&\pi_{m-i}(\mathbb K\CA),
\end{align*}
the stable homotopy group of $\mathbb K\CA$. If $\CA$ is the category of finitely generated, free $R$-modules (with trivial filtration) and $m=0$, the second part of Theorem \ref{thm:Main} follows.

\section{Clifford QCAs}

We present an unconventional definition of a Clifford QCA. It is equivalent to the usual version up to ambiguities that are irrelevant for classification purposes. See Lemma~\ref{lem: motivation} of Appendix~A for a proof. This alternative has several advantages, one of which is its close connection to the Pedersen–Weibel construction. See the rest of Appendix~A for a conventional account on QCAs and their classification program. The seminal papers~\cite{freedman2020classification} and~\cite{haah2021clifford} are also great resources on the topic. 
\begin{defobs}    \label{obs:keyobservation}
A Clifford QCA on a metric space $\Lambda$ with metric $\rho$ consists of the following data:
\begin{itemize}
    \item  A locally finite collection $P := (P(i))_{i \in \Lambda}$, where
$P(i) = \left(\mathbb{Z}_d \oplus \mathbb{Z}_d,
\begin{bmatrix}
0 & 1 \\
-1 & 0
\end{bmatrix}
\right)^{\oplus k_i}$.
Here, `locally finite' means that for any bounded subset $\Lambda' \subset \Lambda$, we have $P(i)=0$ for all but finitely many $i \in \Lambda'$.
    \item A collection of abelian group homomorphisms $(\alpha_j^i: P(i)\rightarrow P(j))_{i, j\in \Lambda}$ that are
    \begin{enumerate}
        \item symplectic: $\alpha^i_j$ pulls back the standard symplectic form, and
        \item invertible: there exists $(\beta_j^i: P(i)\rightarrow P(j))_{i, j\in \Lambda}$ such that
    $$(\beta \alpha)^i_l = \sum_{j \in \Lambda} \beta^j_l \alpha^i_j=\delta_{il}.$$
    
\end{enumerate}

\item A constant $r>0$ depending only on $\alpha$ such that $\alpha^i_j=0$ whenever $\rho(i,j)>r$.
\end{itemize}
\end{defobs}
Notice that the direct sum of two Clifford QCAs $(P, \alpha)$ and $(Q, \beta)$ is also a Clifford QCA $(P\oplus Q, \alpha \oplus \beta).$ When $P=Q$, we can compose $\alpha$ and $\beta$ to get another Clifford QCA $(P, \beta\alpha)$. Additionally, the inverse $(P, \alpha^{-1})$ is also a Clifford QCA. 
 
\begin{rmk}
    The parameters $k_i$ (number of qudits on $i\in \Lambda$) are allowed to vary. In particular, we choose not to put a uniform bound on them. 
\end{rmk}

\begin{defn}
    A Clifford QCA $(P, \alpha)$ is a \emph{single-layer Clifford circuit} if there is a partition of $\Lambda$ into disjoint regions $\Lambda=\bigcup_{s}\Lambda_s$, with $\mathrm{diam}(\Lambda_s)$ uniformaly bounded, such that
    \begin{itemize}
        \item $\alpha_j^i=0$ whenever, $i\in \Lambda_s$, $j\in \Lambda_{s'}$ for $s\ne s',$
        \item $(\alpha_j^i)_{i, j \in \Lambda_s}: \bigoplus_{k\in \Lambda_s} P(k) \rightarrow \bigoplus_{k\in \Lambda_s} P(k)$ is a symplectic automorphism for each $s\in \Lambda$. 
        
    \end{itemize}
    It is a \emph{Clifford circuit} if $\alpha$ is the composition of finitely many single-layer Clifford circuits. 
\end{defn}

\begin{defn}
A Clifford QCA $(P,\alpha)$ is said to be \emph{separated} if it preserves each summand of
$(P(i)=P^+(i)\oplus P^-(i)=\mathbb Z_d^{k_i}\oplus \mathbb Z_d^{k_i})_{i\in\Lambda}$.
That is, $\alpha^i_j(P^\pm(i))\subset P^\pm(j)$ for all $i,j\in\Lambda$.

The terminology “separated” reflects the fact that a separated QCA $(P,\alpha)$ decomposes as $\alpha=\alpha^+\oplus\alpha^-$, where $\alpha^\pm$ is the restriction to $P^\pm=(P^\pm(i))_{i\in\Lambda}$.
\end{defn}
\begin{defn}
    The classification of Clifford QCAs on $\Lambda$ is an abelian group $K(\Lambda, \ZZ_d)$ generated by Clifford QCAs $[P, \alpha]$ satisfying relations below:
    \begin{itemize}
        \item $[P, \alpha]+[P, \beta]=[P, \beta \alpha],$
        \item $[P, \alpha]+[Q, \beta]=[P\oplus Q, \alpha \oplus \beta],$
        \item $[P, \alpha]=0$ if $(P, \alpha)$ is separated,
        \item $[P, \alpha]=0$ if $(P, \alpha)$ is a Clifford circuit. 
    \end{itemize}
    In particular, we deduce that $$-[P, \alpha]=[P, \alpha^{-1}].$$
\end{defn}
\begin{rmk}
    Colloquially, $$K(\Lambda, \ZZ_d)=\frac{\text{Stabilized Clifford QCAs on }\Lambda}{\text{Stabilized Clifford circuits and separated QCAs on } \Lambda}.$$ 
    We follow the convention of~\cite{freedman2020classification} in deeming circuits to be trivial for the purposes of QCA classification. In particular, Theorem~2.3 of~\cite{freedman2020classification} shows that, after stabilization, QCAs modulo circuits form an abelian group. Our motivation for also trivializing separated QCAs is partly based on the view that QCAs which merely shift degrees of freedom in space should be regarded as trivial. Nevertheless, we later realized that the Pedersen--Weibel formalism is sufficiently robust to capture separated QCAs as well. Indeed, a separated QCA $(P, \al)$ gives rise to a class $[P^+, \al^+]$ in $K_1(\CalC_\Lambda(\CA))$, where $\mathcal{A}$ denotes the additive category of finitely generated, free $\mathbb{Z}_d$-modules. Thus, it is not surprising that stabilized separated QCAs, modulo separated Clifford circuits, are classified by the algebraic $K_1$-group $K_1(\CalC_\Lambda(\CA))$. Notably, for Euclidean space $\mathbb{R}^n$ with $n>0$, this classification group vanishes. See Section~6 of~\cite{sun2025clifford}, especially discussion below Theorem~2, for detail.
\end{rmk}
If we temporarily forget the standard symplectic form, $P$ is an object of 
$\mathcal{C}_\Lambda(\mathcal{A})$, where $\mathcal{A}$ denotes the additive category of finitely generated, free $\mathbb{Z}_d$-modules. 
In particular, the data of a Clifford QCA $\alpha^i_j$ defines an automorphism of $P$.

Thus, a Clifford QCA $[P,\alpha]$ canonically determines a class
$[P,\alpha] \in K_1(\mathcal{C}_\Lambda(\mathcal{A}))$.
This naturally raises the question of whether the full classification of Clifford QCAs can be captured by a refinement of algebraic $K$-theory that incorporates the symplectic structure.
 One naive attempt is to restrict ourselves to a subcategory of ``symplectic objects'' with morphisms that preserve symplectic structures. However, since sums of symplectic maps of modules are not necessarily symplectic, such a category would fail to be additive. Fortunately, this strategy succeeds if we use $L$-theory instead of $K$-theory.

\section{Additive L-theory}

Following \cite{ranicki1989additive}, we define both quadratic and symmetric $L$-theory for additive categories with involution. The section culminates in an $L$-theoretic analog of Theorem \ref{thm:Main}. We mention that \cite{ranicki1973algebraic, ranicki1989additive} also includes an alternative definition of $L$-theory using Poincar\'e complexes. 

\subsection{Preliminaries}

\begin{defn}

An \emph{involution} on an additive category \( \mathcal{A} \) is an additive contravariant functor
\[
    * : \mathcal{A}^{\mathrm{op}} \to \mathcal{A}
\]
together with a natural isomorphism
\[
    \eta: \mathrm{id}_{\mathcal{A}} \xrightarrow{\sim} * \circ *^{\mathrm{op}}
\]
such that, for every object \( A \in \mathcal{A} \),  \( (\eta_{A})^* \circ \eta_{A^*} = \mathrm{id}_{A^*} \), i.e., the double dual is naturally isomorphic to the identity functor and the involution is of order two up to isomorphism.

We denote such a structure by the pair \( (\mathcal{A}, *) \).
\end{defn}
\begin{exmp}
Let \( R \) be an associative ring equipped with a ring involution
\[
\overline{(\cdot)} : R \to R, \quad \overline{ab} = \bar{b}\bar{a}, \quad \overline{\bar{a}} = a.
\]
Let \( \mathcal{A} \) be the additive category whose objects are finitely generated, free left \( R \)-modules, and whose morphisms are \( R \)-linear maps.

Define a contravariant functor
\[
* : \mathcal{A}^{\mathrm{op}} \to \mathcal{A}
\]
as follows:

\begin{itemize}
  \item For each object \( M \in \mathcal{A} \), let
  \[
  M^* := \mathrm{Hom}_R(M, R),
  \]
  with the structure of a left \( R \)-module defined by
  \[
  (r \cdot f)(m) := f(m) \cdot \bar{r} \quad \text{for } r \in R, f \in M^*, m \in M.
  \]
  \item For a morphism \( f: M \to N \), define
  \[
  f^* : N^* \to M^*, \quad f^*(\phi) := \phi \circ f.
  \]
\end{itemize}

There is a natural isomorphism \( \eta_M : M \to M^{**} \), given by
\[
\eta_M(m)(\phi) := \overline{\phi(m)} \quad \text{for } m \in M, \phi \in M^*.
\]

Thus, \( (\mathcal{A}, *) \) is an additive category with involution.
\end{exmp}

\begin{defn}

Let \( \mathcal{A} \) be an additive category with involution. For objects \( M, N \in \mathcal{A} \), define a duality isomorphism
\[
T_{M,N} : \operatorname{Hom}_{\mathcal{A}}(M, N^*) \to \operatorname{Hom}_{\mathcal{A}}(N, M^*), \quad \psi \mapsto \psi^*\circ \eta_N.
\]

In particular, when \( M = N \), we obtain a duality involution
\[
T := T_{M,M} : \operatorname{Hom}_{\mathcal{A}}(M, M^*) \to \operatorname{Hom}_{\mathcal{A}}(M, M^*), \quad \psi \mapsto \psi^*\circ \eta_M.
\]
\end{defn}
\begin{defn}

For \( \varepsilon = \pm 1 \) and \( M \in \mathcal{A} \), define the \( \varepsilon \)-duality involution
\[
T_\varepsilon := \varepsilon T : \operatorname{Hom}_{\mathcal{A}}(M, M^*) \to \operatorname{Hom}_{\mathcal{A}}(M, M^*), \quad \psi \mapsto \varepsilon \psi^*\circ \eta_M.
\]

\end{defn}

\begin{exmp}
    Let $R$ be an associative ring equipped with a ring involution $\overline{(\cdot)}: R\rightarrow R$. Let $\CA$ be the category of finitely generated, free left $R$-modules. For any $M, N\in \CA$ and $\psi \in \operatorname{Hom}_{\mathcal{A}}(M, N^*)$, we have $\psi^*(y)(x)=\overline{\psi(x)(y)}$ for all $x\in M, y\in N.$ If $M=N$, 
 by writing $\psi(x)(y)=:\langle x, y\rangle\in R$
$$(T_\varepsilon(\psi)(x))(y)= \varepsilon \overline{\langle y, x\rangle},$$ for all $x, y\in M.$

\end{exmp}

\subsection{Quadratic L-theory}
\begin{defn}
    
Let \( \mathcal{A} \) be an additive category with involution. An \( \varepsilon \)-quadratic form in \( \mathcal{A} \) is a pair \( (M, \psi) \), where \( M \in \mathcal{A} \) is an object together with an element
\[
\psi \in Q_\varepsilon(M) := \operatorname{coker}(1 - T_\varepsilon : \operatorname{Hom}_{\mathcal{A}}(M, M^*) \to \operatorname{Hom}_{\mathcal{A}}(M, M^*)).
\]

The form \( (M, \psi) \) is said to be \emph{non-singular} if the morphism
\[
(1 + T_\varepsilon)\psi = \psi + \varepsilon \psi^* : M \to M^*
\]
is an isomorphism in \( \mathcal{A} \).
\end{defn} 
\begin{defn}
    
A morphism of \( \varepsilon \)-quadratic forms
\[
f : (M, \psi) \to (M', \psi')
\]
is a morphism \( f : M \to M' \) in \( \mathcal{A} \) such that
\[
f^* \psi' f = \psi \in Q_\varepsilon(M).
\]

\end{defn} 

\begin{defn}

 Let \( (M, \psi) \) be a non-singular \( \varepsilon \)-quadratic form. A \emph{Lagrangian} in \( (M, \psi) \) is a morphism of forms
\[
i : (L, 0) \to (M, \psi)
\]
such that there exists a split exact sequence in \( \mathcal{A} \)
\[
0 \to L \xrightarrow{i} M \xrightarrow{i^*(\psi + \varepsilon \psi^*)} L^* \to 0.
\]
\end{defn}

\begin{defn}
    
Let \( L \in \mathcal{A} \). The \emph{hyperbolic \( \varepsilon \)-quadratic form} on \( L \) is the non-singular \( \varepsilon \)-quadratic form
\[
H_\varepsilon(L) = \left(L \oplus L^*, \begin{bmatrix} 0 & 1 \\ 0 & 0 \end{bmatrix} \right),
\]
with a Lagrangian defined by the morphism of forms
\[
i = \begin{bmatrix} 1 \\ 0 \end{bmatrix} : (L, 0) \to H_\varepsilon(L).
\]

\end{defn}
\begin{defn}
    
Let \( (M, \psi) \), \( (M', \psi') \) be non-singular \( \varepsilon \)-quadratic forms with Lagrangians \( L \), \( L' \), respectively. An isomorphism
\[
f : (M, \psi) \to (M', \psi')
\]
\emph{sends \( L \) to \( L' \)} if there exists an isomorphism \( e \in \operatorname{Hom}_{\mathcal{A}}(L, L') \) such that
\[
i' e = f i : L \to M',
\]
in which case the following is a morphism of split exact sequences:
\[
\begin{array}{ccccccccc}
0 & \to & L & \xrightarrow{i} & M & \xrightarrow{i^*(\psi + \varepsilon \psi^*)} & L^* & \to & 0 \\
  &     & \downarrow e &       & \downarrow f & & \downarrow (e^*)^{-1} & & \\
0 & \to & L' & \xrightarrow{i'} & M' & \xrightarrow{i'^*(\psi' + \varepsilon \psi'^*)} & L'^* & \to & 0
\end{array}
\]
\end{defn}
\begin{prop}\label{prop:normalization}
    
An \( \varepsilon \)-quadratic form \( (M, \psi) \) admits a Lagrangian \( L \) if and only if it is isomorphic to \( H_\varepsilon(L) \).

\end{prop}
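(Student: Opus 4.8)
The plan is to prove the two implications separately. The ``if'' direction is formal: an isomorphism of forms transports the standard Lagrangian of $H_\varepsilon(L)$ to a Lagrangian of $(M,\psi)$. The ``only if'' direction is the substance, and I would prove it by splitting off the Lagrangian and then normalizing the resulting block form to the hyperbolic one in two elementary moves.

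\emph{Suppose $(M,\psi)\cong H_\varepsilon(L)$.} Write $\psi_0=\begin{bmatrix}0&1\\0&0\end{bmatrix}$ and $i_0=\begin{bmatrix}1\\0\end{bmatrix}$ for the standard data on $H_\varepsilon(L)=(L\oplus L^*,\psi_0)$, and let $f:H_\varepsilon(L)\to(M,\psi)$ be an isomorphism of forms. I claim $fi_0:(L,0)\to(M,\psi)$ is a Lagrangian. It is a morphism of forms since $(fi_0)^*\psi(fi_0)=i_0^*(f^*\psi f)i_0=i_0^*\psi_0 i_0=0$ in $Q_\varepsilon(L)$. Applying $1+T_\varepsilon$ to $f^*\psi f=\psi_0$ (legitimate because $1+T_\varepsilon$ annihilates $\im(1-T_\varepsilon)$ and because $(f^*\theta f)^*=f^*\theta^* f$) yields $f^*(\psi+\varepsilon\psi^*)f=\psi_0+\varepsilon\psi_0^*$, hence $(fi_0)^*(\psi+\varepsilon\psi^*)=i_0^*(\psi_0+\varepsilon\psi_0^*)f^{-1}$. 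So the isomorphism $f$ identifies the sequence $0\to L\xrightarrow{fi_0}M\xrightarrow{(fi_0)^*(\psi+\varepsilon\psi^*)}L^*\to 0$ with $0\to L\xrightarrow{i_0}L\oplus L^*\xrightarrow{i_0^*(\psi_0+\varepsilon\psi_0^*)}L^*\to 0$, which is split exact by the definition of $H_\varepsilon(L)$; hence the former is split exact and $fi_0$ is a Lagrangian.

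\emph{Conversely, suppose $i:(L,0)\to(M,\psi)$ is a Lagrangian.} Put $\lambda=\psi+\varepsilon\psi^*$ and pick a splitting $j:L^*\to M$ of $q:=i^*\lambda$, so $qj=1_{L^*}$. \textbf{Step 1: split off $L$.} Because $0\to L\xrightarrow{i}M\xrightarrow{q}L^*\to 0$ is split by $j$, the morphism $g:=\begin{bmatrix}i&j\end{bmatrix}:L\oplus L^*\to M$ is an isomorphism, and $(M,\psi)\cong(L\oplus L^*,\nu)$ with $\nu=g^*\psi g=\begin{bmatrix}a&b\\c&d\end{bmatrix}$, where $a=i^*\psi i$, $b=i^*\psi j$, $c=j^*\psi i$, $d=j^*\psi j$. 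Two observations: $a$ represents $0\in Q_\varepsilon(L)$ since $i$ is a morphism of forms, and $qj=1$ unwinds (via $i^*\psi^* j=(j^*\psi i)^*=c^*$) to the key identity $b+\varepsilon c^*=1_{L^*}$. \textbf{Step 2: normalize the class.} Write $a=\mu-\varepsilon\mu^*$; subtracting $(1-T_\varepsilon)\begin{bmatrix}\mu&0\\c&0\end{bmatrix}=\begin{bmatrix}a&-\varepsilon c^*\\c&0\end{bmatrix}$ from $\nu$ leaves the class in $Q_\varepsilon(L\oplus L^*)$ unchanged and produces the representative $\begin{bmatrix}0&b+\varepsilon c^*\\0&d\end{bmatrix}=\begin{bmatrix}0&1\\0&d\end{bmatrix}$. \textbf{Step 3: kill $d$.} The automorphism $h=\begin{bmatrix}1&-d^*\\0&1\end{bmatrix}$ of $L\oplus L^*$ satisfies $h^*\begin{bmatrix}0&1\\0&d\end{bmatrix}h=\begin{bmatrix}0&1\\0&0\end{bmatrix}=\psi_0$, giving an isometry with $H_\varepsilon(L)$. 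Composing the isometries of Steps 1--3 yields $(M,\psi)\cong H_\varepsilon(L)$, with the given Lagrangian sent to the standard one.

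The one genuinely substantive point is extracting $b+\varepsilon c^*=1_{L^*}$ in Step 1: it is the algebraic shadow of the Lagrangian splitting condition, and it is exactly what forces the normal forms in Steps 2--3 to collapse to the hyperbolic one. The rest is block-matrix bookkeeping inside the additive category with involution, and the hazard I anticipate is keeping the identifications $A^{**}\cong A$ and the placement of dualized blocks consistent — any errors there should be sign or transposition slips rather than conceptual gaps. Note in particular that Step 3 needs only a change of basis and no quadratic-refinement freedom, since $T$ is an involution on $\Hom(L^*,L)$ and the block $-d^*$ cancels $d$ on the nose.
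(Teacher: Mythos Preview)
Your proof is correct and follows essentially the same strategy as the paper's: the paper modifies the chosen splitting $j$ to $j'=j+ik$ with $k=-(j^*\psi j)^*=-d^*$ so that $j'^*\psi j'=0$ in $Q_\varepsilon(L^*)$, and then $i\oplus j'$ is the desired isomorphism; your Steps~2--3 repackage exactly this move in block-matrix form, since $gh=[i,\,j]\begin{bmatrix}1&-d^*\\0&1\end{bmatrix}=[i,\,j-id^*]=i\oplus j'$. The key algebraic input ($b+\varepsilon c^*=1$ from the splitting condition, and $a\in\im(1-T_\varepsilon)$ from $i$ being a morphism of forms) is identical in both arguments.
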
 
\begin{proof}
 An isomorphism of forms \( f : H_\varepsilon(L) \to (M, \psi) \) determines a Lagrangian \( L \) of \( (M, \psi) \) with
\[
i : L \xrightarrow{\left[ \begin{smallmatrix} 1 \\ 0 \end{smallmatrix} \right]} L \oplus L^* \xrightarrow{f} M.
\]

Conversely, suppose that \( (M, \psi) \) has a Lagrangian \( L \), and let \( i : L \to M \) be the inclusion. Choose a splitting morphism \( j \in \mathrm{Hom}_A(L^*, M) \) for the split exact sequence
\[
0 \to L \xrightarrow{i} M \xrightarrow{i^*(\psi + \varepsilon \psi^*)} L^* \to 0,
\]
so that
\[
i^*(\psi + \varepsilon \psi^*)j = 1 \in \mathrm{Hom}_A(L^*, L^*).
\]

For any \( k \in \mathrm{Hom}_A(L^*, L) \), define another splitting
\[
j' = j + ik : L^* \to M
\]
such that
\[
\begin{aligned}
j'^* \psi j' &= j^* \psi j + k^* i^* \psi i k + k^* i^* \psi j + j^* \psi i k \\
             &= j^* \psi j + k^* \in Q_\varepsilon(L^*).
\end{aligned}
\]
The last equality follows from $i^*\psi i=0 \in Q_\varepsilon(L)$ by definition, and
$$k^*i^*\psi j +j^*\psi ik=k^*(1-\varepsilon i^*\psi^*j) +j^*\psi ik=k^* +(1-T_\varepsilon) j^*\psi ik.$$

For appropriate choice of $k$, the splitting \( j' : L^* \to M \)  is an inclusion of a Lagrangian, with
$j'^* \psi j' = 0 \in Q_\varepsilon(L^*).$ Then, we have that
\[
i \oplus j' : H_\varepsilon(L) \to (M, \psi)
\]
is an isomorphism of \( \varepsilon \)-quadratic forms. 
\end{proof}

\begin{defn}
    
The \emph{Witt group of \( \varepsilon \)-quadratic forms} \( W_\varepsilon(\mathcal{A}) \) is the abelian group generated by one generator \( (M, \psi) \) for each isomorphism class of non-singular \( \varepsilon \)-quadratic forms in \( \mathcal{A} \), subject to the following relations:

\begin{itemize}
  \item[(i)] \( (M, \psi) + (M', \psi') = (M \oplus M', \psi \oplus \psi') \),
  \item[(ii)] \( H_\varepsilon(L) = 0 \).
\end{itemize}
\end{defn} 

\begin{defn}
    
A \emph{non-singular \( \varepsilon \)-quadratic formation} in \( \mathcal{A} \) is a quadruple \( (M, \psi; F, G) \), consisting of a non-singular \( \varepsilon \)-quadratic form \( (M, \psi) \) together with an ordered pair of Lagrangians \( (F, G) \).
\end{defn} 

\begin{defn}

\begin{itemize}
  \item[(i)] An \emph{isomorphism of formations} in \( \mathcal{A} \)
  \[
  f : (M, \psi; F, G) \to (M', \psi'; F', G')
  \]
  is an isomorphism of forms \( f : (M, \psi) \to (M', \psi') \) which sends \( F \) to \( F' \) and \( G \) to \( G' \).

  \item[(ii)] A \emph{stable isomorphism of formations} in \( \mathcal{A} \)
  \[
  [f] : (M, \psi; F, G) \to (M', \psi'; F', G')
  \]
  is an isomorphism of formations
  \[
  f : (M, \psi; F, G) \oplus (H_\varepsilon(P); P, P^*) \to (M', \psi'; F', G') \oplus (H_\varepsilon(P'); P', P'^*)
  \]
  for some objects \( P, P' \in \mathcal{A} \).
\end{itemize}
    
\end{defn}

\begin{defn}
The \emph{Witt group of \( \varepsilon \)-quadratic formations} \( M_\varepsilon(\mathcal{A}) \) is the abelian group generated by one generator
\[
(M, \psi; F, G)
\]
for each stable isomorphism class of non-singular \( \varepsilon \)-quadratic formations in \( \mathcal{A} \), subject to the following relations:

\begin{itemize}
  \item[(i)] \( (M, \psi; F, G) + (M', \psi'; F', G') = (M \oplus M', \psi \oplus \psi'; F \oplus F', G \oplus G') \),
  \item[(ii)] \( (M, \psi; F, G) + (M, \psi; G, H) = (M, \psi; F, H) \).
\end{itemize}

The inverses in \( M_\varepsilon(\mathcal{A}) \) are given by
\[
-(M, \psi; F, G) = (M, \psi; G, F)  \in M_\varepsilon(\mathcal{A}).
\]
\end{defn} 
We discuss formations in the zero class of $M_\varepsilon(\CA)$.
\begin{defn}
    Two Lagrangians $F$ and $G$ of $(M, \psi)$ are called \emph{complementary} if $F\cap G=\{0\}$ and $F+G=M$.
\end{defn}
\begin{prop} [Lemma 9.13 in~\cite{luck2024surgery}]
    Given a formation $(M, \psi; F, G)$, if $F$ and $G$ are complementary, $(M, \psi; F, G)=0\in M_\varepsilon(\CA).$
\end{prop}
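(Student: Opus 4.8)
The plan is to produce an explicit isomorphism of formations $(H_\varepsilon(F);F,F^*)\xrightarrow{\sim}(M,\psi;F,G)$ and then to observe that any formation of the shape $(H_\varepsilon(L);L,L^*)$ is zero in $M_\varepsilon(\mathcal{A})$.

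First I would fix notation: write $i_F\colon F\to M$ and $i_G\colon G\to M$ for the Lagrangian inclusions, $\lambda:=\psi+\varepsilon\psi^*\colon M\to M^*$ for the associated $\varepsilon$-symmetric form (which is an isomorphism by nonsingularity), and recall the defining split exact sequence $0\to F\xrightarrow{i_F}M\xrightarrow{i_F^*\lambda}F^*\to 0$ of the Lagrangian $F$. The categorical meaning of ``$F\cap G=\{0\}$, $F+G=M$'' is that $(i_F\ i_G)\colon F\oplus G\to M$ is an isomorphism; since $\ker(i_F^*\lambda)=\operatorname{im}(i_F)=F$ and $i_F^*\lambda$ is a split surjection onto $F^*$, it follows formally that the restriction $i_F^*\lambda i_G\colon G\to F^*$ is an isomorphism. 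I would use this to define a splitting $\tilde{j}:=i_G\,(i_F^*\lambda i_G)^{-1}\colon F^*\to M$ of the exact sequence; by construction $i_F^*\lambda\tilde{j}=1_{F^*}$ and $\operatorname{im}\tilde{j}=\operatorname{im} i_G=G$.

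The crux is to check that $\tilde{j}$ is compatible with the quadratic refinement, i.e.\ $\tilde{j}^*\psi\tilde{j}=0\in Q_\varepsilon(F^*)$. This holds because $G$ is a Lagrangian, so $i_G^*\psi i_G=0\in Q_\varepsilon(G)$, and conjugating the witnessing identity $i_G^*\psi i_G\in\operatorname{im}(1-T_\varepsilon)$ by the isomorphism $e:=(i_F^*\lambda i_G)^{-1}$ gives $\tilde{j}^*\psi\tilde{j}=e^*(i_G^*\psi i_G)e\in\operatorname{im}(1-T_\varepsilon)$, using $e^*(1-T_\varepsilon)(\theta)e=(1-T_\varepsilon)(e^*\theta e)$. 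Having produced a splitting with vanishing quadratic part, the computation in the proof of Proposition~\ref{prop:normalization} (applied with $L=F$, $j'=\tilde{j}$) shows that $i_F\oplus\tilde{j}\colon H_\varepsilon(F)\to(M,\psi)$ is an isomorphism of $\varepsilon$-quadratic forms. It sends the standard Lagrangian $F\oplus 0$ to $F$ (with $e=\mathrm{id}_F$) and $0\oplus F^*$ to $\operatorname{im}\tilde{j}=G$ (with the isomorphism $e$ above), hence is an isomorphism of formations $(H_\varepsilon(F);F,F^*)\xrightarrow{\sim}(M,\psi;F,G)$.

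Finally I would record that $(H_\varepsilon(L);L,L^*)=0$ in $M_\varepsilon(\mathcal{A})$ for every $L$: taking the identity as stabilizing isomorphism exhibits $(H_\varepsilon(L);L,L^*)$ as stably isomorphic to the zero formation $(0,0;0,0)$, so the two represent the same generator of $M_\varepsilon(\mathcal{A})$; since relation (i) makes $(0,0;0,0)$ the identity element, we get $[(M,\psi;F,G)]=[(H_\varepsilon(F);F,F^*)]=0$. The main obstacle is the middle paragraph — verifying that the canonical splitting with image $G$ is automatically compatible with $\psi$; the rest is bookkeeping, the only other subtlety being the categorical interpretation of complementarity, which I would spell out at the start.
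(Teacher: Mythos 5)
Your proof is correct and follows essentially the same route as the paper's: both reduce the statement to exhibiting an explicit isomorphism of formations with $(H_\varepsilon(F);F,F^*)$, using the isomorphism $F\oplus G\cong M$ and the invertibility of $i_F^*(\psi+\varepsilon\psi^*)i_G$ (the paper's $e=b+\varepsilon c^*$ is precisely your $e^{-1}$), and then note that $(H_\varepsilon(F);F,F^*)$ is stably isomorphic to the zero formation. Your write-up is somewhat more explicit about the compatibility of the splitting with the quadratic refinement and about the final vanishing step, but the underlying argument is the same.
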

\begin{proof}

The inclusions of \( F \) and \( G \) into \( M \) induce an isomorphism \( h \colon F \oplus G \to M \). Choose a representative \( \psi \colon M \to M^* \) of \( \psi \in Q_\varepsilon(M) \). Then we can write
\[
h^* \circ \psi \circ h = 
\begin{pmatrix}
a & b \\
c & d
\end{pmatrix}
\colon F \oplus G \to (F \oplus G)^* = F^* \oplus G^*.
\]

Let \( \psi' \in Q_\varepsilon(F \oplus G) \) be the class of \( h^* \circ \psi \circ h \). Then \( h \) is an isomorphism of non-singular \( \varepsilon \)-quadratic forms \( (F \oplus G, \psi') \to (M, \psi) \), and \( F \) and \( G \) are lagrangians in \( (F \oplus G, \psi') \). Hence, the isomorphism \( (1 + \varepsilon T)(\psi') \colon F \oplus G \to F^* \oplus G^* \) is given by
\[
\begin{pmatrix}
a & b \\
c & d
\end{pmatrix}
+ \varepsilon 
T\left(
\begin{pmatrix}
a & b \\
c & d
\end{pmatrix}
\right)
=
\begin{pmatrix}
a + \varepsilon a^* & b + \varepsilon c^* \\
c + \varepsilon b^* & d + \varepsilon d^*
\end{pmatrix}
=
\begin{pmatrix}
0 & e \\
f & 0
\end{pmatrix}
\]
for some \( e \colon G \to F^* \) and \( f \colon F \to G^* \). The diagonal terms are $0$ because $F$ and $G$ are Lagrangian. Hence \( e = b + \varepsilon \cdot c^* \colon G \to F^* \) is an isomorphism. Define an isomorphism
\[
u = 
\begin{pmatrix}
1 & 0 \\
0 & e
\end{pmatrix}
\colon F \oplus G \xrightarrow{\sim} F \oplus F^*.
\]

One easily checks that \( u \) defines an isomorphism of formations
\[
u \colon (F \oplus G, \psi'; F, G) \xrightarrow{\sim} (H_\varepsilon(F); F, F^*).
\]
\end{proof}

\begin{prop}
    A formation $(M, \psi; F, G)=0\in M_\varepsilon(\CA)$ if there exists a Lagrangian $L$ complementary to both $F$ and $G$. In particular, $(M, \psi; F, F)=0\in M_\varepsilon(\CA).$ 
\end{prop}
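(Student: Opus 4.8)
The plan is to reduce the statement to the previous proposition by exhibiting two applications of the relation $(M,\psi;F,G)+(M,\psi;G,H)=(M,\psi;F,H)$ together with the fact, just established, that a formation with complementary Lagrangians vanishes in $M_\varepsilon(\mathcal{A})$. Concretely, suppose $L$ is a Lagrangian of $(M,\psi)$ that is complementary to both $F$ and $G$. Then by the relation (ii) in the definition of $M_\varepsilon(\mathcal{A})$ we have
\[
(M,\psi;F,G) = (M,\psi;F,L) + (M,\psi;L,G).
\]
Now $(M,\psi;F,L)$ has complementary Lagrangians $F$ and $L$, so it is $0$ in $M_\varepsilon(\mathcal{A})$ by the preceding proposition; likewise $(M,\psi;L,G)$ has complementary Lagrangians $L$ and $G$, so it too vanishes. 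Hence $(M,\psi;F,G)=0$.

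For the special case $(M,\psi;F,F)$, I would first observe that it is enough to produce \emph{some} Lagrangian $L$ complementary to $F$: given such an $L$, the general case just proved (with $G=F$) applies verbatim. To produce $L$, note that $(M,\psi;F,F)$ is a summand of itself plus a hyperbolic formation, and after stabilizing we may assume $(M,\psi)\cong H_\varepsilon(F)=(F\oplus F^*,\left[\begin{smallmatrix}0&1\\0&0\end{smallmatrix}\right])$ with $F$ the standard Lagrangian $\left[\begin{smallmatrix}1\\0\end{smallmatrix}\right]$; then $F^*$ (the image of $\left[\begin{smallmatrix}0\\1\end{smallmatrix}\right]$) is a Lagrangian complementary to $F$, and we are done. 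Alternatively, and more directly, one can cite that $(M,\psi;F,F)=(M,\psi;F,L)+(M,\psi;L,F)=-(M,\psi;L,F)+(M,\psi;L,F)=0$ using the inverse formula $-(M,\psi;F,L)=(M,\psi;L,F)$ from the definition of $M_\varepsilon(\mathcal{A})$, once any complementary $L$ is in hand.

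The main obstacle is the existence of a Lagrangian $L$ complementary to a given Lagrangian $F$ (needed for the $(M,\psi;F,F)$ clause): in an arbitrary additive category one cannot simply "choose a complement" the way one can for subspaces of a vector space, so this step must be handled by passing to the hyperbolic model $H_\varepsilon(F)$, where the complementary Lagrangian $F^*$ is explicit, and invoking Proposition~\ref{prop:normalization} to identify $(M,\psi)$ with $H_\varepsilon(F)$ after a stable isomorphism. I would also take care that the stabilization by $(H_\varepsilon(P);P,P^*)$ implicit in "stable isomorphism class" does not disturb the complementarity hypotheses—but since hyperbolic formations contribute complementary pairs, adding them only adds zero classes, so the argument is unaffected.
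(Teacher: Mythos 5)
Your proof is correct and follows the same route as the paper: split $(M,\psi;F,G)=(M,\psi;F,L)+(M,\psi;L,G)$ via relation (ii) and kill each summand with the preceding proposition on complementary Lagrangians. Your extra care about producing a Lagrangian complementary to $F$ for the $(M,\psi;F,F)$ clause (via Proposition~\ref{prop:normalization} and the hyperbolic model) is a point the paper leaves implicit, and it is handled correctly --- indeed no stabilization is even needed, since that proposition gives an honest isomorphism $(M,\psi)\cong H_\varepsilon(F)$.
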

\begin{proof}
    Note that $$(M, \psi; F, G)=(M, \psi; F, L)+(M, \psi; L, G),$$ and the result follows from the previous proposition. 
\end{proof}
\begin{defn}
    We call a formation $(M, \psi; F, G)$ \emph{trivial} if $F$ and $G$ are complementary. We call it \emph{elementary} or \emph{boundary} if there exists a Lagrangian $L$ complementary to both $F$ and $G$.
\end{defn}
We state without proof a theorem by Ranicki regarding the zero class in $M_\varepsilon(\CA)$.
\begin{thm}[Theorem 2.3 in~\cite{ranicki1973algebraic}]
    If a formation belongs to the zero class, it is stably isomorphic to the direct sum of a trivial formation and an elementary formation. A formation \( (M, \psi; F, G) \) is isomorphic to the sum of a trivial formation and an elementary formation if and only if there is a lagrangian complement \( \widehat{F} \) for \( F \) with the property that the projection
\[
\pi \colon M = F \oplus \widehat{F} \xrightarrow{\begin{pmatrix} 1 & 0 \end{pmatrix}} F,
\]
satisfies the following:

\begin{itemize}
    \item \( \pi(G) \) is finitely generated and free,
    \item there exists a finitely generated, free submodule \( L \subseteq F \) such that \( \pi(G) \oplus L = F \).
\end{itemize}

\noindent Moreover, the roles of \( F \) and \( G \) can be interchanged.
\end{thm}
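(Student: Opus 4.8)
The plan is to pass to the hyperbolic model of $(M,\psi)$ and then extract the decomposition from the projection $\pi$. Since $F$ is a Lagrangian, the proof of Proposition~\ref{prop:normalization} gives an isomorphism of forms $(M,\psi)\xrightarrow{\sim}H_\varepsilon(F)$ sending $F$ to the standard summand; and because a Lagrangian complement $\widehat F$ of $F$ meets $F$ in $0$ and carries the trivial form, the composite $\widehat F\hookrightarrow M\twoheadrightarrow M/F=F^*$ is a splitting of the kind used in that proof, so the isomorphism may be arranged to send $\widehat F$ to $F^*$. Thus I may assume $(M,\psi;F,G)=(H_\varepsilon(F);F,G)$ with $\widehat F=F^*$, $\pi\colon F\oplus F^*\to F$ the projection, and $G$ an arbitrary Lagrangian of $H_\varepsilon(F)$. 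The easy implication of the biconditional is then immediate: if $(M,\psi;F,G)\cong T\oplus E$ with $T=(M_1;F_1,G_1)$ trivial and $E=(M_2;F_2,G_2)$ elementary with $L_2$ complementary to both $F_2$ and $G_2$, then $\widehat F_1:=G_1$ gives $\pi_1(G_1)=0$, $\widehat F_2:=L_2$ makes $\pi_2|_{G_2}\colon G_2\xrightarrow{\sim}F_2$, and the orthogonal sum $\widehat F_1\oplus\widehat F_2$ is a Lagrangian complement of $F$ with $\pi(G)=0\oplus F_2$ f.g.\ free and free complement $L=F_1\oplus 0$; one then transports this datum along the given isomorphism.

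For the converse I would use the hypothesis directly. Write $F=\pi(G)\oplus L$ with both summands objects of $\mathcal A$, so $F^*=\pi(G)^*\oplus L^*$. As $\pi(G)$ is a projective object, the surjection $\pi|_G\colon G\to\pi(G)$ splits, giving a decomposition $G\cong\pi(G)'\oplus K$ inside $\mathcal A$ in which $\pi|_G$ restricts to an isomorphism $\pi(G)'\xrightarrow{\sim}\pi(G)$ and $K=G\cap F^*=\ker(\pi|_G)$. Using that both $G$ and $F^*$ are Lagrangians of the hyperbolic form and that $\pi(G)\oplus L=F$, one checks that the composite $K\hookrightarrow F^*\twoheadrightarrow L^*$ is an isomorphism and that $\pi(G)'$ is the graph of a morphism $\rho\colon\pi(G)\to\pi(G)^*$. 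Regrouping the ambient form as $H_\varepsilon(L)\perp H_\varepsilon(\pi(G))$, the formation then splits as
\[
(H_\varepsilon(L);\,L,\,L^*)\ \perp\ \bigl(H_\varepsilon(\pi(G));\,\pi(G),\,\operatorname{graph}(\rho)\bigr),
\]
the first summand trivial and the second elementary: a Lagrangian complementary to both $\pi(G)$ and $\operatorname{graph}(\rho)$ is produced just as the splitting $j'$ is produced in the proof of Proposition~\ref{prop:normalization}, its complementarity with the two given Lagrangians following from nonsingularity of $(1+T_\varepsilon)\psi$. This proves the biconditional. For the interchange of $F$ and $G$: applying the Lagrangian-swap to a trivial (resp.\ elementary) formation yields a trivial (resp.\ elementary) one, so $(M,\psi;F,G)\cong T\oplus E$ forces $(M,\psi;G,F)\cong T'\oplus E'$ with $T'$ trivial and $E'$ elementary, and the biconditional applied with $G$ in place of $F$ gives the symmetric statement.

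Part~1 asserts that the zero class of $M_\varepsilon(\mathcal A)$ is exactly the image of the submonoid $\mathcal N\subset S$ of stable isomorphism classes of formations of the shape (trivial)$\,\oplus\,$(elementary), where $S$ is the monoid of all stable isomorphism classes under $\oplus$. One inclusion is the two propositions preceding the theorem. For the other I would invoke Ranicki's composition formula for formations, which expresses the relator $(M;F,G)\oplus(M;G,H)$ of relation~(ii) as $(M;F,H)$ together with an explicit (trivial)$\,\oplus\,$(elementary) summand; this gives $\bar R_2\subseteq\langle\mathcal N\rangle$ inside $\mathcal G(S)$, hence $M_\varepsilon(\mathcal A)=\mathcal G(S)/\langle\mathcal N\rangle$. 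Combining with the facts that trivial formations are stably trivial, that for each elementary $E$ there is an elementary $E'$ with $E\oplus E'$ stably trivial (the boundary identity $\partial\delta\oplus\partial(-\delta)=\partial(\delta\oplus(-\delta))$, with $\delta\oplus(-\delta)$ carrying the diagonal Lagrangian), and that $s\oplus\mathrm{swap}(s)\in\mathcal N$ for any formation $s$ (again by the composition formula), one unwinds $[x]=0$ to $x$ being stably isomorphic to an element of $\mathcal N$.

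I expect the crux to be the compatibility of the two splittings in the converse direction: producing $G\cong\pi(G)'\oplus K$ with $K\xrightarrow{\sim}L^*$ and $\pi(G)'$ a graph, all \emph{inside} $\mathcal A$ rather than in $\operatorname{Kar}(\mathcal A)$. This is precisely what the finiteness and freeness of $\pi(G)$ and $L$ are for: they force the short exact sequences coming from $\pi|_G$ and from the Lagrangian sequence $0\to G\to M\to G^*\to 0$ to split within $\mathcal A$, keeping the relevant images, kernels, and complements in the category; without these hypotheses the splitting exists only after idempotent completion, and then $(M,\psi;F,G)$ need not decompose on the nose.
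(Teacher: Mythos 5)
The paper itself states this theorem \emph{without proof} (it explicitly defers to Ranicki), so there is no in-paper argument to compare against; I am assessing your proposal on its own merits. Your treatment of the biconditional is essentially sound. The reduction to the hyperbolic model is legitimate: a Lagrangian complement $\widehat F$ maps isomorphically onto $F^*$ under $i^*(\psi+\varepsilon\psi^*)$, and since the form vanishes on $\widehat F$ the splitting $j$ in Proposition~\ref{prop:normalization} needs no correction term, so the isomorphism $(M,\psi)\cong H_\varepsilon(F)$ can indeed be chosen to carry $\widehat F$ to $F^*$. In the converse direction the identification $K=G\cap F^*=L^*$ does hold (one inclusion because $K$ annihilates $\pi(G)$, the other because $L^*=\ann(\pi(G))\subseteq G^\perp=G$), but your phrase ``one checks that $\pi(G)'$ is the graph of a morphism'' hides a necessary adjustment: a complement of $K$ in $G$ splitting $\pi|_G$ a priori has the shape $\{(x,0,b(x),c(x))\}$ with a nonzero $L^*$-component $c$, and one must replace it by $\{g-c(\pi(g))\}$ (allowed since $L^*\subseteq G$) before it becomes $\operatorname{graph}(b)\subseteq H_\varepsilon(\pi(G))$. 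After that correction the splitting into $(H_\varepsilon(L);L,L^*)\perp(H_\varepsilon(\pi(G));\pi(G),\operatorname{graph}(b))$ is right, and in fact $\pi(G)^*$ itself is a Lagrangian complementary to both $\pi(G)$ and $\operatorname{graph}(b)$, so no further appeal to the $j'$-construction is needed for elementarity.

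The genuine gap is in the first assertion. Your entire derivation of ``zero class $\Rightarrow$ stably trivial $\oplus$ elementary'' rests on a ``composition formula for formations'' asserting that the relator $(M,\psi;F,G)\oplus(M,\psi;G,H)$ differs from $(M,\psi;F,H)$ by a stably trivial-plus-elementary summand. This formula is never stated precisely, let alone proved, yet it is the technical heart of Ranicki's theorem: establishing it requires an explicit stable isomorphism (built from graph Lagrangians and a hyperbolic stabilization), and it is exactly what makes the relation (ii) compatible with the submonoid $\mathcal N$. The subsidiary claims you lean on --- that $s\oplus\mathrm{swap}(s)\in\mathcal N$ for every formation $s$, and that every relator lies in $\langle\mathcal N\rangle$ --- are consequences of that formula and have no independent justification in your write-up; absent it, the only available argument that $s\oplus\mathrm{swap}(s)$ is in $\mathcal N$ is that it represents $0$ in $M_\varepsilon(\mathcal A)$, which is circular. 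To close the gap you must either state and prove the composition formula or substitute Ranicki's original argument for the identification of the zero class.
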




To define quadratic \( L \)-theory, we first rename these groups as follows:
\begin{align*}
    W_{\varepsilon}(\mathcal{A}) := L_0(\mathcal{A},\varepsilon), \\
M_{\varepsilon}(\mathcal{A}) := L_1(\mathcal{A}, \varepsilon).
\end{align*}

It turns out the groups $L_n(\CA, \varepsilon)$ are periodic with period four. Moreover, there exist isomorphisms $$L_n(\CA, \varepsilon)\cong L_{n+2}(\CA, -\varepsilon).$$ Write $$L_n(\CA):=L_n(\CA, +),$$ we now have 
\begin{align*}
    &L_0(\CA)= W_{+}(\CA),\\
    &L_1(\CA)= M_{+}(\CA),\\
    &L_2(\CA)= W_{-}(\CA),\\
    &L_3(\CA)= M_{-}(\CA),\\
    &L_{i+4}(\mathcal{A}) = L_i(\mathcal{A}).
\end{align*}
\begin{rmk}
It is worth pointing out that there exists other definitions of $L$-groups which makes the four-fold periodicity appear less artificial.
\end{rmk}

\subsection{Symmetric L-theory}
\begin{defn}
    
Let \( \mathcal{A} \) be an additive category with involution. An \( \varepsilon \)-symmetric form in \( \mathcal{A} \) is a pair \( (M, \psi) \), where \( M \in \mathcal{A} \) is an object together with an element
\[
\psi \in Q^\varepsilon(M) := \operatorname{ker}(1 - T_\varepsilon : \operatorname{Hom}_{\mathcal{A}}(M, M^*) \to \operatorname{Hom}_{\mathcal{A}}(M, M^*)).
\]

The form \( (M, \psi) \) is said to be \emph{non-singular} if the morphism
\[
 \psi : M \to M^*
\]
is an isomorphism in \( \mathcal{A} \).
\end{defn} 
\begin{rmk}
    If $\CA$ is a category of modules over a ring where $2$ is invertible, symmetric and quadratic forms coincide. 
\end{rmk}
Repeating the previous discussion replacing quadratic forms with symmetric forms. We arrive at the Witt groups of $\varepsilon$-symmetric forms $W^\varepsilon$ and $\varepsilon$-symmetric formations $M^\varepsilon.$ 

To define symmetric \( L \)-theory, we rename these groups as follows:
\begin{align*}
    W^{\varepsilon}(\mathcal{A}) := L^0(\mathcal{A},\varepsilon), \\
M^{\varepsilon}(\mathcal{A}) := L^1(\mathcal{A}, \varepsilon).
\end{align*}

It is possible to extend them into $L^n(\CA, \varepsilon)$ with $n\in \ZZ$~\cite{ranicki1973algebraic,ranicki1989additive}. Contrary to the case in quadratic $L$-theory, the symmetric $L$-groups, denoted by $L^n(\CA, \varepsilon)$, are not always four-periodic. Also, in general, $$L^n(\CA, \varepsilon)\ncong L^{n+2}(\CA, -\varepsilon).$$ Therefore, Witt groups are not sufficient in defining all symmetric $L$-groups.

Symmetric \( L \)-theory plays an essential, though temporary, role in the classification of Clifford QCAs. It is essential because the group \( K(\Lambda, \mathbb{Z}_d) \) is directly related to symmetric \( L \)-theory (see Theorem~\ref{thm: injectivity}). It is temporary because, in most cases relevant to our classification, they coincide with quadratic \( L \)-groups (see Proposition~\ref{prop:quad2sym}).

\begin{rmk}
We follow the conventions of~\cite{ranicki1992lower} in our use of the notation \( L^n \). In particular, the superscript \( n \) does not indicate cohomological indexing.
\end{rmk}

\subsection{Instantiation of the Proto-Theorem}

We also require the so-called lower quadratic and symmetric $L$-groups defined by Ranicki in Sections 17 and 19 in~\cite{ranicki1992lower}. 

Recall the Pedersen--Weibel construction which produces a filtered additive category $\mathcal C_X(\CA)$ from a metric space $X$ and a filtered additive category $\CA$. We denote $\mathcal C_{\ZZ^i}(\CA)$ by $\mathcal C_i(\CA)$. If $\CA$ comes with an involution, $\mathcal C_X(\CA)$ is a filtered additive category with the involution defined point-wise. 
\begin{defn}
    An object $(A(k))_{k\in \ZZ}\in \mathcal C_1(\CA)$ consists of $A(k)\in \CA.$ Define the Laurent extension category, usual denoted $\CA[\ZZ]$ or $\CA[z, z^{-1}]$, to be the subcategory with uniform objects, i.e., $A(k)=A\in \CA$ for all $k\in \ZZ$, and $\ZZ$-equivariant morphisms.  An object in $\CA[z, z^{-1}]$ is denoted by 
    $$\sum_{k\in \ZZ}Az^k.$$ For $A\in \CA$, $i: A \mapsto \sum_{k\in \ZZ}Az^k$ is an inclusion functor $\CA\rightarrow \CA[z, z^{-1}].$
\end{defn}
We define the multivariable Laurent extension $\mathcal{A}[\mathbb{Z}^m]$ by a straightforward generalization.
\begin{rmk}
   If \( \mathcal{A} \) is the category of free modules, then the Laurent extension \( \mathcal{A}[\mathbb{Z}^m] \) is equivalent to the category of free \( R[\mathbb{Z}^m] \)-modules. Indeed, a free $R[\mathbb{Z}^m]$-module $M$ is isomorphic (as $R$-modules) to 
   $\bigoplus_{\mathbb{Z}^m}M_0$ for some free $R$-module $M_0$ equipped with a $\mathbb{Z}^m$-action. 

\end{rmk}

\begin{defn}
    Let \( \mathcal{A} \) be an additive category with involution. For each \( m \geq 0 \) and \( n \in \mathbb{Z} \), the \emph{lower quadratic} and \emph{lower symmetric} \( L \)-groups of \( \mathcal{A} \) are defined by
    \[
    L_n^{\langle -m \rangle}(\mathcal{A}) := \operatorname{coker}\big(i_! : L_{n+1}^{\langle -m+1 \rangle}(\mathcal{A}) \longrightarrow L_{n+1}^{\langle -m+1 \rangle}(\mathcal{A}[z, z^{-1}])\big),
    \]
    and
    \[
    L^n_{\langle -m \rangle}(\mathcal{A}, \varepsilon) := \operatorname{coker}\big(i_! : L^{n+1}_{\langle -m+1 \rangle}(\mathcal{A}, \varepsilon) \longrightarrow L^{n+1}_{\langle -m+1 \rangle}(\mathcal{A}[z, z^{-1}], \varepsilon)\big),
    \]
    where \( i_! \) is the functor induced by the inclusion \( \mathcal{A} \hookrightarrow \mathcal{A}[z, z^{-1}] \). For an $R$-module category $\A$, $i_!$ is the familiar functor $R[z, z^{-1}]\otimes_R(-)$.

    By convention, we set
    \[
    L_n^{\langle 1 \rangle}(\mathcal{A}) = L_n(\mathcal{A}) \quad \text{and} \quad L^n_{\langle 1 \rangle}(\mathcal{A}, \varepsilon) = L^n(\mathcal{A}, \varepsilon) \qquad (n \in \mathbb{Z}).
    \]
\end{defn}

As \( L_n \) is four-periodic, the groups \( L_n^{\langle -m \rangle} \) are four-periodic for any fixed \( m \). The groups \( L^n_{\langle -m \rangle} \) are not generally periodic.  The following theorem is an instantiation of the proto-theorem.

\begin{thm}\label{thm:Lmain}
The lower quadratic and symmetric \( L \)-groups satisfy the following identities for all \( m \geq -1 \) and \( n \in \mathbb{Z} \):
\begin{align}
&L_n^{\langle -m \rangle}(\mathcal{A}[z, z^{-1}]) \cong L_n^{\langle -m \rangle}(\mathcal{A}) \oplus L_{n-1}^{\langle -m-1 \rangle}(\mathcal{A}), \label{1} \\
&L^n_{\langle -m \rangle}(\mathcal{A}[z, z^{-1}], \varepsilon) \cong L^n_{\langle -m \rangle}(\mathcal{A}, \varepsilon) \oplus L^{n-1}_{\langle -m-1 \rangle}(\mathcal{A}, \varepsilon), \label{2}\\
&L_{m+n+1}^{\langle 1 \rangle}(\mathcal{C}_{m+1}(\mathcal{A}))  \cong L_{m+n}^{\langle 1 \rangle}(\operatorname{Kar}(\mathcal{C}_m(\mathcal{A}))) \cong L_n^{\langle -m \rangle}(\mathcal{A}), \label{3}\\
&L^{m+n+1}_{\langle 1 \rangle}(\mathcal{C}_{m+1}(\mathcal{A}), \varepsilon) \cong L^{m+n}_{\langle 1 \rangle}(\operatorname{Kar}(\mathcal{C}_m(\mathcal{A})), \varepsilon)   \cong L^n_{\langle -m \rangle}(\mathcal{A}, \varepsilon). \label{4}
\end{align}
\end{thm}
\begin{proof}
    Proofs can be found in~\cite{ranicki1992lower}. See Theorem~17.2 for \eqref{1} and \eqref{3}, and Proposition~19.4 for \eqref{2} and \eqref{4}.
\end{proof}
Unlike Theorem~\ref{thm:Main}, the decorations shift along with the degrees. The theorem enables us to make the following definition. 

\begin{defn}
The \emph{ultimate lower quadratic} and \emph{ultimate lower symmetric \( L \)-groups} of an additive category with involution \( \mathcal{A} \) are defined by the direct limits
\[
L_n^{\langle -\infty \rangle}(\mathcal{A}) := \varinjlim_{m \to \infty} L_n^{\langle -m \rangle}(\mathcal{A}), \qquad 
L^n_{\langle -\infty \rangle}(\mathcal{A}, \varepsilon) := \varinjlim_{m \to \infty} L^n_{\langle -m \rangle}(\mathcal{A}, \varepsilon) \qquad (n \in \mathbb{Z}).
\]
The transition maps
\[
L_n^{\langle -m \rangle}(\mathcal{A}) \longrightarrow L_n^{\langle -m-1 \rangle}(\mathcal{A}), \qquad
L^n_{\langle -m \rangle}(\mathcal{A}, \varepsilon) \longrightarrow L^n_{\langle -m-1 \rangle}(\mathcal{A}, \varepsilon)
\]
are given by
$$L_n^{\langle -m \rangle}(\mathcal{A})\cong L_{m+n+1}^{\langle 1 \rangle}(\mathcal{C}_{m+1}(\mathcal{A}))  \rightarrow L_{m+n+1}^{\langle 1 \rangle}(\operatorname{Kar}(\mathcal{C}_{m+1}(\mathcal{A}))) \cong L_n^{\langle -m-1 \rangle}(\mathcal{A}),$$
$$L^n_{\langle -m \rangle}(\mathcal{A}, \varepsilon) \cong L^{m+n+1}_{\langle 1 \rangle}(\mathcal{C}_{m+1}(\mathcal{A}), \varepsilon) \rightarrow L^{m+n+1}_{\langle 1 \rangle}(\operatorname{Kar}(\mathcal{C}_{m+1}(\mathcal{A})), \varepsilon)   \cong L^n_{\langle -m-1 \rangle}(\mathcal{A}, \varepsilon),$$
where the middle arrows are induced by the forgetful map 
$$\mathcal{C}_{m+1}(\mathcal{A})\rightarrow \operatorname{Kar}(\mathcal{C}_{m+1}(\mathcal{A})); A\mapsto (A, \id_A).$$
\end{defn}

Using the ultimate lower \( L \)-groups, we give an instantiation of Theorem~\ref{thm:main2}, which we include for completeness.

Let \( X \) be a pointed subcomplex of the \( n \)-sphere \( S^n \) for some \( n \), and form the open cone \( O(X) \subset \mathbb{R}^{n+1} \). Let \( \mathbb{L}^{\langle -\infty \rangle}(\mathcal{A}) \) and \( \mathbb{L}_{\langle -\infty \rangle}(\mathcal{A}, \varepsilon) \) be the ultimate quadratic and symmetric \( L \)-theory spectra, respectively (see Section 13 of \cite{ranicki1992algebraic} algebraic $L$-spectra). Their homology theories are defined by
\begin{align}
    \mathbb{L}^{\langle -\infty \rangle}(\mathcal{A})_m(X) &:= \lim_{k \to \infty} \pi_{m+k}(\mathbb{L}_k^{\langle -\infty \rangle}(\mathcal{A}) \wedge X),\\
 \mathbb{L}_{\langle -\infty \rangle}(\mathcal{A}, \varepsilon)_m(X) &:= \lim_{k \to \infty} \pi_{m+k}(\mathbb{L}^{k}_{\langle -\infty \rangle}(\mathcal{A}, \varepsilon) \wedge X).
\end{align}

\begin{thm}[Propositions 17.8 and 19.6 in~\cite{ranicki1992lower}]\label{thm:Lmain2}
The \( \mathbb{L}^{\langle -\infty \rangle}(\mathcal{A}) \)-homologies and \( \mathbb{L}_{\langle -\infty \rangle}(\mathcal{A}, \varepsilon) \)-homologies are naturally isomorphic to the corresponding algebraic \( L \)-groups of \( \mathcal{C}_{O(X)}(\mathcal{A}) \), with degree shift:
\begin{align}
    \mathbb{L}^{\langle -\infty \rangle}(\mathcal{A})_{m-1}(X) &\cong L_m^{\langle -\infty \rangle}(\mathcal{C}_{O(X)}(\mathcal{A})),\\
 \mathbb{L}_{\langle -\infty \rangle}(\mathcal{A}, \varepsilon)_{m-1}(X) &\cong L^m_{\langle -\infty \rangle}(\mathcal{C}_{O(X)}(\mathcal{A}, \varepsilon)).
\end{align}

\end{thm}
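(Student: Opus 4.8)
The plan is to reinterpret both sides as reduced generalized homology theories on nonempty finite simplicial complexes and then invoke the uniqueness of homology theories. Set $h_n(X) := L^{\langle -\infty \rangle}_{n+1}(\mathcal{C}_{O(X)}(\mathcal{A}))$ and, in the symmetric case, $h^n(X) := L^{n+1}_{\langle -\infty \rangle}(\mathcal{C}_{O(X)}(\mathcal{A}),\varepsilon)$ (a covariant homology theory despite the upper index, exactly as in the definitions above); the claim is that $h_\bullet$ and $h^\bullet$ are reduced homology theories whose coefficient spectra are $\mathbb{L}^{\langle -\infty \rangle}(\mathcal{A})$ and $\mathbb{L}_{\langle -\infty \rangle}(\mathcal{A},\varepsilon)$, after which the theorem follows by the reindexing $n=m-1$. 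The opening moves are formal. A simplicial map $f\colon X\to Y$ extends radially to a proper, bornologous (coarse) map $O(f)\colon O(X)\to O(Y)$, hence induces an involution-preserving additive functor $\mathcal{C}_{O(X)}(\mathcal{A})\to\mathcal{C}_{O(Y)}(\mathcal{A})$ and thus maps of lower $L$-groups; since $\mathcal{C}_{O(-)}(\mathcal{A})$ sees only the coarse geometry of the cone (the precise version of Proposition~1.6 of~\cite{pedersen2006nonconnective} referenced earlier), a simplicial homotopy induces a coarse homotopy of open cones, so homotopic maps act equally on $L$-theory. Moreover $O(\mathrm{pt})=[0,\infty)$ and $\mathcal{C}_{[0,\infty)}(\mathcal{A})$ is flasque---translation toward infinity is an Eilenberg swindle---so all its lower $L$-groups vanish, $h_\bullet(\mathrm{pt})=0$, and the theory is automatically reduced; the wedge axiom then follows from $\mathcal{C}_{O(X\sqcup Y)}(\mathcal{A})=\mathcal{C}_{O(X)}(\mathcal{A})\cup_{\mathrm{pt}}\mathcal{C}_{O(Y)}(\mathcal{A})$ once Mayer--Vietoris is available.

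The crux is the exactness (Mayer--Vietoris) axiom. For a simplicial decomposition $X=X_1\cup X_2$ with intersection $X_0$ we have $O(X)=O(X_1)\cup O(X_2)$ and $O(X_1)\cap O(X_2)=O(X_0)$, and the task is a long exact sequence linking the lower $L$-groups of the four Pedersen--Weibel categories. This is where Ranicki's algebraic glueing/localization machinery enters: the subcategory of $\mathcal{C}_{O(X)}(\mathcal{A})$ of objects supported boundedly near $O(X_0)$, together with the ``germ at infinity away from $O(X_0)$'' quotient category, forms a short exact sequence of additive categories with involution, and on passing to the ultimate decoration $\langle -\infty \rangle$ the associated $L$-theory sequence becomes long exact. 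The decoration limit does real work here: it kills the $\widetilde{K}_0$- and $\operatorname{Nil}$-type obstructions that would otherwise spoil excision, and it is also why $L^{\langle -\infty \rangle}$ is insensitive to idempotent completion (so that, unlike Theorem~\ref{thm:main2}, no $\operatorname{Kar}$ is needed). The special case $\mathbb{R}=[0,\infty)\cup_{\{0\}}(-\infty,0]$ together with flasqueness recovers the delooping isomorphism $L^{\langle -\infty \rangle}_{n}(\mathcal{C}_1(\mathcal{A}))\cong L^{\langle -\infty \rangle}_{n-1}(\mathcal{A})$ of Theorem~\ref{thm:Lmain}, and more generally identifies the coefficients: $h_n(S^0)\cong L^{\langle -\infty \rangle}_{n}(\mathcal{A})=\pi_n(\mathbb{L}^{\langle -\infty \rangle}(\mathcal{A}))$ and, by iterated suspension, $h_n(S^k)\cong L^{\langle -\infty \rangle}_{n-k}(\mathcal{A})$, matching $\widetilde{H}_n(S^k,\mathbb{L}^{\langle -\infty \rangle}(\mathcal{A}))$.

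With homotopy invariance, Mayer--Vietoris, the wedge axiom and the coefficients in hand, one builds the assembly map $X_+\wedge\mathbb{L}^{\langle -\infty \rangle}(\mathcal{A})\to\mathbb{L}^{\langle -\infty \rangle}(\mathcal{C}_{O(X)}(\mathcal{A}))$ in the usual cell-by-cell fashion, giving a natural transformation of homology theories $H_\bullet(X,\mathbb{L}^{\langle -\infty \rangle}(\mathcal{A}))\to h_\bullet(X)$ that is an isomorphism on $S^0$, hence on all spheres by suspension, hence on all nonempty finite complexes by skeletal induction and the five lemma; naturality in $X$ is built in. The symmetric case runs verbatim---Ranicki's glueing of quadratic and symmetric Poincar\'e pairs is uniform, the only difference being that $\mathbb{L}_{\langle -\infty \rangle}(\mathcal{A},\varepsilon)$ need not be four-periodic, which changes the coefficients but not the shape of the proof. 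The one step demanding genuine work is the Mayer--Vietoris property itself: proving it rigorously means developing (or citing in detail) Ranicki's algebraic theory of surgery in this bounded/controlled setting---gluing quadratic and symmetric Poincar\'e pairs along the cone of $O(X_0)$---and verifying that the $\langle -\infty \rangle$ limit genuinely annihilates the lower-$K$-theoretic correction terms. Everything else reduces to coarse geometry, the flasqueness of half-space cones, and Eilenberg--Steenrod uniqueness.
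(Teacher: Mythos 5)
The paper does not prove Theorem~\ref{thm:Lmain2}; it is imported wholesale from Ranicki's \emph{Lower $K$- and $L$-theory}, so there is no in-paper argument to compare against. Your outline is, however, a faithful reconstruction of the strategy actually used in the cited sources (Pedersen--Weibel for $K$-theory, Ranicki for $L$-theory): show that $X\mapsto L^{\langle-\infty\rangle}_{*+1}(\mathcal{C}_{O(X)}(\mathcal{A}))$ is a reduced generalized homology theory via coarse invariance of $\mathcal{C}_{O(-)}$, flasqueness of $\mathcal{C}_{[0,\infty)}(\mathcal{A})$ by the Eilenberg swindle, and a Mayer--Vietoris sequence coming from germ/quotient categories, then identify coefficients on spheres and conclude by Eilenberg--Steenrod uniqueness; your degree bookkeeping ($n=m-1$, $h_n(S^k)\cong L^{\langle-\infty\rangle}_{n-k}(\mathcal{A})$) is consistent with the paper's Corollary following the theorem, and you correctly read the paper's $H_m(X,\mathbb{K}\mathcal{A})$ as reduced homology. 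You also correctly identify the one step that is not formal --- excision for the bounded categories --- and honestly defer it to Ranicki's algebraic transversality; as written, that step is cited rather than proved, so the proposal is a correct skeleton rather than a complete proof. One caveat: the claim that ``the symmetric case runs verbatim'' is too quick. Symmetric $L$-theory is not excisive with a fixed decoration --- this is precisely why the decorations shift in Theorem~\ref{thm:Lmain}(4) and why $L^n_{\langle-m\rangle}$ fails to be four-periodic --- so the Mayer--Vietoris and delooping steps in the symmetric case require the passage to $\langle-\infty\rangle$ (and Proposition~\ref{prop:quad2sym} in low degrees) in an essential way, not merely as a convenience for avoiding $\operatorname{Kar}$.
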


\begin{cor}
For all \( n \geq 0 \) and \( m \in \mathbb{Z} \),
\[
L_m^{\langle -\infty \rangle}(\mathcal{C}_{n+1}(\mathcal{A})) \cong L_{m-n-1}^{\langle -\infty \rangle}(\mathcal{A}), \quad
L^m_{\langle -\infty \rangle}(\mathcal{C}_{n+1}(\mathcal{A}, \varepsilon)) \cong L^{m-n-1}_{\langle -\infty \rangle}(\mathcal{A}, \varepsilon).
\]
\end{cor}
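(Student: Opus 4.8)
The plan is to obtain the Corollary from Theorem~\ref{thm:Lmain2} by specializing the open cone to a sphere, exactly mirroring the way the second isomorphism of Theorem~\ref{thm:Main} was extracted from Theorem~\ref{thm:main2} earlier in this section. Concretely I would take $X=S^n$, viewed as a subcomplex of $S^n$, whose open cone is $O(S^n)=\mathbb{R}^{n+1}$.

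The first step is to record that $\mathcal{C}_{O(S^n)}(\mathcal{A})=\mathcal{C}_{\mathbb{R}^{n+1}}(\mathcal{A})\cong\mathcal{C}_{n+1}(\mathcal{A})$ as filtered additive categories, by the equivalence $\mathcal{C}_k(\mathcal{A})\cong\mathcal{C}_{\mathbb{R}^k}(\mathcal{A})$ established in Section~2; since the involution on a Pedersen--Weibel category is defined pointwise, this equivalence is compatible with involutions, so the ultimate lower $L$-groups agree on the two categories. The second step is to invoke Theorem~\ref{thm:Lmain2}, which then yields
\begin{align*}
L_m^{\langle -\infty\rangle}(\mathcal{C}_{n+1}(\mathcal{A})) &\cong H_{m-1}(S^n,\mathbb{L}^{\langle -\infty\rangle}(\mathcal{A})),\\
L^m_{\langle -\infty\rangle}(\mathcal{C}_{n+1}(\mathcal{A}),\varepsilon) &\cong H^{m-1}(S^n,\mathbb{L}_{\langle -\infty\rangle}(\mathcal{A},\varepsilon)).
\end{align*}
The third step is to compute these spectrum-valued homology groups of the sphere. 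Writing $S^n\simeq\Sigma^n S^0$ and using that $S^0$ is the unit for the smash product, $\mathbb{E}\wedge S^n\simeq\Sigma^n\mathbb{E}$ for any spectrum $\mathbb{E}$, so the (reduced) homology theories of Section~4, defined by $\lim_{k\to\infty}\pi_{j+k}(\mathbb{E}_k\wedge X)$, evaluate on $X=S^n$ to $\pi_{j-n}(\mathbb{E})$. Taking $\mathbb{E}=\mathbb{L}^{\langle -\infty\rangle}(\mathcal{A})$ (resp. $\mathbb{L}_{\langle -\infty\rangle}(\mathcal{A},\varepsilon)$), $j=m-1$, and recalling that by construction $\pi_j(\mathbb{L}^{\langle -\infty\rangle}(\mathcal{A}))=L_j^{\langle -\infty\rangle}(\mathcal{A})$ (resp. $\pi_j(\mathbb{L}_{\langle -\infty\rangle}(\mathcal{A},\varepsilon))=L^j_{\langle -\infty\rangle}(\mathcal{A},\varepsilon)$), the right-hand sides become $L_{m-n-1}^{\langle -\infty\rangle}(\mathcal{A})$ and $L^{m-n-1}_{\langle -\infty\rangle}(\mathcal{A},\varepsilon)$, which is the claim.

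I expect the only delicate point to be the bookkeeping of degree shifts together with the reduced-versus-unreduced convention: one must use that the homology appearing in Theorem~\ref{thm:Lmain2} is the reduced theory (the smash product with the pointed space $X$), so that $S^n$ contributes a single suspension shift and not an extra $\pi_j(\mathbb{E})$ summand; this is the same normalization already in force for $\mathbb{K}\mathcal{A}$ in the paragraph following Theorem~\ref{thm:main2}, and with it the single-term answer in the Corollary is exactly what falls out. As a spectrum-free alternative one can instead induct on $n$ using $\mathcal{C}_{n+1}(\mathcal{A})=\mathcal{C}_1(\mathcal{C}_n(\mathcal{A}))$: the inductive step reduces to the one-variable case $L_m^{\langle -p\rangle}(\mathcal{C}_1(\mathcal{B}))\cong L_{m-1}^{\langle -p-1\rangle}(\mathcal{B})$ (the lower-decoration refinement of the identities in Theorem~\ref{thm:Lmain}, obtained from those identities together with the definition of the lower $L$-groups as cokernels of $i_!$), after which taking the colimit $p\to\infty$ collapses the decoration shift and iterating $n+1$ times accumulates the total degree shift; I would present the open-cone argument as the actual proof and leave the inductive route as a remark.
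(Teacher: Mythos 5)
Your proposal is correct and takes exactly the same route as the paper, whose entire proof is ``Take $X=S^n$'' in Theorem~\ref{thm:Lmain2}; your elaboration of the identification $O(S^n)=\mathbb{R}^{n+1}\leadsto\mathcal{C}_{n+1}(\mathcal{A})$ and the sphere's spectrum homology is precisely the bookkeeping the paper leaves implicit (and spells out earlier for the $K$-theoretic analogue after Theorem~\ref{thm:main2}).
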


\begin{proof}
Take \( X = S^n \).
\end{proof}

The following proposition is useful for the classification of Clifford QCAs on \( \Lambda = \mathbb{Z}^m \) with translation symmetry, which are considered in \cite{haah2025topological}.

\begin{prop}[Proposition 17.3 and 19.4  in~\cite{ranicki1992lower}]\label{prop:binomial}
For any \( m \geq 1 \), the quadratic and symmetric \( L \)-groups \( L_* = L_*^{\langle 1 \rangle} \) and \( L^* = L^*_{\langle 1 \rangle} \) of the \( m \)-fold Laurent polynomial extension \( \mathcal{A}[\mathbb{Z}^m] \) of \( \mathcal{A} \) satisfy the binomial formula:
\[
L_n(\mathcal{A}[\mathbb{Z}^m]) = \sum_{i=0}^m \binom{m}{i} L_{n-i}^{\langle 1-i \rangle}(\mathcal{A}), \quad
L^n(\mathcal{A}[\mathbb{Z}^m], \varepsilon) = \sum_{i=0}^m \binom{m}{i} L^{n-i}_{\langle 1-i \rangle}(\mathcal{A}, \varepsilon) \qquad (n \in \mathbb{Z}).
\]
\end{prop}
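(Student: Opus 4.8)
\noindent\emph{Proof proposal.}
The plan is to prove both formulas at once by induction on $m$, using the Laurent splitting of Theorem~\ref{thm:Lmain} as the engine and Pascal's rule to collect the resulting summands. The crucial preliminary is to generalize the statement to arbitrary lower decorations, since each application of the splitting formula lowers a decoration; so I would prove that for all $m \geq 0$, all $j \geq -1$, and all $n \in \mathbb{Z}$,
\[
L_n^{\langle -j \rangle}(\mathcal{A}[\mathbb{Z}^m]) \cong \bigoplus_{i=0}^{m} \binom{m}{i}\, L_{n-i}^{\langle -j-i \rangle}(\mathcal{A}), \qquad
L^n_{\langle -j \rangle}(\mathcal{A}[\mathbb{Z}^m], \varepsilon) \cong \bigoplus_{i=0}^{m} \binom{m}{i}\, L^{n-i}_{\langle -j-i \rangle}(\mathcal{A}, \varepsilon),
\]
with the usual conventions $\binom{m-1}{-1} = \binom{m-1}{m} = 0$. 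Specializing to $j = -1$, so that $\langle -j \rangle = \langle 1 \rangle$ and $L_\bullet = L_\bullet^{\langle 1 \rangle}$, recovers the stated proposition.

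For the base case $m = 0$ there is nothing to prove, since $\mathcal{A}[\mathbb{Z}^0] = \mathcal{A}$ and $\binom{0}{0} = 1$. For the inductive step I would write $\mathcal{A}[\mathbb{Z}^m] = \mathcal{B}[z, z^{-1}]$ with $\mathcal{B} := \mathcal{A}[\mathbb{Z}^{m-1}]$, which is again an additive category with involution (the Laurent extension inheriting the involution from $\mathcal{A}$), and apply Theorem~\ref{thm:Lmain}(1) to $\mathcal{B}$:
\[
L_n^{\langle -j \rangle}(\mathcal{A}[\mathbb{Z}^m]) \cong L_n^{\langle -j \rangle}(\mathcal{B}) \oplus L_{n-1}^{\langle -j-1 \rangle}(\mathcal{B}).
\]
Because $j \geq -1$, both decorations $\langle -j \rangle$ and $\langle -(j+1) \rangle$ lie in the admissible range $\langle -m' \rangle$, $m' \geq -1$, where Theorem~\ref{thm:Lmain} is valid. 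I would then apply the induction hypothesis to each summand — to the first with decoration parameter $j$ over $m-1$ variables, to the second with decoration parameter $j+1$ over $m-1$ variables — reindex the sum coming from the second summand by $i \mapsto i+1$, and add. The coefficient of $L_{n-i}^{\langle -j-i \rangle}(\mathcal{A})$ is then $\binom{m-1}{i} + \binom{m-1}{i-1} = \binom{m}{i}$ by Pascal's rule, which is exactly the asserted identity for $m$. The symmetric case is identical word for word, using Theorem~\ref{thm:Lmain}(2); the sign $\varepsilon$ is carried along unchanged, as the Laurent extension is formed with trivial $\mathbb{Z}^m$-action on the coefficient structure.

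The only substantive input is Theorem~\ref{thm:Lmain}, which encodes the clean (Nil-free) $L$-theoretic Bass--Heller--Swan decomposition for $\mathcal{A}[z, z^{-1}]$ in the correct decorations; granted it, the proposition is purely formal. I expect the one place demanding care to be exactly the decoration bookkeeping: one must check that every decoration produced during the induction stays within $\langle -m' \rangle$, $m' \geq -1$, and that the convention $L^{\langle 1 \rangle} = L$ is correctly aligned at the endpoint $j = -1$ before specializing. Both are immediate once the generalized statement is set up. (One could alternatively obtain the same formula spectrum-theoretically: $\mathcal{A}[\mathbb{Z}^m]$-theory is the value on the torus $T^m$ of the relevant $L$-homology theory, and the cells of $T^m$, occurring in $\binom{m}{i}$ copies in dimension $i$, contribute the binomial coefficients together with the decoration shifts of Theorem~\ref{thm:Lmain2}; but the inductive argument above is the shortest route given what has been established here.)
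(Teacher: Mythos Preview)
The paper does not supply its own proof of this proposition; it is stated as a quotation from Ranicki~\cite{ranicki1992lower} and used as a black box. Your inductive derivation from the single-variable splitting of Theorem~\ref{thm:Lmain} combined with Pascal's rule is correct and is precisely the standard argument one finds in Ranicki's work, so there is nothing to compare against and nothing to correct.
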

We end the section with a proposition that offers some relief.
\begin{prop}[Remark below Definition 19.3 in~\cite{ranicki1992lower}]\label{prop:quad2sym}
Quadratic and symmetric $L$-theory are related in the negative degrees:
    $$L^n_{\langle -m\rangle}(\A, \varepsilon) = L_n^{\langle -m\rangle}(\A, \varepsilon) \quad \text{for } n \leq -3 \text{ and } m\geq -1.$$
\end{prop}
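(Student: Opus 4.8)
The plan is to realize the asserted equality as the \emph{symmetrization map} and to prove it is an isomorphism in the given range by showing that an obstruction group---a hyperquadratic $L$-group---vanishes there. First I would recall the natural transformation $1+T_\varepsilon$, sending a quadratic form $\psi$ to the symmetric form $\psi+\varepsilon\psi^{*}$ and, at the chain level, a quadratic structure to its symmetrization. It is natural in the additive category with involution, commutes with the forgetful maps between decorations, and---since the lower $L$-groups of the excerpt are defined as cokernels of the natural maps $i_{!}$ attached to $\mathcal{A}\hookrightarrow\mathcal{A}[z,z^{-1}]$---descends to a homomorphism
\[
1+T_\varepsilon\colon L_{n}^{\langle -m\rangle}(\mathcal{A},\varepsilon)\longrightarrow L^{n}_{\langle -m\rangle}(\mathcal{A},\varepsilon)\qquad(m\ge -1,\ n\in\mathbb{Z}).
\]

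Next I would invoke Ranicki's exact sequence relating the quadratic, symmetric, and hyperquadratic (normal) $L$-groups~\cite{ranicki1989additive,ranicki1973algebraicII,ranicki1992algebraic}: for each $m\ge -1$,
\[
\cdots\to\widehat{L}^{\,n+1}_{\langle -m\rangle}(\mathcal{A},\varepsilon)\to L_{n}^{\langle -m\rangle}(\mathcal{A},\varepsilon)\xrightarrow{\,1+T_\varepsilon\,}L^{n}_{\langle -m\rangle}(\mathcal{A},\varepsilon)\to\widehat{L}^{\,n}_{\langle -m\rangle}(\mathcal{A},\varepsilon)\to\cdots,
\]
so that $1+T_\varepsilon$ is an isomorphism in degree $n$ as soon as $\widehat{L}^{\,n}_{\langle -m\rangle}(\mathcal{A},\varepsilon)$ and $\widehat{L}^{\,n+1}_{\langle -m\rangle}(\mathcal{A},\varepsilon)$ both vanish. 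Two inputs then finish the argument. The first is \emph{decoration independence}: the passage from $\langle -m+1\rangle$ to $\langle -m\rangle$ is controlled, in both the quadratic and the symmetric theories, by Rothenberg-type exact sequences whose correction terms are Tate cohomology groups $\widehat{H}^{*}(\mathbb{Z}/2;\widetilde{K}_{*})$ of the \emph{same} $K$-group with the \emph{same} involution; comparing the two sequences via $1+T_\varepsilon$ shows these terms cancel in the cofibre, whence $\widehat{L}^{\,n}_{\langle -m\rangle}(\mathcal{A},\varepsilon)\cong\widehat{L}^{\,n}(\mathcal{A},\varepsilon)$ for every $m\ge -1$ by induction on $m$ (the case $m=-1$ being tautological). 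The second is \emph{connectivity}: $\widehat{L}^{\,n}(\mathcal{A},\varepsilon)=0$ for $n\le -2$. Granting both, whenever $n\le -3$ we have $n\le -2$ and $n+1\le -2$, so both neighbouring hyperquadratic groups vanish and $1+T_\varepsilon$ is an isomorphism $L_{n}^{\langle -m\rangle}(\mathcal{A},\varepsilon)\xrightarrow{\ \sim\ }L^{n}_{\langle -m\rangle}(\mathcal{A},\varepsilon)$, as claimed.

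I expect the connectivity statement $\widehat{L}^{\,n}(\mathcal{A},\varepsilon)=0$ for $n\le -2$ to be the genuine obstacle; everything else is formal. The route I would take is Ranicki's skew-suspension analysis: $\overline{S}\colon L^{n}(\mathcal{A},\varepsilon)\to L^{n+2}(\mathcal{A},-\varepsilon)$ is an isomorphism for sufficiently negative $n$, because a low-dimensional symmetric Poincar\'e complex has no middle dimension and can be desuspended, while skew-suspension is an isomorphism on quadratic $L$-theory in all degrees; feeding both into the exact sequence above forces $\overline{S}$ to be an isomorphism on $\widehat{L}^{\,n}$ there too, and a direct bookkeeping of normal structures on a negative-dimensional complex upgrades this periodicity to outright vanishing. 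One should also verify the naturality underlying decoration independence---that the identifications of Theorem~\ref{thm:Lmain}, or the defining cokernel maps, intertwine the symmetrization maps---but this is routine once the chain-level symmetrization is set up. (An alternative is to transport everything through the identifications $L^{m+n+1}(\mathcal{C}_{m+1}(\mathcal{A}),\varepsilon)\cong L^{n}_{\langle -m\rangle}(\mathcal{A},\varepsilon)$ of Theorem~\ref{thm:Lmain} and its quadratic analogue, reducing to a comparison of the symmetric and quadratic $L$-groups of $\operatorname{Kar}(\mathcal{C}_{m}(\mathcal{A}))$; but that comparison again rests on the negative-degree vanishing of the relevant hyperquadratic groups, so it relocates rather than removes the difficulty.) Accordingly I would cite~\cite{ranicki1992lower,ranicki1973algebraicII} for the vanishing and assemble the remaining steps as above.
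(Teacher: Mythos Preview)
The paper does not prove this proposition at all: it is stated with a bare citation to~\cite{ranicki1992lower} and used as a black box. Your sketch is therefore not in competition with anything in the paper, and in fact it faithfully outlines Ranicki's own argument in that reference---symmetrization map, quadratic/symmetric/hyperquadratic exact sequence, decoration independence of the hyperquadratic groups, and the vanishing $\widehat{L}^{\,n}(\mathcal{A},\varepsilon)=0$ for $n\le -2$---so the approach is exactly the intended one.

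Your self-assessment of where the real content lies is accurate: everything is formal except the negative-degree vanishing of $\widehat{L}$, which you correctly plan to cite rather than reprove. One small caution on the decoration-independence step: the Rothenberg-type sequences for symmetric $L$-theory are subtler than for quadratic $L$-theory (symmetric $L$ is not four-periodic in general), so ``same $K$-group with the same involution, hence the Tate terms cancel'' deserves a precise reference rather than the word ``routine''; but this is again exactly what~\cite{ranicki1992lower} supplies, so citing it for the whole package, as you propose, is the right move and matches what the paper does.
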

For the remainder of this article, we write $L^{*}_{\langle \bullet \rangle}(R,-1):=L^{*}_{\langle \bullet \rangle}(\mathcal A,-1)$, when $\mathcal A$ is the additive category of finitely generated, free $R$-modules. We use the same convention for the quadratic $L$-groups.
\section{Classification of Clifford QCAs}

The Pedersen--Weibel construction is applied to the classification of Clifford QCAs. Firstly, we make the connection from Definition/Observation~\ref{obs:keyobservation} to the \(L\)-groups through \emph{symmetric formations}. Using an $L$-theoretic version of the proto-theorem, we obtain a complete classification of Clifford QCAs in Euclidean spaces. More generally, by invoking an $L$-theoretic analogue of Theorem~\ref{thm:main2}, we classify QCAs defined on open cones over arbitrary subcomplexes of spheres, and relate this classification to the $L$-theory homology theory.  

\subsection{Clifford QCA as formation}
Let $\Lambda$ be a metric space.  Let $\CA$ be the additive category of finitely generated, free modules over the ring $\ZZ_d$ with trivial involution. Recall that the Pedersen--Weibel construction $\mathcal{C}_{\Lambda}(\CA)$ is a filtered additive category with involution. Results below connects Clifford QCA to certain $L$-group of $\mathcal{C}_{\Lambda}(\CA)$. See Section 3 of~\cite{haah2025topological} for a related description in terms of $\varepsilon$-unitary group.  

First, we introduce some notations. Let $(P, \al)$ be a Clifford QCA. For $i\in \Lambda$, $P:=(P(i))_{i\in \Lambda}=(\ZZ_d^{k_i}\oplus \ZZ_d^{k_i})_{i\in\Lambda}=(L(i))_{i\in\Lambda}\oplus (L^*(i))_{i\in\Lambda}=H_{-1}(L)$. One may write $\al=\begin{pmatrix}
        A & B\\
        C & D
    \end{pmatrix}$ with 
    \begin{itemize}
        \item $A\in \Hom_{\mathcal{C}_{\Lambda}(\CA)}(L, L),$
        \item $B\in \Hom_{\mathcal{C}_{\Lambda}(\CA)}(L^*, L),$
        \item $C\in \Hom_{\mathcal{C}_{\Lambda}(\CA)}(L, L^*),$
        \item $D\in \Hom_{\mathcal{C}_{\Lambda}(\CA)}(L^*, L^*).$
    \end{itemize} 
    Moreover, $A, B, C, D$ are subject to the symplectic conditions:
    
$A^{  T}C$ and $B^{  T}D$ are symmetric, and $A^{  T}D - C^{  T}B = I$.

$AB^{  T}$ and $CD^{  T}$ are symmetric, and $AD^{  T} - BC^{  T} = I$.

The usual arithmetic of matrices applies to these blocks as well.

\begin{lem}\label{lem: isom}
    Every Clifford QCA defined according to Definition/Observation~\ref{obs:keyobservation} gives a non-singular $(-1)$-symmetric formation in $\mathcal{C}_{\Lambda}(\CA)$. Two Clifford QCAs giving the same formations are equivalent in $K(\Lambda, \ZZ_d)$.

\end{lem}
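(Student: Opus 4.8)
The plan is to build the $(-1)$-symmetric formation directly from the QCA data $(P,\alpha)$. Given the local system $(P(i))_{i\in\Lambda}$ with each $P(i)=\ZZ_d^{k_i}\oplus\ZZ_d^{k_i}$, assemble the object $P\in\mathcal{C}_\Lambda(\CA)$ by setting $P(i)$ to be the fiber at $i$; local finiteness of the collection is exactly the support condition in the definition of $\mathcal{C}_\Lambda(\CA)$. The pointwise standard symplectic forms $\begin{bmatrix}0&1\\-1&0\end{bmatrix}$ assemble into a morphism $\lambda\colon P\to P^*$ (of filtration degree $0$, since it is pointwise and involves no translation) which is an isomorphism and satisfies $\lambda^*=-\lambda$ under the trivial involution, i.e. $\lambda$ is a nonsingular $(-1)$-symmetric form on $P$. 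So $(P,\lambda)$ will be the ambient form of the formation.

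**Producing the two Lagrangians.**
First I would exhibit the "diagonal" Lagrangian: inside $H_{-1}(P) = (P\oplus P^*, \text{hyperbolic})$ — or more precisely, the standard model $(P\oplus P, \lambda\oplus(-\lambda))$ — the graph of the identity gives the Lagrangian $F=\Delta_P$. Then, using that $\alpha=(\alpha^i_j)$ is a symplectic automorphism of $P$ of bounded filtration degree $r$ (so it really is a morphism in $\mathcal{C}_\Lambda(\CA)$), the graph $G=\Gamma_\alpha=\{(x,\alpha x)\}$ is a second Lagrangian: the computation that $\Gamma_\alpha$ is isotropic is precisely the statement that $\alpha$ preserves the symplectic form, i.e. $\alpha^*\lambda\alpha=\lambda$, and it is a direct summand because $(x,y)\mapsto(x,\alpha^{-1}y)$ splits it; here invertibility of $\alpha$ in $\mathcal{C}_\Lambda(\CA)$ (with bounded inverse, also part of the QCA data) is essential. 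The formation attached to $(P,\alpha)$ is then $(P\oplus P,\ \lambda\oplus(-\lambda);\ \Delta_P,\ \Gamma_\alpha)$, or equivalently $(H_{-1}(P); F, \Gamma_\alpha)$ after identifying the hyperbolic form with $(P\oplus P,\lambda\oplus(-\lambda))$ via $\lambda$.

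**Verifying the axioms and the main obstacle.**
The remaining checks are: (i) the ambient form is genuinely the hyperbolic/nonsingular $(-1)$-symmetric form — immediate from $\lambda$ being an isomorphism; (ii) $F$ and $\Gamma_\alpha$ are Lagrangians in the precise sense of the paper's definition, which requires producing split exact sequences $0\to L\to M\to L^*\to 0$ with the middle map induced by the form — for $F$ this is standard, and for $\Gamma_\alpha$ one transports the $F$-sequence across the isomorphism of forms given by $\begin{bmatrix}1&0\\0&\alpha\end{bmatrix}$ (or $\begin{bmatrix}1&\alpha\\ \end{bmatrix}$-type map); (iii) all morphisms in sight have bounded filtration degree, which follows from $r$ bounding $\alpha$ and $\alpha^{-1}$ and from the pointwise forms having degree $0$. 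I expect the main obstacle to be purely bookkeeping: making the identifications of $(P\oplus P,\lambda\oplus(-\lambda))$ with the hyperbolic form $H_{-1}(P)$ compatible with the two Lagrangians simultaneously, so that the final quadruple literally matches the paper's definition of a nonsingular $(-1)$-symmetric formation (in particular getting the signs $\varepsilon=-1$ to land correctly in $Q^{-1}$, i.e. modding out by $1-T_{-1}$, rather than in the quadratic group $Q_{-1}$). Care with the $(-1)$ versus $(+1)$ symmetric convention, and with the trivial involution on $\ZZ_d$, will be where errors are most likely to creep in; everything else is a direct unwinding of the definitions.
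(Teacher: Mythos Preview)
Your construction is correct, but it is not the one the paper uses. You double the module and take the formation $(P\oplus P,\ \lambda\oplus(-\lambda);\ \Delta_P,\ \Gamma_\alpha)$; the paper instead exploits the built-in splitting $P(i)=P^+(i)\oplus P^-(i)$ to recognize $P$ itself as the hyperbolic $(-1)$-symmetric form $H_{-1}(L)$ with $L=P^+$, and takes the formation $(L\oplus L^*;\ L,\ \alpha L)$ directly---no doubling. Your graph-of-$\alpha$ approach has the virtue of working for any symplectic automorphism of any nonsingular $(-1)$-symmetric form (you never need a preferred Lagrangian in $P$), and the Lagrangian verifications are clean. The paper's approach is more economical (the ambient form has half the rank) and is tailored to the sequel: for instance, the lemma that a separated QCA gives a trivial formation becomes the one-line observation $\alpha L=L$, and the surjectivity/injectivity arguments later are phrased entirely in the $(L\oplus L^*,\ L,\ \alpha L)$ model. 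If you carry your version forward you will need an extra identification (or a direct argument that $(P\oplus P;\ \Delta_P,\ \Gamma_\alpha)$ and $(L\oplus L^*;\ L,\ \alpha L)$ represent the same class) before those later lemmas go through as written.
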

\begin{proof}
    Let \(P=(P(i))_{i \in \Lambda}\) and \(\alpha=(\alpha_j^i)_{i,j \in \Lambda}\) be as in Definition/Observation~\ref{obs:keyobservation}. Firstly, $P$ together with the standard symplectic form  on each $P(i)=\ZZ_d^{k_i}\oplus \ZZ_d^{k_i}$ is equal to a hyperbolic $(-1)$-symmetric form  $$(P(i))_{i\in \Lambda}\cong\left(L(i)\oplus L^*(i), \begin{bmatrix}
        0 & 1\\
        -1 & 0
    \end{bmatrix}\right)_{i\in \Lambda}=H_{-1}(L)$$ in $\mathcal C_{\Lambda}(\CA)$ with $L(i)=P^+(i)=\ZZ_d^{k_i}.$
Then $\alpha$ is an automorphism on $$\left(L(i)\oplus L^*(i), \begin{bmatrix}
        0 & 1\\
        -1 & 0
    \end{bmatrix}\right)_{i\in \Lambda},$$ which sends $(L(i))_{i\in \Lambda}$ to some other Lagrangian $\left((\alpha L)(i)=\bigoplus_{j\in \Lambda}\alpha_{i}^j L(j)\right)_{i\in \Lambda}.$ Finally, 
    $$\left(L(i)\oplus L^*(i), \begin{bmatrix}
        0 & 1\\
        -1 & 0
    \end{bmatrix}, L(i), (\alpha L)(i)  \right)_{i\in \Lambda}$$ or simply, $(H_{-1}(L), L, \alpha L)$ is the $(-1)$-symmetric formation given by the Clifford QCA $(P, \alpha)$.

    If $\al, \beta$ give the same formation, $\beta^{-1}\al$ maps $L$ bijectively to itself. In its block form, 
    $\beta^{-1}\al=\begin{pmatrix}
        A & B\\
        0 & (A^T)^{-1}
    \end{pmatrix}$.  Let $\theta=\begin{pmatrix}
        A^{-1} & 0\\
        0 & A^T
    \end{pmatrix}$ which is a separated QCA. Then $ \beta^{-1}\al\theta=\begin{pmatrix}
        1 & BA^T\\
        0 & 1
    \end{pmatrix}$ is equivalent to $\beta^{-1}\al$. For two qudits with Pauli matrices $X_1, X_2, Z_1, Z_2$, the control-$Z$ gate is given by 
    $$X_1\mapsto X_1Z_2$$
    $$X_2 \mapsto Z_1X_2$$
    $$Z_i\mapsto Z_i.$$ For a single qudit with Pauli matrices $X, Z$, the phase gate is given by 
    $$X\mapsto XZ$$
    $$Z\mapsto Z,$$ up to a possible phase. 
    These gates, when applied to various qudits or pairs of qudits, generates any Clifford QCAs of the form 
    $ \begin{pmatrix}
        1 & 0\\
        C & 1
    \end{pmatrix}$ where $C$ is symmetric. Conjugate these using the circuit of Hadamard gates, it is shown that $ \begin{pmatrix}
        1 & BA^T\\
        0 & 1
    \end{pmatrix}$ is also a Clifford circuit. In conclusion, $\al$ and $\beta$ are equivalent. 
    
 \end{proof}

\begin{lem} \label{lem: septrivial}
    A formation corresponding to a separated QCA is trivial in $L^1(\mathcal{C}_{\Lambda}(\CA), -1)$.
\end{lem}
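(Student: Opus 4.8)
The plan is to show that the \((-1)\)-symmetric formation attached to a separated QCA is already the zero element of \(M^{-1}(\mathcal{C}_\Lambda(\CA)) = L^1(\mathcal{C}_\Lambda(\CA), -1)\) on the nose, by exhibiting a Lagrangian that is complementary to both of its Lagrangians and then invoking the (symmetric analogue of the) proposition above that such a formation vanishes. Recall from Lemma~\ref{lem: isom} that the formation attached to \((P,\alpha)\) is \((L\oplus L^*, \psi_0; L, \alpha L)\), where \(L = (P^+(i))_{i\in\Lambda}\), \(L^* = (P^-(i))_{i\in\Lambda}\), \(\psi_0 = \begin{bmatrix} 0 & 1 \\ -1 & 0\end{bmatrix}\) is the hyperbolic \((-1)\)-symmetric form, and \(\alpha L\) is the Lagrangian obtained by composing the standard inclusion \(i_L\colon L\to L\oplus L^*\) with \(\alpha\).

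The key step is to unwind the separatedness hypothesis. The condition \(\alpha^i_j(P^\pm(i))\subseteq P^\pm(j)\) says exactly that, viewed as an endomorphism of \(M = L\oplus L^*\) in \(\mathcal{C}_\Lambda(\CA)\), \(\alpha\) is block-diagonal, \(\alpha = \begin{bmatrix}\alpha^+ & 0\\ 0 & \alpha^-\end{bmatrix}\). A short computation with the identities \(\alpha^{-1}\alpha = \alpha\alpha^{-1} = 1\) then forces the off-diagonal blocks of \(\alpha^{-1}\) to vanish and identifies \(\alpha^+\colon L\to L\) as an automorphism, so that \(\alpha\circ i_L = i_L\circ\alpha^+\) and hence \(\alpha L = L\) as a Lagrangian of \((M,\psi_0)\). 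Thus the formation is isomorphic to \((M,\psi_0; L, L)\).

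It then remains only to observe that \(L^* = (P^-(i))_{i\in\Lambda}\) is a Lagrangian of \((M,\psi_0)\) with \(L\cap L^* = 0\) and \(L + L^* = M\), i.e.\ a common complement of \(L\) and \(\alpha L = L\); the cited proposition (the statement \((M,\psi;F,F)=0\)) then gives \((M,\psi_0;L,\alpha L) = 0\) in \(M^{-1}(\mathcal{C}_\Lambda(\CA)) = L^1(\mathcal{C}_\Lambda(\CA), -1)\), which is the assertion. I do not anticipate any real obstacle here; the only place that calls for care is the bookkeeping in the middle step — verifying that separatedness yields \(\alpha L = L\) \emph{as Lagrangians} (not merely the submodule inclusion \(\alpha L\subseteq L\)), and keeping the distinction between a Lagrangian as a submodule and as a monomorphism into \(M\) straight, so that the reduction to \((M,\psi_0;L,L)\) is legitimate. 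Everything else is a direct appeal to results already established.
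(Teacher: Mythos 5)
Your proposal is correct and follows essentially the same route as the paper: the paper's proof likewise observes that separatedness forces $\alpha L = L$, so the formation is $(M,\psi_0;L,L)$, which vanishes by the earlier proposition on formations with a common complementary Lagrangian. You merely spell out the block-diagonality bookkeeping (that $\alpha^+$ is an automorphism, hence $\alpha\circ i_L = i_L\circ\alpha^+$ identifies the two Lagrangians) that the paper leaves implicit.
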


\begin{proof}
        
    If $\alpha$ is separated, the corresponding formation is isomorphic to
     $$\left(H_{-1}(L), L, L  \right)_{i\in \Lambda},$$ which is trivial in $L^1(\mathcal{C}_{\Lambda}(\CA), -1)$.
\end{proof}

\begin{lem} \label{circuittrivial}
    A formation given by a Clifford circuit is trivial in $L^1(\mathcal{C}_{\Lambda}(\CA), -1)$.
\end{lem}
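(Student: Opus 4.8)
The plan is to reduce the statement to the case of a single-layer Clifford circuit, and then dispose of that case by a block decomposition together with the classical fact that symplectic groups over $\ZZ_d$ are generated by elementary transvections.

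First I would record that, for any object $M$ of $\mathcal{C}_\Lambda(\CA)$ carrying a nonsingular $(-1)$-symmetric form $\psi$ and a fixed Lagrangian $F_0$, the assignment $\gamma \mapsto (M, \psi; F_0, \gamma F_0)$ is a group homomorphism from the group of automorphisms of $(M,\psi)$ to $L^1(\mathcal{C}_\Lambda(\CA), -1) = M^{-1}(\mathcal{C}_\Lambda(\CA))$: by relation (ii) defining $M^{-1}$ one has $(M, \psi; F_0, \gamma_1\gamma_2 F_0) = (M, \psi; F_0, \gamma_1 F_0) + (M, \psi; \gamma_1 F_0, \gamma_1\gamma_2 F_0)$, and since $\gamma_1$ is an isomorphism of the form carrying the ordered pair of Lagrangians $(F_0, \gamma_2 F_0)$ to $(\gamma_1 F_0, \gamma_1\gamma_2 F_0)$, the second summand equals $(M, \psi; F_0, \gamma_2 F_0)$. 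A Clifford circuit is, by definition, a composition $\alpha = \gamma^{(r)} \circ \cdots \circ \gamma^{(1)}$ of single-layer circuits on the same $P$, and by Lemma~\ref{lem: isom} the formation attached to $(P,\alpha)$ is $(L \oplus L^*; L, \alpha L)$, where $L \oplus L^*$ carries the standard hyperbolic $(-1)$-symmetric form and $L = \bigoplus_i P^+(i)$. By the homomorphism property this class equals $\sum_{i=1}^r (L \oplus L^*; L, \gamma^{(i)} L)$ in $L^1(\mathcal{C}_\Lambda(\CA), -1)$, so it suffices to show that a single-layer circuit $\gamma$ yields a trivial class.

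Next, let $\gamma$ be single-layer with respect to a partition of $\Lambda$ into uniformly bounded regions $\{\Lambda_s\}$. Set $P_s := \bigoplus_{k \in \Lambda_s} P(k) = L_s \oplus L_s^*$, where $L_s := \bigoplus_{k \in \Lambda_s} P^+(k)$ is a finitely generated free $\ZZ_d$-module by local finiteness of $P$. By definition $\gamma$ preserves each $P_s$ and restricts there to a symplectic automorphism $\gamma_s$, and since $L$ and $\gamma L$ split along the partition, the formation of $\gamma$ is an orthogonal direct sum $\bigoplus_s (P_s; L_s, \gamma_s L_s)$, the $s$-th summand supported in $\Lambda_s$ and a genuine formation over $\CA$. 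Fix a block. Stabilizing by the trivial formation $(Q_s \oplus Q_s^*; Q_s, Q_s^*)$ does not change the class, so I may assume $\gamma_s \oplus \mathrm{id} \in \mathrm{Sp}_{2N}(\ZZ_d)$ with $N \geq 2$; since $\ZZ_d$ is a finite, hence semilocal, commutative ring, $\mathrm{Sp}_{2N}(\ZZ_d)$ is generated by the elementary symplectic transvections $\left[\begin{smallmatrix} 1 & B \\ 0 & 1 \end{smallmatrix}\right]$, $\left[\begin{smallmatrix} 1 & 0 \\ C & 1 \end{smallmatrix}\right]$, $\left[\begin{smallmatrix} A & 0 \\ 0 & (A^*)^{-1} \end{smallmatrix}\right]$ relative to the polarization $F_s := L_s \oplus Q_s$, $F_s^*$. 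Applying the homomorphism property within $\CA$ and inspecting the three types, in every case the Lagrangian $F_s^*$ is complementary both to $F_s$ and to its image under the transvection; hence each transvection contributes an \emph{elementary} formation, which vanishes by the proposition in the excerpt on Lagrangians with a common complement, and therefore $(P_s; L_s, \gamma_s L_s) = 0$ in $M^{-1}(\CA)$.

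It remains to assemble. By Ranicki's theorem quoted in the excerpt, each block formation---being zero in $M^{-1}(\CA)$---becomes, after stabilizing by some $(R_s \oplus R_s^*; R_s, R_s^*)$, isomorphic to a direct sum of a trivial and an elementary formation. Because the $\Lambda_s$ are pairwise disjoint and uniformly bounded, $R := \bigoplus_s R_s$ is a legitimate (locally finite) object of $\mathcal{C}_\Lambda(\CA)$, and the block-wise stable isomorphisms combine into a stable isomorphism carrying the formation of $\gamma$ onto a direct sum of a trivial and an elementary formation---direct sums of trivial, respectively elementary, formations being again such, with complements formed componentwise. Hence the formation of $\gamma$ is zero in $L^1(\mathcal{C}_\Lambda(\CA), -1)$, and combined with the reduction of the second paragraph this proves the lemma. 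I expect this assembly to be the main obstacle: the fibers $P_s$, and hence the number of elementary generators of $\gamma_s$, are not uniformly bounded as $s$ varies. The resolution is that no uniformity is needed---disjointness and uniform boundedness of the $\Lambda_s$ make every $s$-indexed direct sum of objects supported in the respective $\Lambda_s$ automatically locally finite, so the block-wise stabilizations and Lagrangian complements glue freely. A secondary technical point, that $\mathrm{Sp}_{2N}(\ZZ_d)$ is generated by elementary transvections even for composite $d$, is handled by the semilocality of $\ZZ_d$.
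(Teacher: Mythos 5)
Your proof is correct and follows the same skeleton as the paper's: reduce to a single layer via the homomorphism property of $\gamma\mapsto(M,\psi;F_0,\gamma F_0)$, decompose along the partition $\{\Lambda_s\}$, and kill each block. The paper disposes of the block case in one line by asserting $L^1(\ZZ_d,-1)=0$ and lets the blockwise statement stand for the global one; you instead prove the block vanishing directly (transvection generators of $\mathrm{Sp}_{2N}(\ZZ_{p^k})$, each yielding an elementary formation) and, more importantly, you supply the assembly step---using Ranicki's characterization of the zero class as a \emph{single} stable isomorphism onto trivial $\oplus$ elementary, so that the unboundedly many applications of relation (ii) needed per block never have to be summed over $s$. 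That assembly argument is a genuine ingredient the paper's proof leaves implicit, and your identification of the non-uniformity of the number of generators across blocks as the real obstacle, together with its resolution via disjointness and uniform boundedness of the $\Lambda_s$, is exactly right.
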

\begin{proof}
    It suffices to prove the result for a single-layer circuit where there is a partition into uniformly bounded regions $\Lambda=\bigcup_{s}\Lambda_s$ such that
    \begin{itemize}
        \item $\alpha_j^i=0$ whenever, $i\in \Lambda_s$, $j\in \Lambda_{s'}$ for $s\ne s',$
        \item $(\alpha_j^i)_{i, j \in \Lambda_s}: \bigoplus_{k\in \Lambda_s} P(k) \rightarrow \bigoplus_{k\in \Lambda_s} P(k)$ is a symplectic automorphism for each $s\in \Lambda$. 
    \end{itemize} 
    The corresponding formation is a direct sum 
    $$\bigoplus_{s}\left(P_s, \begin{bmatrix}
        0 & 1\\
        -1 & 0
    \end{bmatrix}, L_s, \al L_s\right)=\bigoplus_{s}\left(H_{-1}(L_S), L_s, \al L_s\right),$$ with $P_s:=\bigoplus_{k\in \Lambda_s} P(k)$, $L_s:=\bigoplus_{k\in \Lambda_s} L(k)$.
    This belongs to the image of the map $$\prod_s L^1(\CalC_{\Lambda_s}(\A), -1)$$ given by the inclusions of categories $\prod_s\CalC_{\Lambda_s}(\A)\rightarrow \CalC_{\Lambda}(\A)$. Recall, $\CalC_{\Lambda_s}(\A)\cong \A$ for $\Lambda_s$ is bounded. It suffices to show that $L^1(\A, -1)=L^1(\ZZ_d, -1)=0$. Using the Chinese remainder theorem, we may assume $d$ is a prime power, hence $\ZZ_d$ is a local ring. By Theorem 2 of~\cite{klingenberg1963symplectic}, over a local ring, every symplectic automorphism is elementary\footnote{The reference uses the term transvection.}. See~\cite{ishibashi1999groups} for a further discussion on symplectic automorphism over local ring. This result is sufficient for $L^1(\ZZ_d, -1)=0$ via an equivalent definition of the $L$-group of a ring using symplectic automorphisms. This definition is out of our scope but can be found in textbooks such as section 9.2.3 of~\cite{luck2024surgery}. 
\end{proof}

\begin{lem} \label{lem: surjectivity}
    There is a surjective homomorphism:
    $$K(\Lambda, \ZZ_d)\rightarrow L^1(\mathcal{C}_{\Lambda}(\CA), -1).$$
\end{lem}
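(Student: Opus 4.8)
The plan is to build the map on generators via Lemma~\ref{lem: isom} and then verify it kills the defining relations of $K(\Lambda,\ZZ_d)$. Given a Clifford QCA $(P,\alpha)$, write $P(i)=L(i)\oplus L(i)^{*}$ with $L=P^{+}$ and let $\psi_0$ be the standard (hyperbolic) $(-1)$-symmetric form; Lemma~\ref{lem: isom} gives the nonsingular $(-1)$-symmetric formation $(L\oplus L^{*},\psi_0;\,L,\alpha L)$, and I set $\Phi(P,\alpha):=[(L\oplus L^{*},\psi_0;\,L,\alpha L)]\in L^{1}(\mathcal{C}_\Lambda(\CA),-1)$, the Witt group of $(-1)$-symmetric formations. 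For the composition relation, the formation identity $(M,\psi;F,G)+(M,\psi;G,H)=(M,\psi;F,H)$ gives $\Phi(P,\beta\alpha)=[(L\oplus L^{*};L,\beta L)]+[(L\oplus L^{*};\beta L,\beta\alpha L)]$, and the isometry $\beta$ of $(L\oplus L^{*},\psi_0)$ is an isomorphism of formations $(L\oplus L^{*};L,\alpha L)\xrightarrow{\sim}(L\oplus L^{*};\beta L,\beta\alpha L)$, so $\Phi(P,\beta\alpha)=\Phi(P,\alpha)+\Phi(P,\beta)$ as the group is abelian. For the direct-sum relation, the formation identity for $\oplus$ together with the canonical isometry $H_{-1}(L_P)\oplus H_{-1}(L_Q)\cong H_{-1}(L_P\oplus L_Q)$, which matches standard Lagrangians and conjugates $\alpha\oplus\beta$ correctly, gives $\Phi(P\oplus Q,\alpha\oplus\beta)=\Phi(P,\alpha)+\Phi(Q,\beta)$. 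Separated QCAs and Clifford circuits map to $0$ by Lemmas~\ref{lem: septrivial} and~\ref{circuittrivial}. Hence $\Phi$ descends to a homomorphism $\overline{\Phi}\colon K(\Lambda,\ZZ_d)\to L^{1}(\mathcal{C}_\Lambda(\CA),-1)$.

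For surjectivity, note first that every class in $L^{1}(\mathcal{C}_\Lambda(\CA),-1)$ is represented by a single nonsingular $(-1)$-symmetric formation $(M,\psi;F,G)$, using the $\oplus$ relation and $-(M,\psi;F,G)=(M,\psi;G,F)$ to collapse a $\ZZ$-linear combination. Since $F$ is a Lagrangian of the nonsingular form $(M,\psi)$, the symmetric analogue of Proposition~\ref{prop:normalization} provides an isometry $(M,\psi)\xrightarrow{\sim}H_{-1}(F)$ carrying $F$ to the standard Lagrangian $F\oplus 0$; this analogue holds verbatim inside any filtered additive category with involution, because the single equation to be solved, $k^{*}=-\,j^{*}\psi j\in Q^{-1}(L^{*})$, has a solution ($k\mapsto k^{*}$ being a filtration-preserving bijection on $\Hom(L^{*},L)$), so the resulting isometry is a morphism of $\mathcal{C}_\Lambda(\CA)$. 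Thus $(M,\psi;F,G)\cong (H_{-1}(F);\,F\oplus 0,\,G')$ for some Lagrangian $G'$ of $H_{-1}(F)$. If one can produce an isometry $\alpha$ of $H_{-1}(F)$ in $\mathcal{C}_\Lambda(\CA)$ with $\alpha(F\oplus 0)=G'$, then $(M,\psi;F,G)\cong(H_{-1}(F);F\oplus 0,\alpha(F\oplus 0))=\Phi(F\oplus F^{*},\alpha)$, since $(F\oplus F^{*},\alpha)$ is a genuine Clifford QCA datum: each $F(i)$ is finitely generated free over $\ZZ_d$, so $F(i)\oplus F(i)^{*}\cong\ZZ_d^{k_i}\oplus\ZZ_d^{k_i}$ with its standard symplectic form, and $\alpha$ has bounded propagation as a morphism of $\mathcal{C}_\Lambda(\CA)$. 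So surjectivity reduces to: the isometry group of the hyperbolic form $H_{-1}(F)$ acts transitively on its Lagrangians, at least after stabilizing the formation by a hyperbolic $(H_{-1}(Y);Y,Y^{*})$ (which leaves its Witt class unchanged).

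To obtain such an $\alpha$: applying the normalization again to the Lagrangian $G'$ of $H_{-1}(F)$ yields an isometry $H_{-1}(N)\xrightarrow{\sim}H_{-1}(F)$ carrying the standard Lagrangian of the source to $G'$, where $N$ is the object underlying $G'\hookrightarrow H_{-1}(F)$; composing with an isomorphism $N\cong F$ then gives $\alpha$. The one subtle point---and the place I expect the real work to lie---is producing that isomorphism $N\cong F$, i.e.\ showing two Lagrangians of the same hyperbolic form are isomorphic, or at least stably isomorphic, as objects of $\mathcal{C}_\Lambda(\CA)$. One has $N\oplus N^{*}\cong H_{-1}(F)\cong F\oplus F^{*}$; since the involution on $\mathcal{C}_\Lambda(\CA)$ is pointwise trivial, $A^{*}\cong A$ for every object (assemble the pointwise isomorphisms $A(x)^{*}\cong A(x)$ into a degree-$0$ map), so $N\oplus N\cong F\oplus F$ and $[N]=[F]$ is a $2$-torsion class in $K_0(\mathcal{C}_\Lambda(\CA))$. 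In the cases of interest---$\Lambda$ an open cone, so in particular unbounded---this class vanishes: either because that $K_0$ is torsion-free, or by an Eilenberg-swindle argument toward infinity exhibiting an object $Y$ with $N\oplus Y\cong F\oplus Y$, after which stabilizing the formation by $(H_{-1}(Y);Y,Y^{*})$ (which replaces $F$ by $F\oplus Y$ and $G'$ by a Lagrangian whose source $N\oplus Y^{*}\cong N\oplus Y\cong F\oplus Y$ is now isomorphic to the new first Lagrangian) reduces to the case already handled. Once $\alpha$ is in hand, $\overline{\Phi}$ is surjective, and every other step is a direct transcription of the additive $L$-theory formalism recalled above.
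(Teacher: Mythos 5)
Your proposal follows essentially the same route as the paper: define the map on generators via Lemma~\ref{lem: isom}, kill the relations of $K(\Lambda,\ZZ_d)$ using Lemmas~\ref{lem: septrivial} and~\ref{circuittrivial} together with the same composition computation $(L\oplus L^*;L,\beta\alpha L)=(L\oplus L^*;L,\beta L)+(L\oplus L^*;L,\alpha L)$, and deduce surjectivity from the $(-1)$-symmetric analogue of Proposition~\ref{prop:normalization}. The only substantive difference is that you explicitly flag, and patch by stabilizing with hyperbolic formations, the identification of the underlying objects of the two Lagrangians needed to turn the normalization isomorphisms into a single symplectic automorphism --- a point the paper's proof leaves implicit.
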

\begin{proof}   
    The Lemmas \ref{lem: septrivial} and \ref{circuittrivial} guarantees that the procedure in Lemma \ref{lem: isom} is a well-defined map. Clearly, it respects direct sum. For composition, notice that 
    \begin{align*}
        &(H_{-1}(L), L, \beta \alpha L)\\
        =&(H_{-1}(L), L,  \beta L)+(H_{-1}(L), \beta L, \beta\alpha L)\\
        =&(H_{-1}(L), L,  \beta L)+(H_{-1}(L), L, \alpha L),
    \end{align*}
    where the last equality follows from the formation isomorphism $$\beta: (H_{-1}(L), L, \alpha L)\rightarrow (H_{-1}(L), \beta L, \beta\alpha L).$$
    Surjectivity is implied by Proposition \ref{prop:normalization}. Indeed, for two Lagrangians, $F, G$, Proposition~\ref{prop:normalization} produces a symplectic automorphism $\alpha$ that takes $F$ to $G$. The Clifford QCA $(P, \alpha)$ gives the formation associated to $F, G$. 
\end{proof}
\begin{thm} \label{thm: injectivity}
    The homomorphism in Lemma \ref{lem: surjectivity} is injective. That is, 
    $$K(\Lambda, \ZZ_d)\cong L^1(\mathcal{C}_{\Lambda}(\CA), -1).$$
\end{thm}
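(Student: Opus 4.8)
The plan is to construct an explicit inverse homomorphism $\Psi\colon L^1(\mathcal C_\Lambda(\CA),-1)\to K(\Lambda,\ZZ_d)$ to the surjection $\Phi$ of Lemma~\ref{lem: surjectivity}, and to check $\Psi\circ\Phi=\id_{K(\Lambda,\ZZ_d)}$. Recall that $L^1(\mathcal C_\Lambda(\CA),-1)=M^{-1}(\mathcal C_\Lambda(\CA))$ is generated by stable isomorphism classes of nonsingular $(-1)$-symmetric formations subject only to additivity and the relation $(M,\psi;F,G)+(M,\psi;G,H)=(M,\psi;F,H)$. Given a formation $(M,\psi;F,G)$, Proposition~\ref{prop:normalization} supplies a symplectic isomorphism $g\colon (M,\psi)\xrightarrow{\sim} H_{-1}(F)=F\oplus F^*$ carrying $F$ onto the first summand, and then, exactly as in the proof of Lemma~\ref{lem: surjectivity}, a symplectic automorphism $\alpha$ of $F\oplus F^*$ with $\alpha(F)=g(G)$; I set $\Psi(M,\psi;F,G):=[F\oplus F^*,\alpha]$, viewing $F\oplus F^*$ as a qudit system with $P^+=F$, $P^-=F^*$.

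\textbf{Two lemmas about the Lagrangian splitting.} Everything reduces to understanding symplectic automorphisms of $L\oplus L^*$ that respect, or nearly respect, the standard splitting. (i) A symplectic automorphism fixing the standard Lagrangian $L$ is block upper triangular, $\left[\begin{smallmatrix} u & \theta\\ 0 & (u^*)^{-1}\end{smallmatrix}\right]$, and factors as $\left[\begin{smallmatrix} u & 0\\ 0 & (u^*)^{-1}\end{smallmatrix}\right]\left[\begin{smallmatrix} 1 & u^{-1}\theta\\ 0 & 1\end{smallmatrix}\right]$; the first factor preserves both $L$ and $L^*$, hence is a separated QCA, so its $K$-class vanishes by Lemma~\ref{lem: septrivial}. (ii) The main work is to prove that a \emph{shear} $\left[\begin{smallmatrix} 1 & \sigma\\ 0 & 1\end{smallmatrix}\right]$, with $\sigma$ a self-dual bounded-range morphism of $\mathcal C_\Lambda(\CA)$, is a Clifford circuit. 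For $\Lambda\subseteq\RR^N$ I partition $\Lambda$ into the nonempty intersections of $\Lambda$ with a grid of cubes of side much larger than the range of $\sigma$, split $\sigma=\sigma_{\mathrm{in}}+\sigma_{\mathrm{out}}$ according to whether an entry $\sigma^x_y$ stays inside one region or joins two distinct (necessarily adjacent) regions, and decompose $\sigma_{\mathrm{out}}$ further along a proper finite edge-colouring of the adjacency graph of the regions. Self-duality pairs the $x\to y$ block of $\sigma$ with the $y\to x$ block, so each summand obtained this way is itself self-dual; moreover each is supported on a partition of $\Lambda$ into uniformly bounded blocks, so the associated shear is a single-layer Clifford circuit. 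Since shears of this shape commute and add their off-diagonal entries, $\left[\begin{smallmatrix} 1 & \sigma\\ 0 & 1\end{smallmatrix}\right]$ is a finite product of single-layer circuits, hence a Clifford circuit, hence trivial in $K$ by Lemma~\ref{circuittrivial}. This is the step I expect to be the main obstacle; it uses only the mild regularity of $\Lambda$ that makes such a partition-and-colouring possible, which holds for any $\Lambda\subseteq\RR^N$ with the induced metric, in particular for open cones over finite complexes. Combining (i) and (ii): every symplectic automorphism of $L\oplus L^*$ fixing $L$ has trivial $K$-class.

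\textbf{Well-definedness and conclusion.} Two admissible choices of $\alpha$ differ by left multiplication by an element of the stabilizer of $F$, which has trivial $K$-class by (i)--(ii). Two choices of $g$, or an isomorphism or a stabilization of $(M,\psi;F,G)$ by a hyperbolic $(H_{-1}(P);P,P^*)$, replace $(F\oplus F^*,\alpha)$ by an isomorphic pair $(F'\oplus (F')^*,k\alpha k^{-1})$ together with, in the stabilized case, the range-zero ``swap'' $\left[\begin{smallmatrix} 0 & 1\\ -1 & 0\end{smallmatrix}\right]$ on $P\oplus P^*$, which is a single-layer circuit and so $0$ in $K$; since $K(\Lambda,\ZZ_d)$ is generated by isomorphism classes of Clifford QCAs, passing to $(F'\oplus (F')^*,k\alpha k^{-1})$ does not change the class. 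The relation $(M,\psi;F,G)+(M,\psi;G,H)=(M,\psi;F,H)$ becomes $[P,\alpha]+[P,\beta]=[P,\beta\alpha]$, and additivity becomes the direct-sum relation, so $\Psi$ is a well-defined homomorphism. Finally, for a Clifford QCA $(P,\alpha)$ the formation $\Phi([P,\alpha])=(L\oplus L^*;L,\alpha L)$ is already in normal form, with $g=\id$ and $\alpha$ itself admissible, so $\Psi(\Phi([P,\alpha]))=[P,\alpha]$. Hence $\Phi$ is injective, and with Lemma~\ref{lem: surjectivity} it is the asserted isomorphism. One can also bypass $\Psi$: if $\Phi([P,\alpha])=0$, then by Ranicki's structure theorem for the zero class the formation $(L\oplus L^*;L,\alpha L)$, after stabilization by swaps, becomes isomorphic to a sum of a trivial and an elementary formation, each of which by (i)--(ii) is represented by a product of separated QCAs and Clifford circuits, whence $[P,\alpha]=0$.
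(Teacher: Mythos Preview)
Your core argument coincides with the paper's: both hinge on the observation that a symplectic automorphism preserving a Lagrangian factors as a separated block-diagonal times a shear, and that the shear is a Clifford circuit. The paper runs this directly as an injectivity proof --- exactly your closing ``bypass'' paragraph: if $\Phi([P,\alpha])=0$, Ranicki's zero-class theorem makes the formation (stably) elementary, so $\alpha$ is conjugate to a block-triangular $\left[\begin{smallmatrix}A&0\\B&(A^*)^{-1}\end{smallmatrix}\right]$; multiplying by the separated $\left[\begin{smallmatrix}A^{-1}&0\\0&A^*\end{smallmatrix}\right]$ leaves a shear, which the paper simply declares ``clearly a quantum circuit''. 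Your main route via an explicit inverse $\Psi$ reaches the same destination but front-loads several well-definedness checks (independence of $g$ and of the chosen $\alpha$, invariance under isomorphism and stabilization, compatibility with the composition relation); these are routine but not free --- for instance, that $[F\oplus F^*,\alpha]=[F'\oplus(F')^*,k\alpha k^{-1}]$ for a symplectic isomorphism $k$ is not one of the defining relations of $K(\Lambda,\ZZ_d)$ and needs the standard swap-by-$k$ trick on $P\oplus P'$. What your write-up genuinely adds over the paper is step~(ii): you actually \emph{prove} that a self-dual bounded-range shear is a finite-depth Clifford circuit, via the grid partition and edge-colouring of adjacent cells, whereas the paper leaves this as an assertion.
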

\begin{proof}
     The product of Hadamard gate on every $i$ is a quantum circuit whose associated formation is the trivial formation $(H_{-1}(L), L, L^*)$ used in stabilization.

    Let $(P, \alpha)$ be a Clifford QCA. If the formation $(H_{-1}(L), L, \alpha L)=0\in L^1(\mathcal{C}_{\Lambda}(\CA), -1)$, there exists stable isomorphism between $(H_{-1}(L), L, \alpha L)$ and sum of an elementary formation and a trivial formation. Add a circuit of Hadamard gates to $\al$ if needed, we can assume $(H_{-1}(L), L, \alpha L)$ is isomorphic to sum of an elementary formation and a trivial formation. Moreover, conjugation by another QCA does not change its equivalence class, so it might as well be that 
    $$(H_{-1}(L), L, \alpha L)=(H_{-1}(E), E, E^*)\oplus (M, \psi; F, G),$$ with the latter an elementary formation.

By Lemma~\ref{lem: isom}, $\al$ is equivalent to $\beta\oplus \gamma$, where $\beta$ is the circuit of Hadamard gates on $E\oplus E^*$ and $(M, \gamma)$ is a Clifford QCA with $\gamma(F)=G$. As $(M, \psi; F, G)$ is elementary, there exists a Lagrangian $K\subset M$ complementary to both $F$ and $G$. We can identify $F^*$ with $K$ so that the formation becomes $\left(F\oplus K, \begin{pmatrix}
        0 & 1\\
        -1 & 0
    \end{pmatrix}; F, G\right)$ and expand in block form 
$\gamma=\begin{pmatrix}
        A & B\\
        C & D
    \end{pmatrix}$ with 
    \begin{itemize}
        \item $A\in \Hom_{\mathcal{C}_{\Lambda}(\CA)}(F, F),$
        \item $B\in \Hom_{\mathcal{C}_{\Lambda}(\CA)}(K, F),$
        \item $C\in \Hom_{\mathcal{C}_{\Lambda}(\CA)}(F, K),$
        \item $D\in \Hom_{\mathcal{C}_{\Lambda}(\CA)}(K, K).$
    \end{itemize} 
    We claim $A$ is an automorphism of $F$. Since $K$ is complementary to $G$, for any $f\in F$, there exists $g\in G$ and $k\in K$ with $f=g+k$. There is a unique $f_1\in F$ with $g=\gamma(f_1)=Af_1+Cf_1.$ Since $Cf_1+k\in K$ and $M=F\oplus K$, we must have $Cf_1+k=0$ and $Af_1=f$. On the other hand, if $Af=0$ for some $f\in F$, then $\gamma (f)=Cf\in G\cap K=\{0\}.$ For $\gamma$ is an automorphism itself, $f=0$. 

    Next, there is a factorization 
    $$\begin{pmatrix}
        A & B\\
        C & D
    \end{pmatrix}=\begin{pmatrix}
        1 & 0\\
        CA^{-1} & 1
    \end{pmatrix}
    \begin{pmatrix}
        A & B\\
        0 & (A^T)^{-1}
    \end{pmatrix}.$$
\end{proof}
Indeed, let $X=D-CA^{-1}B$, recall that $$A^TD-C^TB=I$$ and $A^TC$ is symmetric,  
\begin{align*}
   A^{T}X&=A^TD-A^TCA^{-1}B \\
   &= I+C^TB-A^TCA^{-1}B\\
   &=I+C^TB-C^TAA^{-1}B\\
   &=I.
\end{align*}
As $A$ is invertible, $X=(A^T)^{-1}$. Repeat the proof of Lemma~\ref{lem: isom} for both factors, we conclude that $\gamma$ is trivial in $K(\Lambda, \ZZ_d)$.

\subsection{Main Result}

Finally, we apply Theorems \ref{thm:Lmain} in order to compute the classification group. 
\begin{cor}\label{cor:Euclid}
    The classification of Clifford QCAs on space $\RR^m$
    \begin{equation*}
        K(\RR^m, \ZZ_d)\cong L^1_{\langle 1 \rangle}(\mathcal C_m(\CA),-1)\cong L^0_{\langle 1 \rangle}(\operatorname{Kar}(\mathcal C_{m-1}(\CA)), -1)\cong L^{1-m}_{\langle 1-m \rangle}(\ZZ_d, -1).
    \end{equation*}
    Furthermore, when $d$ is odd or when $d$ is even and $m\geq 4$, then the classification reduces to quadratic $L$-theory: 
    \begin{equation} \label{symmetric_to_quadratic}
        K(\RR^m, \ZZ_d)\cong L_{1-m}^{\langle 1-m \rangle}(\ZZ_d, -1)=L_{3-m}^{\langle 1-m \rangle}(\ZZ_d).
    \end{equation}
    Lastly, when $d$ is odd and $m\geq 1$ or when $d$ is even and $m\geq 4$, the decoration stabilizes
    $$K(\RR^m, \ZZ_d)\cong L_{3-m}^{\langle -\infty \rangle}(\ZZ_d).$$
    In particular, the classification becomes four-periodic. 
\end{cor}
\begin{proof}
    Combine Theorems~\ref{thm:Lmain} and \ref{thm: injectivity} to get the first chain of isomorphisms. 
    
    To prove~\eqref{symmetric_to_quadratic}, we divide into cases. When $d$ is odd, there is no distinction between quadratic and symmetric forms on free $\ZZ_d$-modules. Indeed, for a finitely generated $\ZZ_d$-modules $M$, the natural map 
    $$(1+T_\varepsilon): Q_\varepsilon(M)=\coker(1-T_\varepsilon)\rightarrow Q^\varepsilon(M)=\ker(1-T_\varepsilon)$$ is an isomorphism with inverse given by $\psi \mapsto \frac{1}{2}\psi+\im(1-T_\varepsilon).$
    
    When $d$ is even, apply Proposition~\ref{prop:quad2sym} to arrive at quadratic $L$-groups for $m\geq 4$.

    Lastly, we rely on the Rothenberg exact sequence~(Theorem 17.2 in~\cite{ranicki1992lower})
    \begin{align*}
        \cdots &\rightarrow \hat{H}^{n+1}(\ZZ_2; K_{-m}(\ZZ_d))\rightarrow L_n^{\langle -m+1\rangle}(\ZZ_d)\rightarrow L_n^{\langle -m\rangle}(\ZZ_d)\\ &\rightarrow \hat{H}^{n}(\ZZ_2; K_{-m}(\ZZ_d))\rightarrow L_{n-1}^{\langle -m+1\rangle}(\ZZ_d)\rightarrow \cdots.
    \end{align*}
Since $\ZZ_d$ is an Artinian ring, $K_{-m}(\ZZ_d)=0$ for $m\geq 1$~(Ex. 4.3 in~\cite{weibel2013k}). The Tate cohomology group $\hat{H}^{n}(\ZZ_2; K_{-m}(\ZZ_d))$ vanishes with its coefficient group. Therefore, 
$$L_n^{\langle -m+1\rangle}(\ZZ_d)\xrightarrow{\sim} L_n^{\langle -m\rangle}(\ZZ_d).$$
\end{proof}
\begin{rmk}
Here are some interesting observations:
\begin{itemize}
    \item Previously, it was known that any 1-dimensional QCA is composition of circuits and shifts~\cite{gross2012index}. The latter is separated. In addition, 2-dimensional QCA classification is trivial~\cite{freedman2020classification} (also see the appendix in~\cite{haah2021clifford}). Therefore, \( K(\mathbb{R}^m, \mathbb{Z}_d) = 0 \) for \( m = 0, 1, \) or \( 2 \). The only case in which symmetric (instead of quadratic) \( L \)-groups needs to be calculated is when \( d \) is a power of 2 and \( m = 3 \).
    \item When $d$ is odd, $K(\RR^m, \ZZ_d)$ is four-periodic in $m$. When $d$ is even, $K(\RR^m, \ZZ_d)$ becomes four-periodic for $m\geq 4.$
    \item
    The isomorphism $K(\mathbb R^m,\mathbb Z_d)\cong L^0(\operatorname{Kar}(\mathcal C_{m-1}(\mathcal A)),-1)$ can be given an interpretation based on \emph{invertible subalgeras} whose definition can be found in~\cite{haah2023invertible}. In particular, it is consistent with the conclusion that the classification of $m$-dimensional QCA coincides with the classification of $(m-1)$-dimensional invertible subalgebras. In Section~4 of~\cite{haah2023invertible}, the invertible subalgebra associated to a translation-invariant Clifford QCA is shown to be equivalent to a non-singular skew-symmetric form. We propose that, in the absence of translation symmetry, a Clifford invertible subalgebra is a non-singular skew-symmetric forms on an object in $\operatorname{Kar}(\mathcal C_{m-1}(\mathcal A))$. Their classification is precisely given by the Witt group $L^0_{\langle1\rangle}(\operatorname{Kar}(\mathcal C_{m-1}(\mathcal A)),-1)\cong K(\R^m, \ZZ_d)$.
\end{itemize}

   \end{rmk}

We apply Theorem~\ref{thm:Lmain2} to further the conclusion.

\begin{cor}
    Let $O(X)$ be an open cone of $X$ a pointed subcomplex of $S^n$ and $\Lambda=O(X)\times \RR^m$ for $n, m\geq 0$. The nontrivial Clifford QCAs on $\Lambda$ are classified by generalized homology groups 
    $$K(\Lambda, \ZZ_d)\cong  \mathbb L^{\langle -\infty \rangle}(\ZZ_d)_{2-m}(X),$$ for any $m$ greater than or equal to $m_d =\begin{cases}
        0, \text{for $d$ odd},\\
        4, \text{for $d$ even}.
    \end{cases}$ 
\end{cor}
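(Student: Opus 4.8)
The plan is to chain together the isomorphisms already established for the Euclidean case with the open-cone delooping theorem. First I would recall from Theorem~\ref{thm: injectivity} that $K(\Lambda, \ZZ_d) \cong L^1(\mathcal{C}_\Lambda(\CA), -1)$ for any metric space $\Lambda$, applied here to $\Lambda = O(X) \times \RR^m$. The key geometric input is that $O(X) \times \RR^m \cong O(X * S^{m-1})$, the open cone over the join of $X$ with an $(m-1)$-sphere (equivalently, the $m$-fold suspension of $X$, realized as a subcomplex of a larger sphere), so $\mathcal{C}_{O(X)\times \RR^m}(\CA) \cong \mathcal{C}_{O(\Sigma^m X)}(\CA)$ where I write $\Sigma^m X$ for the relevant subcomplex of some $S^N$. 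Alternatively, and perhaps more cleanly, one iterates Theorem~\ref{thm:Lmain} to peel off the $\RR^m$ factor: since $\mathcal{C}_{O(X)\times \RR^m}(\CA) = \mathcal{C}_m(\mathcal{C}_{O(X)}(\CA))$, repeated application of the degree-shifting identity gives $L^1(\mathcal{C}_m(\mathcal{C}_{O(X)}(\CA)), -1) \cong L^{1-m}_{\langle 1-m\rangle}(\mathcal{C}_{O(X)}(\CA), -1)$, paralleling exactly what was done in Corollary~\ref{cor:Euclid}.

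Next I would invoke Proposition~\ref{prop:quad2sym} together with the stabilization argument from the proof of Corollary~\ref{cor:Euclid}: for $\ZZ_d$ (reducing to prime powers via the Chinese Remainder Theorem, where all projectives are free so the decorations stabilize), once the degree $1-m$ is sufficiently negative—precisely $1-m \leq -3$, i.e. $m \geq 4$, or any $m\geq 0$ when $d$ is odd so that symmetric and quadratic forms agree on free $\ZZ_d$-modules—the lower symmetric $L$-group coincides with the ultimate lower quadratic one, $L^{1-m}_{\langle -\infty\rangle}(\ZZ_d, -1) \cong L_{1-m}^{\langle -\infty\rangle}(\ZZ_d, -1) \cong L_{3-m}^{\langle -\infty\rangle}(\ZZ_d)$ using the periodicity shift $L_n(\CA, -1) \cong L_{n+2}(\CA, +)$. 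This is where the threshold $m_d$ enters. Then I would apply Theorem~\ref{thm:Lmain2} with $X$ the given subcomplex: $L_n^{\langle -\infty\rangle}(\mathcal{C}_{O(X)}(\CA)) \cong H_{n-1}(X, \mathbb{L}^{\langle -\infty\rangle}(\CA))$. Matching degrees, $L_{3-m}^{\langle -\infty\rangle}(\mathcal{C}_{O(X)}(\ZZ_d))$—after accounting for the shift coming from writing everything over $O(X)$ rather than $\mathcal{C}_{O(X)}$ directly—becomes $H_{2-m}(X, \mathbb{L}^{\langle -\infty\rangle}(\ZZ_d))$.

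The bookkeeping of degree shifts is the main thing to get right: one must carefully track the single-unit shift between $L^{m+n+1}(\mathcal{C}_{m+1}(\CA))$ and $L^{n}_{\langle -m\rangle}(\CA)$ in Theorem~\ref{thm:Lmain}, the shift in Theorem~\ref{thm:Lmain2} between $H_{m-1}(X,-)$ and $L_m^{\langle -\infty\rangle}(\mathcal{C}_{O(X)}(\CA))$, and the quadratic periodicity identity $L_n(-,\varepsilon)\cong L_{n+2}(-,-\varepsilon)$, ensuring the composite lands on homological degree $2-m$. The cleanest route is probably: start from $K(\Lambda,\ZZ_d)\cong L^1(\mathcal{C}_{O(X)\times\RR^m}(\CA),-1)$, rewrite $\mathcal{C}_{O(X)\times\RR^m}(\CA)\cong\mathcal{C}_{O(X*S^{m-1})}(\CA)$ (noting $O(S^{m-1})=\RR^m$ and $O(Y)\times O(Z)\cong O(Y*Z)$), so the group is $L^1_{\langle -\infty\rangle}(\mathcal{C}_{O(X*S^{m-1})}(\CA),-1)$ once decorations have stabilized; convert to quadratic via Proposition~\ref{prop:quad2sym} (valid since the relevant degree is $\leq -3$, or automatically for odd $d$) to get $L_3^{\langle -\infty\rangle}(\mathcal{C}_{O(X*S^{m-1})}(\ZZ_d))$; and finally apply Theorem~\ref{thm:Lmain2} to obtain $H_{2}(X*S^{m-1},\mathbb{L}^{\langle -\infty\rangle}(\ZZ_d))\cong H_{2-m}(X,\mathbb{L}^{\langle -\infty\rangle}(\ZZ_d))$, the last step being the $m$-fold suspension isomorphism in the generalized homology theory. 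The secondary obstacle is verifying that the threshold $m\geq m_d$ is exactly what is needed for all decoration-forgetting maps to be isomorphisms and for the quadratic-symmetric comparison to apply; this follows from the prime-power reduction and the freeness of projectives over $\ZZ_{p^s}$ exactly as in Corollary~\ref{cor:Euclid}.
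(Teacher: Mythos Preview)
Your primary route—Theorem~\ref{thm: injectivity}, then $\mathcal{C}_{O(X)\times\RR^m}(\CA)=\mathcal{C}_m(\mathcal{C}_{O(X)}(\CA))$, then Theorem~\ref{thm:Lmain} to reach $L^{1-m}_{\langle 1-m\rangle}(\mathcal{C}_{O(X)}(\CA),-1)$, then Proposition~\ref{prop:quad2sym} and periodicity to $L_{3-m}^{\langle -\infty\rangle}$, then Theorem~\ref{thm:Lmain2}—is exactly the chain the paper writes down.

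One caution about your alternative ``join'' route: you propose passing from $L^1_{\langle -\infty\rangle}(\mathcal{C}_{O(X*S^{m-1})}(\CA),-1)$ to $L_3^{\langle -\infty\rangle}$ via Proposition~\ref{prop:quad2sym}, asserting the ``relevant degree is $\leq -3$.'' But in that formulation the degree is $1$, not $1-m$, so Proposition~\ref{prop:quad2sym} does not apply for even $d$. The whole point of peeling off $\RR^m$ first (your primary route, and the paper's) is to push the degree down to $1-m\leq -3$ so the symmetric-to-quadratic comparison becomes available; the join/suspension repackaging hides this shift and leaves the comparison unjustified. Stick with the peel-off argument.
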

\begin{proof}
Apply Theorem \ref{thm:Lmain2},
    \begin{align*}
        K(\Lambda, \ZZ_d)&\cong L^1(\mathcal C_\Lambda(\CA), -1)\\
        & \cong L^1(\mathcal C_m(\mathcal C_{O(X)}(\CA)), -1)\\
        &\cong L^{1-m}_{\langle -m \rangle}(\mathcal C_{O(X)}(\CA),-1)\\
        &\cong L_{3-m}^{\langle -\infty \rangle}(\mathcal C_{O(X)}(\CA))\\
        &\cong \mathbb L^{\langle -\infty \rangle}(\ZZ_d)_{2-m}(X).
    \end{align*}
\end{proof}
An example is that $K(\Lambda, \ZZ_d)=0$ for any open cone over a contractible complex $X$. This applies in particular to half-Euclidean spaces, which are special cases of such open cones.
\subsection{Homotopy and Asymptotic Geometry in Lattice Models}

In the study of lattice models, it is natural to go beyond systems defined on full Euclidean space and consider models supported on more general subspaces. A particularly broad and physically relevant class consists of spaces that are coarsely equivalent to open cones. This class includes half-Euclidean spaces, cones on a plane, and many other geometries arising from boundaries, defects, or engineered lattice terminations. Even in one dimension, one may replace the line by a tree with more than two ends, already leading to qualitatively new large-scale behavior. As the dimension increases, the family of such spaces becomes increasingly rich.

Our main conceptual contribution is to show that the classification of Clifford quantum cellular automata (QCA) naturally takes the form of a generalized homology theory of the boundary at infinity of the underlying lattice. This is striking in two aspects. First, the classification depends only on the asymptotic geometry of the lattice and is insensitive to microscopic details. Consequently, two lattices that differ substantially at short scales—such as by the introduction of random local defects—can nevertheless share the same Clifford QCA classification.

Second, for lattices coarsely equivalent to open cones, the resulting invariant is a homotopy invariant of the boundary at infinity. In particular, the vanishing of the Clifford QCA classification on a half-space extends to cones over any contractible complex, including cones with different opening angles or local geometric distortions. From this perspective, the classification is governed not by the local geometry of the lattice, but by the homotopy type of its asymptotic boundary.

We emphasize that this phenomenon is not unique to Clifford QCA. A closely related example appears in recent work of ours~\cite{artymowicz2024mathematical}, which develops a mathematical framework showing that the Hall conductance—and its non-abelian and higher-dimensional generalizations for gapped lattice systems—takes values in the \v{C}ech cohomology of the boundary at infinity of the system. The close parallel between these results suggests a broader principle: topological invariants of quantum lattice phases are naturally encoded in (co)homological data associated with the geometry at infinity. We expect this perspective to be useful in unifying and extending classification results across a wide range of lattice systems.

\section{Discussion}
Inspired by a construction of Pedersen and Weibel, we relate Clifford QCAs on metric spaces to algebraic \( L \)-theory and the associated generalized homology theory. In this work, we have focused on Clifford QCAs without symmetries. For Clifford QCAs equipped with symmetries—either spatial or internal—given by the action of a group \( G \), versions of Theorem~\ref{thm: injectivity} would exist. The classification boils down to $L$-groups of a suitable category $\CA[G]$ of \( \ZZ_d[G] \)-modules. Note that $\CA[G]$ may not be a category of free modules, depending on how \( G \) acts. The Pedersen--Weibel construction can be applied to classify these $G$-equivariant Clifford QCA on a space. When $G$ acts by spatial translation, we may interpret $\CA[G]$ itself as already representing a system of qudits defined on a Cayley graph of $G$. It is natural to question the relation to Clifford QCA on this graph without symmetry pertaining to the category $\mathcal C_G(\CA)$.\footnote{We choose this notation because this Pedersen--Weibel category does not depend on choice of generating set, for it does not affect the large scale geometry of Cayley graph.} We offer a limited answer.

Define a process, called \emph{symmetry coarse-graining}, in which the symmetry is allowed to break arbitrarily, provided that some finite-index subgroup of symmetries is preserved. Mathematically, it corresponds to the direct limit
\[
\varinjlim_{H \leq G \text{ with } [G : H] < \infty}  L^1\big(\mathcal{A}[H], -1\big).
\]
For \( G = \ZZ^n \), with \( \mathcal{A}[\ZZ^n] \) the category of finitely generated, free \( \ZZ_d[\ZZ^n] \)-modules, then by Proposition \ref{prop:binomial}
\[
L^1(\mathcal{A}[\mathbb{Z}^m], -1) = \sum_{i=0}^m \binom{m}{i} L^{1-i}_{\langle 1-i \rangle}(\mathcal{A},-1).
\]
All finite-index subgroups are isomorphic to $\ZZ^m$. It turns out the only the last term survives coarse-graining~\cite{haah2025topological}. That is

\[
\varinjlim_{\ZZ^m \leq \ZZ^m} L^1\big(\mathcal{A}[\ZZ^m], -1\big)\cong L^{1-m}_{\langle 1-m\rangle}(\ZZ_d,-1).
\] Interestingly, this agrees with Corollary~\ref{cor:Euclid} where no translation-symmetry is assumed. However, we emphasize that this agreement does not generally hold when \( G \) is nonabelian~\cite{weinberger-pc}. Hence, the relationship between $$
\varinjlim_{H \leq G \text{ with } [G : H] < \infty}  L^m\big(\mathcal{A}[H], -1\big)$$ and $L^m(\mathcal{C}_G(\CA), -1)$ is still a mystery. 

Another natural generalization is to consider Clifford QCAs on lattices with mixed qudit dimensions. By the Chinese Remainder Theorem, the problem separates into cases with qudits of dimensions \( \{p, p^2, \dots, p^s\} \) for some prime \( p \) and natural number \( s \). This amounts to choosing \( \mathcal{A} \) as the category of \( \ZZ_{p^s} \)-modules of the form
$$\bigoplus_{r=1}^s \ZZ_{p^r}^{k_r}.$$ Actually, this is the category of all finitely generated \( \ZZ_{p^s} \)-modules.
Similarly, for the setup with $G$-symmetry, we can select $\CA[G]$ to be the $\ZZ[G]$-modules of the form $$\bigoplus_{r=1}^s \ZZ_{p^r}^{k_r}[G].$$ For $G=\ZZ^m$, this is exactly what was studied in~\cite{ruba2024homological} under the name \emph{quasi-free modules}.

Lastly, in view of the growing popularity of stabilizer codes and their boundaries \cite{liang2024operator, liang2025generalized, liang2025planar, ruba2025witt}, we envision a version of the Witt group relevant to this setting. The bulk-boundary correspondence would be explained through yet another instantiation of the proto-theorem. Foundational work in this direction is being laid, building on the notion of quasi-symplectic modules introduced in \cite{ruba2024homological,ruba2025witt}.
\appendix
\section{Quantum Spin System and QCA}
\textit{Quantum spin systems} are a fundamental class of models in condensed matter physics that describe the collective behavior of interacting spins in a lattice. These systems are essential for understanding a wide range of physical phenomena, including magnetism, phase transitions, and quantum entanglement. The study of quantum spin systems has led to insights into the nature of quantum many-body systems and has applications in quantum information science, statistical mechanics, and topological phases of matter. 

A quantum spin system consists of a set $\Lambda$ with a metric $\rho$, often called a lattice\footnote{Following the custom of condensed matter physics, the term ``lattice'' does not indicate group structure.}, as well as an algebra of local observables $\SA_{\mathrm {loc}}$.  Let $\mathcal P_0(\Lambda)$ be the set of finite subsets of $\Lambda.$  We denote $\bigotimes_{x\in X}\mathrm{Mat}(\CC^{d_x})$ by $\SA_X$ for $X\in \mathcal{P}_0(\Lambda)$, where $\mathrm{Mat}(\CC^{d_x})$ is the matrix algebra on $\CC^{d_x}$. For $X, Y \in \mathcal{P}_0(\Lambda)$ with $Y\subset X$, define $\iota_{Y,X}: \SA_Y\rightarrow \SA_X$ by $\iota_{Y,X}(A)= A\otimes \Id_{X\backslash Y}$. Then $\{\SA_X\}_{X\in P_0(\Lambda)}$ is a directed system. We say $A\in \SA_X$ is supported on $Y \subset X$ if $A=A'\otimes \Id_{X\backslash Y}$ for some $A'\in \SA_Y$. Here $\Id_{X\backslash Y}$ is the identity operator in $\SA_{X\backslash Y}.$ The \textit{support} of $A$ is the intersection of such subsets. 

The algebra of \textit{local observables} is defined as
\begin{equation}
    \SA_{\mathrm {loc}}= \varinjlim_{X\in \mathcal P_0(\Lambda)}\SA_X,
\end{equation}
  For finite $\Lambda$, instead, one may define $\SA_{\mathrm {loc}}=\mathrm{Mat}(\mathcal H)$ with 
$$\mathcal{H}=\bigotimes_{x\in \Lambda} \CC^{d_x}.$$ As $\mathrm{Mat}(\CC^{d_x}\otimes \CC^{d_y})\cong \mathrm{Mat}(\CC^{d_x})\otimes\mathrm{Mat}(\CC^{d_y})$, the two definitions are equivalent. 
\begin{prop}
    $\SA_{\mathrm {loc}}$ is a $*$-algebra with the operator norm. Moreover, its norm completion $\SA=\overline{\SA_{\mathrm {loc}}}$ is a $C^*$-algebra. 
\end{prop}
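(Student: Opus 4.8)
The statement to be proved is that $\SA_{\mathrm{loc}}$ is a $*$-algebra under the operator norm, and that its norm completion $\SA = \overline{\SA_{\mathrm{loc}}}$ is a $C^*$-algebra. The plan is to work through the colimit structure carefully, since essentially everything follows from the fact that the directed system $\{\SA_X\}_{X \in \mathcal{P}_0(\Lambda)}$ consists of finite-dimensional matrix algebras with unital, norm-isometric connecting maps, and that the colimit of a directed system of $C^*$-algebras along isometric $*$-homomorphisms is again a $C^*$-algebra after completion.

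First I would check that each $\SA_X = \bigotimes_{x \in X} \mathrm{Mat}(\CC^{d_x})$ is a finite-dimensional $C^*$-algebra: it is a matrix algebra $\mathrm{Mat}(\CC^{D})$ with $D = \prod_{x \in X} d_x$, hence carries its operator norm and the usual adjoint $*$, satisfying the $C^*$-identity $\norm{A^*A} = \norm{A}^2$. Next I would verify that each connecting map $\iota_{Y,X}(A) = A \otimes \Id_{X \setminus Y}$ is a unital $*$-homomorphism and, crucially, an isometry for the operator norms — this is the standard fact that $\norm{A \otimes \Id} = \norm{A}$ for operators on finite-dimensional Hilbert spaces. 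Consequently the algebraic colimit $\SA_{\mathrm{loc}} = \varinjlim_X \SA_X$ inherits a well-defined $*$-operation and a well-defined norm: any two elements are represented in a common $\SA_X$ (since $\mathcal{P}_0(\Lambda)$ is directed under union), and the isometry property guarantees the norm and $*$ do not depend on the choice of $X$. One then checks the norm is submultiplicative and satisfies the $C^*$-identity on $\SA_{\mathrm{loc}}$, again by passing to a common $\SA_X$; this makes $\SA_{\mathrm{loc}}$ a (generally non-complete) pre-$C^*$-algebra, i.e. a normed $*$-algebra with the $C^*$-identity.

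Finally, for the completion, I would invoke the general principle that the completion of a normed $*$-algebra satisfying the $C^*$-identity is a $C^*$-algebra: the algebra operations and the involution are uniformly continuous (multiplication is jointly continuous on bounded sets, and $*$ is isometric), so they extend uniquely to $\SA = \overline{\SA_{\mathrm{loc}}}$, and the $C^*$-identity, being a closed condition, persists in the limit. For the finite-$\Lambda$ case one notes $\SA_{\mathrm{loc}} = \mathrm{Mat}(\mathcal{H})$ is already complete and the two descriptions agree via $\mathrm{Mat}(\CC^{d_x} \otimes \CC^{d_y}) \cong \mathrm{Mat}(\CC^{d_x}) \otimes \mathrm{Mat}(\CC^{d_y})$, as noted in the text.

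I do not expect any serious obstacle here — the entire content is bookkeeping about directed colimits of matrix algebras along isometric embeddings. The one point deserving genuine (if brief) attention is the \emph{isometry} of the connecting maps $\iota_{Y,X}$, since this is what makes the norm on $\SA_{\mathrm{loc}}$ well-defined in the first place; without it the colimit norm would be ambiguous. Everything else — submultiplicativity, the $C^*$-identity, and its survival under completion — is routine and can be stated with a one-line justification each.
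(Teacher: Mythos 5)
Your proposal is correct, and it is essentially the standard inductive-limit argument: the paper itself gives no proof, deferring entirely to Bratteli--Robinson, and the argument found there is precisely the one you outline (finite-dimensional matrix algebras, isometric unital $*$-homomorphisms as connecting maps, well-definedness of the colimit norm, and completion of a pre-$C^*$-algebra). You correctly isolate the one point that needs checking, namely the isometry $\norm{A\otimes \Id}=\norm{A}$, which guarantees the norm on $\SA_{\mathrm{loc}}$ is independent of the chosen representative $\SA_X$.
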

\begin{proof}
See section 6.2 of \cite{bratteli2012operator}.    
\end{proof}
\begin{rmk}
The metric $\rho$ does not matter for the abstract $C^*$-algebraic structure of $\SA$. Its role becomes explicit when we consider the local automorphisms of $\SA$.  
\end{rmk}

Fix a uniform $d_x=d$ for all $x\in \Lambda$. This allows us to canonically identify all $\SA_{\{i\}}$, for $i\in \Lambda.$ The main object of interest is a group of $*$-automorphisms of $\SA$ with bounded propagation.
\begin{defn}
    A \textit{quantum cellular automaton (QCA)} $\alpha$ is a $*$-automorphism of $\SA$ with a constant called its \textit{range} $r\geq0$ such that 
    $$\alpha(A)\in \SA_{B_r(i)}$$ for any $A\in \SA_{\{i\}}$ and $i\in \Lambda$. Here $B_r(i):=\{j\in \Lambda : \rho(i,j)\leq r\}$.
\end{defn}

\begin{prop}
Let $\alpha$ be a QCA with range $r$.
    \begin{enumerate}
        \item $\alpha$ is a self-isometry of the $C^*$-algebra $\SA$. 
        \item For any $A\in \SA_X$ with $X\subset\Lambda$ finite, $\alpha(A)\in \SA_{B_{r}(X)}$ where $B_{r}(X):=\{j\in \Lambda : \rho(X,j)\leq r\}$
        \item Its inverse $\alpha^{-1}$ is a QCA with the same range $r$. 
        \item The set $\mathbf Q(\SA)$ of all QCAs is a group.
    \end{enumerate}
\end{prop}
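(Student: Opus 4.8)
The plan is to handle the four assertions in order, since each feeds into the next, and only part~(3) is more than bookkeeping.

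\textbf{Part (1).} This is immediate from $C^*$-algebra theory: $\alpha$ is by hypothesis a $*$-automorphism of the $C^*$-algebra $\SA=\overline{\SA_{\mathrm{loc}}}$, every $*$-homomorphism between $C^*$-algebras is norm-contractive, and applying this to both $\alpha$ and $\alpha^{-1}$ forces $\|\alpha(A)\|=\|A\|$. (Equivalently, the norm is intrinsic, $\|A\|^2$ being the spectral radius of $A^*A$, which $\alpha$ preserves; cf.\ \cite{bratteli2012operator}.)

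\textbf{Part (2).} For finite $X$ the algebra $\SA_X\cong\bigotimes_{i\in X}\mathrm{Mat}(\CC^{d})$ is linearly spanned by elementary tensors $\bigotimes_{i\in X}A_i$ with $A_i\in\SA_{\{i\}}$, and such a tensor is the product in $\SA$ of the pairwise commuting images of the $A_i$. Since $\alpha$ is linear and multiplicative, $\alpha$ of this tensor is the product of the elements $\alpha(A_i)\in\SA_{B_r(i)}\subseteq\SA_{B_r(X)}$, using that for finite $X$ one has $B_r(X)=\bigcup_{i\in X}B_r(i)$ because $\delta(X,j)=\min_{i\in X}\delta(i,j)$. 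As $\SA_{B_r(X)}$ is a (closed) subalgebra, both the product and arbitrary linear combinations stay inside it, so $\alpha(\SA_X)\subseteq\SA_{B_r(X)}$, and by continuity the same holds on $\overline{\SA_X}$ when $X$ is not finite.

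\textbf{Part (3)} is the main point, and it rests on the commutant structure of the quasilocal algebra: for any $Y\subseteq\Lambda$, the relative commutant of $\SA_{\Lambda\setminus Y}$ in $\SA$ is exactly $\SA_Y$ (the standard tensor-factor commutant computation for full matrix algebras; see \cite{bratteli2012operator}). Fix $i\in\Lambda$ and $A\in\SA_{\{i\}}$; I claim $\alpha^{-1}(A)\in\SA_{B_r(i)}$. Indeed, take any site $j$ with $\delta(i,j)>r$ and any $B\in\SA_{\{j\}}$. Then $\alpha(B)\in\SA_{B_r(j)}$ by the definition of a QCA, and since $\delta(i,j)>r$ we have $i\notin B_r(j)$, so $A$ commutes with $\SA_{B_r(j)}$; hence $[A,\alpha(B)]=0$. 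Applying $\alpha^{-1}$, which is multiplicative, gives $[\alpha^{-1}(A),B]=\alpha^{-1}([A,\alpha(B)])=0$. Thus $\alpha^{-1}(A)$ commutes with every $\SA_{\{j\}}$ with $\delta(i,j)>r$, hence with the subalgebra $\SA_{\Lambda\setminus B_r(i)}$ they generate, and the commutant fact gives $\alpha^{-1}(A)\in\SA_{B_r(i)}$. Since $\alpha^{-1}$ is already a $*$-automorphism of $\SA$, this exhibits it as a QCA of range $r$.

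\textbf{Part (4).} The set $\mathbf Q(\SA)$ sits inside $\mathrm{Aut}(\SA)$ and contains $\mathrm{id}$ (range $0$). If $\alpha,\beta$ have ranges $r,s$, then for $A\in\SA_{\{i\}}$ we get $\alpha(A)\in\SA_{B_r(i)}$ and, by part~(2), $\beta(\SA_{B_r(i)})\subseteq\SA_{B_s(B_r(i))}\subseteq\SA_{B_{r+s}(i)}$ by the triangle inequality, so $\beta\circ\alpha\in\mathbf Q(\SA)$ with range $\le r+s$; and $\alpha^{-1}\in\mathbf Q(\SA)$ by part~(3). Hence $\mathbf Q(\SA)$ is a subgroup. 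The step I expect to require the most care is part~(3), specifically the justification that commuting with all sufficiently distant single-site algebras forces membership in the finite-range local algebra $\SA_{B_r(i)}$ inside the (possibly infinite-dimensional) quasilocal $C^*$-algebra; everything else is algebra of morphisms plus the triangle inequality.
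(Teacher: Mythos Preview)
Your proof is correct and follows essentially the same approach as the paper's: part~(1) via the automatic isometry of $*$-isomorphisms, part~(2) via decomposition into elementary tensors, part~(3) via the commutant argument $[\alpha^{-1}(A),B]=\alpha^{-1}[A,\alpha(B)]=0$, and part~(4) by combining (2) and (3). You have simply supplied more detail, in particular making explicit the triangle inequality for composition and flagging that the relative-commutant identification $(\SA_{\Lambda\setminus Y})'\cap\SA=\SA_Y$ is what underlies part~(3).
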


\begin{proof}
 \begin{enumerate}
     \item A $C^*$-isomorphism is isometric. 
     \item Write $A=\sum_{k}A_k$ with $A_k=\otimes_{i\in X}A_k^{(i)}$ and $A_k^{(i)}\in\SA_{\{i\}}$. Note each $\alpha(A_k^{(i)})\in \SA_{B_{r}(X)}.$
     \item For any $A\in \SA_{\{x\}}$ and $B\in \SA_{\{y\}}$, $[\alpha^{-1}(A), B]=[A, \alpha(B)]=0$ for $\rho(x,y)>r.$ In other words, $\alpha^{-1}(A)$ is in the commutant of $\SA_{\Lambda \backslash B_r(x)}$. Therefore, it is supported in $B_r(x).$
     \item From 2, composition of two QCAs is a QCA. From 3, the inverse of a QCA is a QCA. 
 \end{enumerate}  
\end{proof}

\begin{exmp}
    Let $\Lambda = \ZZ$, and suppose that the algebras $\SA_{\{i\}}$ are all mutually isomorphic via a fixed family of isomorphisms
\[
t_i \colon \SA_{\{i\}} \to \SA_{\{i+1\}} .
\]
Right translation then defines a QCA
\[
\tau \colon A \in \SA_{\{i\}} \longmapsto t_i(A) \in \SA_{\{i+1\}},
\]
for all $i \in \ZZ$.

\end{exmp}

\begin{exmp}
    For general $\Lambda$ with metric $\rho$, a translation is a bijective map $T: \Lambda\rightarrow \Lambda$ such that $\rho(T(i), i)<r$ where $r$ depends only on $T$. Define a QCA by permuting $\SA_i$ according to $T$. 
\end{exmp}

\begin{exmp}
    Another important class of examples is conjugation by quantum circuits as defined below.  
\end{exmp}

\begin{defn}
    A \textit{quantum gate} $G$ is a local unitary operator, i.e. $G\in \mathbf U(\SA_X) :=\{A\in \SA_X: A^*A=AA^*=I\}$. Recall that the support of $G$ is contained in $X$.\\
    A \textit{single-layer quantum circuit} is a formal product $$\prod_{i\in I}G_i,$$ where $\{G_i\}_{i\in I}$ are quantum gates whose supports $\{X_i\}_{i\in I}$  are pairwise disjoint and uniformly bounded. Though not an observable itself, it defines a QCA through conjugation. In fact, we often abuse the term single-layer quantum circuits to mean this QCA. \textit{A (finite-depth) quantum circuit}  $\mathscr C$ is a composition of finitely many single-layer circuits.
\end{defn}

\begin{prop}[Lemma 2.10 in~\cite{freedman2020classification}]
    Conjugation by a quantum circuits (or simply quantum circuits) form a normal subgroup $\mathbf{C}(\SA)\triangleleft\mathbf{Q}(\SA)$.
\end{prop}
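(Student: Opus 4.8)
```latex
\textbf{Proof proposal.} The plan is to verify the two defining properties of a normal subgroup: that conjugations by quantum circuits form a subgroup of $\mathbf{Q}(\SA)$, and that this subgroup is stable under conjugation by arbitrary QCAs. For the subgroup property, I would first observe that the identity automorphism is conjugation by the empty circuit. Given two circuits $\mathscr{C}_1$ and $\mathscr{C}_2$, their composition corresponds to conjugation by the concatenation of the two layer-sequences, which is again a finite-depth quantum circuit; hence the product of two circuit QCAs is a circuit QCA. For inverses, note that a single-layer circuit $\prod_{i\in I} G_i$ has inverse conjugation equal to conjugation by $\prod_{i\in I} G_i^*$, which is again a single-layer circuit (the supports are unchanged, hence still pairwise disjoint and uniformly bounded); reversing the order of the layers then gives the inverse of a finite-depth circuit, so $\mathbf{C}(\SA)$ is closed under inverses.

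For normality, let $\beta \in \mathbf{Q}(\SA)$ be a QCA with range $s$, and let $\mathscr{C}$ be conjugation by a single-layer circuit $\prod_{i\in I} G_i$ with supports $\{X_i\}$ uniformly bounded by some $R$ (it suffices to treat single layers, since $\beta(\mathscr{C}_1\mathscr{C}_2)\beta^{-1} = (\beta\mathscr{C}_1\beta^{-1})(\beta\mathscr{C}_2\beta^{-1})$). Then $\beta \mathscr{C} \beta^{-1}$ is conjugation by the unitary $\prod_{i\in I}\beta(G_i)$. By the support-propagation property of QCAs, each $\beta(G_i)$ is supported in $B_s(X_i)$, a set of diameter at most $2s + R$; these enlarged supports are uniformly bounded, but they need no longer be pairwise disjoint. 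The key step is therefore to regroup: since the original supports $X_i$ are pairwise disjoint and uniformly bounded, one can partition the index set $I$ into finitely many classes $I_1,\dots,I_N$ (with $N$ depending only on $R$, $s$, and the geometry of $\Lambda$, via a standard coloring argument on the intersection graph of the sets $B_s(X_i)$) such that within each class the enlarged supports $\{B_s(X_i)\}_{i\in I_a}$ are pairwise disjoint. Then $\prod_{i\in I_a}\beta(G_i)$ is a genuine single-layer circuit for each $a$, and $\beta\mathscr{C}\beta^{-1}$ is their composition, hence a finite-depth quantum circuit. Thus $\beta \mathbf{C}(\SA)\beta^{-1}\subseteq \mathbf{C}(\SA)$.

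The main obstacle is precisely this coloring step: one must argue that the intersection graph on $\{B_s(X_i)\}$ has uniformly bounded degree, so that a finite proper coloring exists. This uses that the $X_i$ are uniformly separated in the sense that any ball of bounded radius meets only boundedly many of them — which follows from the pairwise disjointness together with the uniform bound on their diameters, under the (implicit, and here needed) assumption that $\Lambda$ has bounded geometry, i.e. balls of fixed radius contain a uniformly bounded number of sites. I would state this bounded-geometry hypothesis explicitly if it is not already in force. Everything else is routine bookkeeping with supports and the already-established fact that QCAs form a group with controlled propagation.
```
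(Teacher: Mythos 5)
Your proposal is correct and is essentially the standard argument: the paper itself gives no proof here, deferring to~\cite{freedman2020classification}, and the regrouping-by-coloring step you describe (enlarge each support to $B_s(X_i)$, then partition the gates into finitely many genuinely disjoint layers using the bounded degree of the intersection graph) is exactly the idea used there. Two small points worth making explicit: the reordering implicit in writing $\prod_{i\in I}\beta(G_i)$ as a composition of the sub-products $\prod_{i\in I_a}\beta(G_i)$ is licit because the $\beta(G_i)$ mutually commute (the $G_i$ do, and $\beta$ is an automorphism), not merely because of support considerations; and you are right that a bounded-geometry hypothesis on $\Lambda$ is needed for the coloring bound --- it is implicit throughout the paper (locally finite collections on $\mathbb{Z}^n$, open cones, etc.) but is not stated.
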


\begin{defn}
    Given two algebras $$\SA_{\mathrm{loc}}=\varinjlim_{\Gamma\in \mathcal P_0(\Lambda)}\bigotimes_{x\in \Gamma}\mathrm{Mat}(\CC^d)$$ and $$\SA'_{\mathrm{loc}}=\varinjlim_{\Gamma\in \mathcal P_0(\Lambda)}\bigotimes_{x\in \Gamma}\mathrm{Mat}(\CC^{d'})$$ defined over the same metric space $(\Lambda, \rho)$, their tensor product $\SA_{\mathrm{loc}}\otimes \SA'_{\mathrm{loc}}$ is defined point-wise as
    $$\varinjlim_{\Gamma\in \mathcal P_0(\Lambda)}\bigotimes_{x\in \Gamma}\left(\mathrm{Mat}(\CC^d)\otimes \mathrm{Mat}(\CC^{d'})\right).$$
\end{defn}

\begin{defn}
    Observe that $\mathbf Q(\SA)\hookrightarrow \mathbf Q(\SA\otimes \SA')$ with $\alpha \mapsto \alpha \otimes \id_{\SA'}$. This is known as \textit{stabilization}. Let $\mathbf Q(\Lambda)$ be the union of the resulting sequence
    $$\mathbf Q(\SA)\hookrightarrow \mathbf Q(\SA^{\otimes 2})\hookrightarrow\cdots \hookrightarrow \mathbf Q(\SA^{\otimes k})\hookrightarrow\dots.$$ Define $\mathbf C(\Lambda)$ similarly.
\end{defn}
The following fundamental theorem of QCA is reminiscent of the algebraic $K_1$ group. 
\begin{theorem}[Theorem 2.3 in~\cite{freedman2020classification}]
     Using notations above, $\mathbf C(\Lambda)$ is a normal subgroup of $\mathbf Q(\Lambda)$. Moreover, the quotient group $\overline{\mathbf Q}(\Lambda)=\mathbf Q(\Lambda)/\mathbf C(\Lambda)$ is abelian. 
\end{theorem}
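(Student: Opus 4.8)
## Proof Proposal

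The plan is to prove the two claims—normality of $\mathbf{C}(\Lambda)$ in $\mathbf{Q}(\Lambda)$, and abelianness of the quotient—by essentially the same mechanism that makes $K_1$ abelian: an Eilenberg swindle / Whitehead-lemma style argument executed at the level of stabilized QCAs, using the tensor-stabilization to create ``room'' for conjugating one factor by another.

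First I would dispatch normality. Since $\mathbf{C}(\SA^{\otimes k})\triangleleft \mathbf{Q}(\SA^{\otimes k})$ for each $k$ (already recorded in the excerpt), and the stabilization maps $\mathbf{Q}(\SA^{\otimes k})\hookrightarrow \mathbf{Q}(\SA^{\otimes k+1})$ send circuits to circuits and commute with conjugation, normality passes to the direct limit: if $\alpha\in\mathbf{Q}(\Lambda)$ and $\gamma\in\mathbf{C}(\Lambda)$, both are represented at some finite stage $k$, where $\alpha\gamma\alpha^{-1}$ is a circuit, hence it is a circuit in $\mathbf{Q}(\Lambda)$. This step is routine.

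The substantive step is abelianness of $\overline{\mathbf{Q}}(\Lambda)$. Given $\alpha,\beta\in\mathbf{Q}(\Lambda)$, represented at stage $k$ on $\SA^{\otimes k}$, I want to show $\alpha\beta\alpha^{-1}\beta^{-1}$ is a circuit after further stabilization. The key move: on $\SA^{\otimes k}\otimes\SA^{\otimes k}$ (i.e. stage $2k$), the element $\beta\otimes\beta^{-1}$ is a finite-depth circuit. Indeed, $\beta\otimes\beta^{-1}$ can be written as a conjugate of the ``swap'' automorphism interchanging the two tensor factors point-wise, composed with its inverse: concretely $\beta\otimes\beta^{-1} = (\beta\otimes\id)\circ\mathrm{SWAP}\circ(\beta^{-1}\otimes\id)\circ\mathrm{SWAP}$ up to reordering, and since $\mathrm{SWAP}$ is a depth-one circuit (a tensor product of local swap gates, each supported on a single site $\{i\}\times\{i\}$) and it conjugates $\beta\otimes\id$ to $\id\otimes\beta$, one gets $\beta\otimes\beta^{-1}\in\mathbf{C}(\SA^{\otimes 2k})$. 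Now under the stabilization $\mathbf{Q}(\SA^{\otimes k})\hookrightarrow\mathbf{Q}(\SA^{\otimes 2k})$, $\alpha$ maps to $\alpha\otimes\id$ and $\beta$ to $\beta\otimes\id$. Working modulo $\mathbf{C}$, we have $\beta\otimes\id \equiv \beta\otimes\id \cdot (\id\otimes\beta^{-1}) = \beta\otimes\beta^{-1}\cdot(\text{reorder}) \equiv \id$ — more carefully: $\beta\otimes\id$ and $\id\otimes\beta$ are conjugate by the circuit $\mathrm{SWAP}$, hence equal in $\overline{\mathbf{Q}}$ (using normality), while $(\beta\otimes\id)(\id\otimes\beta^{-1})=\beta\otimes\beta^{-1}\in\mathbf C$. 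Combining, $\beta\otimes\id \equiv \id\otimes\beta \equiv \beta^{-1}\otimes\id$ would be too strong; instead the right bookkeeping is the standard one: in $\overline{\mathbf Q}(\SA^{\otimes 2k})$, for automorphisms supported on disjoint factors, $[\alpha\otimes\id,\ \id\otimes\beta]=1$ trivially since they commute on the nose, and $\id\otimes\beta\equiv\beta\otimes\id$ via the SWAP circuit; therefore $\alpha\otimes\id$ and $\beta\otimes\id$ commute modulo circuits, i.e. $[\alpha,\beta]\in\mathbf C(\Lambda)$.

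The main obstacle—and the place to be careful—is verifying that $\mathrm{SWAP}$ is genuinely a finite-depth circuit in the bounded-propagation sense (it is depth one: the product of on-site swap gates $G_i$ interchanging the $i$-th tensor factor of the first copy with that of the second, with uniformly bounded—indeed singleton—supports), and that all the identifications above respect the propagation bounds so that every ``$\equiv$'' is an honest equality in the correct $\overline{\mathbf Q}(\SA^{\otimes m})$ before passing to the limit $\mathbf Q(\Lambda)$. Once the disjoint-support commutation and the SWAP-conjugation $\id\otimes\beta\sim\beta\otimes\id$ are pinned down, abelianness of the direct limit $\overline{\mathbf Q}(\Lambda)$ follows formally, since the commutator of any two elements is trivialized at a finite stage. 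I would also remark that this is precisely the QCA incarnation of the Whitehead lemma, explaining the ``reminiscent of $K_1$'' comment, and pointing forward to the identification with $L$-theoretic Witt groups in the body of the paper.
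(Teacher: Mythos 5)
Your argument is correct and is essentially the standard one: the paper itself gives no proof here but defers to \cite{freedman2020classification}, and your stabilization-plus-SWAP argument (finite-stage normality passing to the direct limit; $\mathrm{SWAP}$ as a depth-one on-site circuit conjugating $\beta\otimes\id$ to $\id\otimes\beta$, so that $\alpha\otimes\id$ and $\beta\otimes\id$ commute modulo $\mathbf C$ because they commute on the nose after moving $\beta$ to the disjoint factor) is precisely the mechanism used there. The only caveat is that the step ``a QCA-conjugate of a circuit is a circuit'' is exactly the finite-stage normality $\mathbf C(\SA)\triangleleft\mathbf Q(\SA)$, which you are entitled to quote from the preceding proposition; your final bookkeeping, after the brief mid-paragraph wobble, is the right one.
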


The generalized Pauli matrices, often referred to as \textit{Clock and Shift matrices}, are a family of operators that extend the well-known Pauli matrices to higher-dimensional spaces. These matrices play a central role in quantum mechanics, quantum information theory, and the study of finite-dimensional operator algebras. For a \(d\)-dimensional Hilbert space \(\mathcal{H}\) or a \emph{qudit}, the generalized Pauli matrices are defined by two fundamental operators: the \textit{shift operator} \(X\) and the \textit{clock operator} \(Z\). These operators act on a fixed set of basis \(\{|0\rangle, |1\rangle, \dots, |d-1\rangle\}\) of \(\mathcal{H}\) as follows: the shift operator \(X\) performs a cyclic permutation \(X|k\rangle = |k+1 \mod d\rangle\), while the clock operator \(Z\) introduces a phase factor \(Z|k\rangle = \xi^k |k\rangle\), where \(\xi = e^{2\pi i/d}\) is a primitive \(d\)-th root of unity. These operators satisfy the  commutation relation \(XZ = \xi^{-1} ZX\), which generalizes the anti-commutation relation of the standard Pauli matrices. Importantly, the set of all products of the form \(X^a Z^b\) for \(a, b \in \{0, 1, \dots, d-1\}\) forms a basis for the space of \(d \times d\) complex matrices \(\mathrm{Mat}(\mathbb{C^d})\).  

\begin{defn}
    Let $\Lambda$ be a metric space with metric $\rho$. On each site $i\in \Lambda$, we assign $\mathcal H^{\otimes n_i}$. Therefore, 
    $\SA$ is the norm completion of 
    $$\varinjlim_{\Gamma\in \mathcal P_0(\Lambda)}\bigotimes_{i\in \Gamma}\left(\mathrm{Mat}(\CC^d)\right)^{\otimes n_i},$$ where $n_i$ may vary from site to site. Denote by $X_{k,i}, Z_{k,i}$ the generalized Pauli matrices over the $k$-th qudit on $i\in \Lambda$. Define the \textit{generalized Pauli operators} $\mathbf P(\SA)\subset \mathbf U(\SA)=\{A\in \SA: A^*A=AA^*=I\}$ as the group generated by $\{X_{k,i}, Z_{k,i}\}_{k,i}$ and $\{\lambda I: \lambda\in U(1)\}$. 
\end{defn}
For now, we omit the index $k$ and assume there is one qudit per $i \in \Lambda$. Everything is adaptable to the case with multiple qudits per site. 
\begin{prop}
    A QCA $\alpha$ is completely determined by the images $\alpha(X_i), \alpha(Z_i)\in \mathbf U(\SA)$ for every $i\in \Lambda$ and $X_i, Z_i\in \SA_{\{i\}}$ the clock and shift matrices. In other words, it suffices to look at the injective group homomorphism $\bar\alpha: \mathbf P(\SA) \rightarrow \mathbf U(\SA)$.
\end{prop}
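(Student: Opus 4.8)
The plan is to show that a QCA $\alpha$ is determined by its restriction to the generalized Pauli group, and conversely that this restriction is a genuine group homomorphism $\bar\alpha\colon \mathbf P(\SA)\to \mathbf U(\SA)$. First I would observe that since $\alpha$ is a $*$-automorphism, it sends unitaries to unitaries, so $\alpha(X_i),\alpha(Z_i)\in \mathbf U(\SA)$; moreover $\alpha$ preserves products and adjoints, so the assignment $g\mapsto \alpha(g)$ on $\mathbf P(\SA)$ is a group homomorphism, and it is injective because $\alpha$ is injective on all of $\SA$. This is essentially immediate from the fact that $\alpha$ is a $C^*$-automorphism, so the content of the proposition is really the \emph{determination} claim: that knowing $\alpha$ on $\mathbf P(\SA)$ pins down $\alpha$ on the whole quasi-local algebra $\SA$.

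For the determination claim I would argue as follows. As recalled in the excerpt, the products $X^aZ^b$ with $a,b\in\{0,\dots,d-1\}$ form a basis of $\mathrm{Mat}(\CC^d)=\SA_{\{i\}}$. Hence for each finite $\Gamma\subset\Lambda$, the monomials $\bigotimes_{i\in\Gamma}X_i^{a_i}Z_i^{b_i}$ form a basis of $\SA_\Gamma$, and each such monomial is (up to a scalar phase, which is fixed) an element of $\mathbf P(\SA)$. Since $\alpha$ is linear, its value on all of $\SA_\Gamma$ is determined by its values on these monomials, which are in turn determined by the $\alpha(X_i),\alpha(Z_i)$ via multiplicativity. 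Taking the colimit over $\Gamma\in\mathcal P_0(\Lambda)$ shows $\alpha$ is determined on $\SA_{\mathrm{loc}}$, and then by continuity (recall $\alpha$ is isometric, hence bounded) it is determined on the norm completion $\SA=\overline{\SA_{\mathrm{loc}}}$. I would also note that the phases $U(1)\subset\mathbf P(\SA)$ are automatically fixed since $\alpha$, being unital and linear, acts as the identity on scalars.

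The main obstacle — or rather the only genuinely substantive point — is the interplay between the phase ambiguity and the commutation relation $XZ=\xi ZX$: when we identify a tensor monomial of clock and shift operators with an element of $\mathbf P(\SA)$, the precise group element depends on an ordering convention, and one must check that $\alpha$ respects this (which it does, being multiplicative), so that no information is lost or double-counted. Concretely, one chooses the convention that the canonical lift of $X^aZ^b$ to $\mathbf P(\SA)$ is the literal product in that order; then $\alpha$ of the monomial $\bigotimes_i X_i^{a_i}Z_i^{b_i}$ equals $\prod_i \alpha(X_i)^{a_i}\alpha(Z_i)^{b_i}$ because operators on distinct sites commute and $\alpha$ preserves this. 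Everything else is bookkeeping: linearity plus density plus continuity. I do not expect any real difficulty beyond being careful that the basis statement for $\mathrm{Mat}(\CC^d)$ is used correctly and that passing to the completion is justified by the already-established isometry property of QCAs.
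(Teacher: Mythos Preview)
Your proposal is correct and matches the paper's approach exactly: the paper's proof is the one-line observation that $\mathbf P(\SA)$ generates $\SA$ and $\alpha$ is an automorphism, which you have simply unpacked (Pauli monomials span each $\SA_\Gamma$, linearity determines $\alpha$ on $\SA_{\mathrm{loc}}$, continuity extends to $\SA$). Your additional remarks on phases and ordering conventions are fine but not needed for the argument.
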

\begin{proof}
    Since $\mathbf P(\SA)$ generates $\SA$, the proposition follows from the fact $\alpha$ is an automorphism. 
\end{proof}

\begin{defn}
     A QCA $\alpha$ is called \textbf{separated} if  $\alpha(\langle X_i\rangle_{i\in \Lambda})=\langle X_i\rangle_{i\in \Lambda}$ and $\alpha(\langle Z_i\rangle_{i\in \Lambda})=\langle Z_i\rangle_{i\in \Lambda}$. Here, $\langle X_i\rangle_{i\in \Lambda}$ (resp. $\langle Z_i\rangle_{i\in \Lambda}$) denotes the subgroup generated by $\{X_i, \lambda I: i\in \Lambda, \lambda \in U(1)\}$ (resp. $\{Z_i, \lambda I: i\in \Lambda, \lambda \in U(1)\}$).
\end{defn}

Translations are clearly separated as they act as mere permutations on $\{X_i\}_{i\in \Lambda}$ and $\{Z_i\}_{i\in \Lambda}$, respectively. 

\begin{defn} \label{defn: CQ}
    A Clifford QCA is a group automorphism $\alpha$ of $\mathbf P(\SA)$ with bounded propagation, i.e., there exists $r>0$ such that, for every $i\in \Lambda$, $\alpha(X_i)$ and $\alpha(Z_i)$ are, respectively, products of elements in $\{X_j, Z_j: \rho(i,j)<r\}$. We denote the group of Clifford QCAs by $\mathbf{CQ}(\SA)$. 
\end{defn}
\begin{defn}
    A quantum gate $G\in \mathbf U(\SA_X)$ is \textit{Clifford} if $G\mathbf P(\SA)G^{-1}=\mathbf P(\SA)$. A \textit{Clifford circuit} is a circuit consisting of Clifford gates. We denote by $\mathbf{CC}(\SA)$ the collection of QCA's equal to conjugation by Clifford circuits.
\end{defn}
\begin{exmp}
    The product $\mathcal X=\prod_{i\in \Lambda}X_i$ is a Clifford circuit. Indeed, $\Ad_\mathcal{X} X_i=X_i$ and $\Ad_\mathcal{X} Z_i=\xi Z_i.$ 
\end{exmp}

The quotient of $\mathbf P(\SA)$ by its center $Z(\mathbf P(\SA))$ is an abelian group. We describe a convenient representation of $P(\Lambda, d):=\mathbf P(\SA)/Z(\mathbf P(\SA))$. Let $\Lambda$ be a lattice. Then $P(\Lambda, d)$ splits into a direct sum $$P(\Lambda, d)=\bigoplus_{i\in \Lambda} P(i)=\bigoplus_{i\in \Lambda} \ZZ_d^2.$$ Over each summand, the clock and shift operators $X$ and $Z$ on the corresponding site $i\in \Lambda$ are represented respectively by $\colvec{1}{0}$ and $\colvec{0}{1}$ in $\ZZ_d^2$. Products of $X$ and $Z$ are represented by a sum of $\colvec{1}{0}$ and $\colvec{0}{1}$, which is only unique up to a phase. For example, both $X^2Z$ and $XZX$ are represented by $\colvec{2}{1}$. As $X^d= Z^d= I$, we have $\colvec{d}{0}=\colvec{0}{d}=\colvec{0}{0}$ which is true in $\ZZ_d^2.$

Morevover, commutation relation between two operators is recovered by the matrix $\begin{pmatrix}
    0 & 1\\
    -1 & 0
\end{pmatrix}$ as follows. Given two Pauli operators such as $X^aZ^b, X^eZ^f$ represented by $\colvec{a}{b}$ and $\colvec{e}{f}$, they commute up to a phase
$$\xi^{af-be}X^aZ^b X^eZ^f=X^eZ^fX^aZ^b.$$
 The phase is given by the standard symplectic form 
\begin{equation}
\omega: \ZZ_d^2\times \ZZ_d^2 \rightarrow \ZZ_d, 
\end{equation}$$\omega\left(\colvec{a}{b}, \colvec{e}{f}\right)=
\begin{pmatrix}
    a & b
\end{pmatrix} 
\begin{pmatrix}
    0 & 1\\
    -1 & 0
\end{pmatrix}  \colvec{e}{f}=af-be.
$$
The sum $\Omega:=\bigoplus_{i\in \Lambda}\omega$ is the standard symplectic form on $P(\Lambda, d)$.

For brevity, we write $P$ for the abelian group $P(\Lambda,d)$ equipped with the standard symplectic form $\Omega$. 
\begin{defn} \label{defn: local symp}
A local symplectic automorphisms of $P$ consists of the following data:
\begin{itemize} 

\item A collection of abelian group homomorphisms $\al=(\alpha_j^i: P(i)\rightarrow P(j))_{i, j\in \Lambda}$ that are
    \begin{enumerate}
        \item symplectic: $\alpha^i_j$ pulls back the standard symplectic form, and
        \item invertible: there exists $(\beta_j^i: P(i)\rightarrow P(j))_{i, j\in \Lambda}$ such that
    $$(\beta \alpha)^i_l = \sum_{j \in \Lambda} \beta^j_l \alpha^i_j=\delta_{il}.$$
    
\end{enumerate}

\item A constant $r>0$ depending only on $\alpha$ such that $\alpha^i_j=0$ whenever $\rho(i,j)>r$.
\end{itemize}
\end{defn}
\begin{lem} \label{lem: motivation}
    There is a surjective homomorphism $$\kappa: \mathbf{CQ}(\SA)\rightarrow \mathrm{Aut}_{\mathrm{loc}}^{\Omega}(P),$$ where $\mathrm{Aut}_{\mathrm{loc}}^{\Omega}(P)$ is the group of local symplectic automorphisms of $P$ defined in Definition~\ref{defn: local symp}. Furthermore, $\ker \kappa\subset \mathbf{CC}(\SA)$. 
\end{lem}
\begin{proof}
An Clifford QCA $\alpha$ preserves center $Z(\mathbf P(\SA))$. Therefore, it induces an abelian group automorphism $\kappa(\alpha)$ on $\mathbf P(\SA)/Z(\mathbf P(\SA))=P$. Given $p, p'\in P$, one could lift them uniquely up to phases to $\tilde p, \tilde p' \in \mathbf P(\SA)$. Then $m=\Omega(\kappa(\alpha) p, \kappa(\alpha) p')=\Omega( p, p')$ because  
$$\xi^m =\al (\tilde p')\al (\tilde p)(\al (\tilde p)\al (\tilde p') )^{-1}=\tilde p' \tilde p (\tilde p \tilde p')^{-1}.$$ 
Here, the RHS does not depend on the choice of lifting.

The bounded propagation condition in Definition~\ref{defn: CQ} translates to the existence of constant $r$ in Definition~\ref{defn: local symp}

If $\alpha\in \ker \kappa$, then $\alpha(X_i)=\xi^{m_i}X_i$ and $\alpha(Z_i)=\xi^{n_i}Z_i$. There is a single-layer quantum circuit with gates $X_i^{n_i}Z_i^{-m_i}$ such that $\Ad_{X_i^{n_i}Z_i^{-m_i}}(X)=\xi^{m_i}X_i$ and $\Ad_{X_i^{n_i}Z_i^{-m_i}}(Z)=\xi^{n_i}Z_i.$ In other words, $\ker \kappa\subset \mathbf{CC}(\SA)$. 

To prove surjectivity, let $\theta \in \mathrm{Aut}_{\mathrm{loc}}^{\Omega}(P)$. Define $\tilde \theta (X_i)\in \mathbf P(\SA)$ (resp. $ \tilde \theta (Z_i) \in \mathbf P(\SA)$) to be a choice of lifting of the image under $\theta$ of $\colvec{1}{0}$ $\left( \text{resp. } \colvec{0}{1}\right)$ in $P(i)$ for each $i\in \Lambda$. These choices are independent of each other. Then extend $\tilde \theta$ to $\mathbf P(\SA)$ by the axioms of an automorphism. Bounded propagation condition can be seen from the corresponding locality condition of $\theta,$ therefore, $\tilde \theta\in \mathbf{CQ}(\SA)$. It is not hard to see that $\kappa(\tilde \theta)=\theta$.
\end{proof}

For classification purposes, quantum circuits are considered trivial. Consequently, it suffices to start our analysis from the target $\mathrm{Aut}_{\mathrm{loc}}^{\Omega}(P)$. Observation/Definition \ref{obs:keyobservation} is based on this result.

 \section{Proof of Theorem~\ref{thm:Main}}
We include a detailed proof of the first isomorphism in Theorem~\ref{thm:Main}. All steps are included, both for completeness and due to its resemblance to a pumping argument familiar from physics.

Let $[A, \alpha]$ be a class in $K_1(\mathcal{C}_{i+1}(\CA))$, where $A\in \mathcal{C}_{i+1}(\CA)$ and $\alpha: A\rightarrow A$ an automorphism. Recall $A$ has components $A(j_1,\dots, j_{i+1}) \in \CA$ for $j_1, \dots, j_{i+1}\in \ZZ.$ Decompose $A=A^-\oplus A^+$ with 
$$A^-(j_1, \dots, j_{i+1}):=\begin{cases}
    0, &\text{ if } j_{i+1}\geq 0\\
    A(j_1, \dots, j_{i+1}), &\text{ if } j_{i+1}< 0
\end{cases},$$
$$A^+(j_1, \dots, j_{i+1}):=\begin{cases}
    A(j_1, \dots, j_{i+1}), &\text{ if } j_{i+1}\geq 0\\
    0, &\text{ if } j_{i+1}< 0
\end{cases}.$$
We denote the projection to $A^-$ by $p^A_-$. Consider $\alpha  p^A_-  \alpha^{-1}$. It is a projection because 
$$\alpha  p^A_-  \alpha^{-1}  \alpha  p^A_-  \alpha^{-1}=\alpha  p^A_-  \alpha^{-1}.$$ Let $r$ be the filtration degree of $\alpha$, $\alpha  p^A_-  \alpha^{-1}$ is the identity on $A(j_1, \dots, j_{i+1})$ if $j_{i+1}< -2r$, and the $0$-map if $j_{i+1}>2r$.

Let $\bar A(j_1, \dots, j_i):=\bigoplus_{j=-2r}^{2r}A(j_1,\dots, j_i, j)\in \mathcal{C}_{i}(\CA).$ Define $\phi([A, \alpha])=[\bar A, \alpha  p^A_-  \alpha^{-1}]-[\bar A,  p^A_- ]$ in $K_0(\operatorname{Kar}(\mathcal{C}_{i}(\CA)))$. 

\begin{lem}\label{lem:elementary}

Let $A$ and $B$ be objects of $\mathcal{C}_{i+1}(\CA)$ and $\psi : A \oplus B \to A \oplus B$ a bounded projection satisfying
\[
\psi|_{(A \oplus B)(j_1, \dots, j_{i+1})} = 
\begin{cases}
0 & \text{if } j_{i+1} > k \\
1 & \text{if } j_{i+1} < -k
\end{cases}
\]
for some $k$. Let $\gamma : A \oplus B \to A \oplus B$ be an elementary isomorphism\footnote{Recall an elementary isomorphism has matrix
\[
 \begin{pmatrix} 1 & \eta \\ 0 & 1 \end{pmatrix}, \quad \eta : B \to A.
\]} with matrix
\[
\gamma = \begin{pmatrix} 1 & \eta \\ 0 & 1 \end{pmatrix}, \quad \eta : B \to A.
\]
Then $\psi$ and $\gamma \psi \gamma^{-1}$ restricted to a sufficiently big band around $j_{i+1} = 0$ represent the same element of $K_0(\operatorname{Kar}(\mathcal{C}_{i}(\CA)))$.
\end{lem}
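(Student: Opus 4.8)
The plan is to reduce the statement to the block-triangular structure of $\gamma$ and then exhibit an explicit isomorphism in $\operatorname{Kar}(\mathcal{C}_i(\mathcal{A}))$ between the two restricted projections. First I would set up notation: write $\psi = \begin{pmatrix} \psi_{AA} & \psi_{AB} \\ \psi_{BA} & \psi_{BB}\end{pmatrix}$ with respect to the decomposition $A\oplus B$, all entries bounded morphisms in $\mathcal{C}_{i+1}(\mathcal{A})$, and compute $\gamma\psi\gamma^{-1}$ where $\gamma^{-1} = \begin{pmatrix} 1 & -\eta \\ 0 & 1\end{pmatrix}$. The key point is that $\gamma$ has filtration degree equal to that of $\eta$, say $s$, so $\gamma\psi\gamma^{-1}$ is still a bounded projection, and moreover it still satisfies the boundary conditions: it equals $0$ for $j_{i+1} > k + 2s$ and $1$ for $j_{i+1} < -k - 2s$, since conjugating by a degree-$s$ automorphism can only blur the transition region by $2s$. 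Hence, enlarging $k$ to $k' = k + 2s$, both $\psi$ and $\gamma\psi\gamma^{-1}$ define, via the band construction $\bar A \oplus \bar B$ over $j \in \{-k', \dots, k'\}$ (analogous to the $\bar A$ in the main argument), genuine objects $(\overline{A\oplus B}, \overline{\psi})$ and $(\overline{A\oplus B}, \overline{\gamma\psi\gamma^{-1}})$ of $\operatorname{Kar}(\mathcal{C}_i(\mathcal{A}))$.

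Next I would produce the isomorphism between these two idempotents in the Karoubi envelope. The natural candidate is $\gamma$ itself, suitably restricted: $\gamma$ maps the image of $\psi$ to the image of $\gamma\psi\gamma^{-1}$, so in $\operatorname{Kar}$ it should give an isomorphism $(\overline{A\oplus B}, \overline{\psi}) \to (\overline{A\oplus B}, \overline{\gamma\psi\gamma^{-1}})$. The subtlety is that $\gamma$ does not preserve the band: it is degree $s$, so it maps $(A\oplus B)(\dots, j)$ into components with index within $s$ of $j$, which can exit the window $\{-k',\dots,k'\}$. The standard fix (this is the pumping-type step the appendix advertises) is to observe that the band-restriction of $\gamma$ composed with the band-projection still defines a morphism in $\mathcal{C}_i(\mathcal{A})$, and that modulo the idempotents $\overline{\psi}$ and $\overline{\gamma\psi\gamma^{-1}}$ the "spillover" outside the band lands where the idempotents are $0$ or $1$ in a controlled way, so the restricted map becomes an isomorphism in $\operatorname{Kar}(\mathcal{C}_i(\mathcal{A}))$ with inverse induced by $\gamma^{-1}$. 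One checks $\overline{\gamma\psi\gamma^{-1}}\cdot\bar\gamma\cdot\overline{\psi} = \bar\gamma$ (the defining condition for a $\operatorname{Kar}$-morphism) and that $\bar\gamma$ and $\overline{\gamma^{-1}}$ are mutually inverse after composing with the idempotents — both are routine once the bookkeeping of indices is fixed.

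The main obstacle I anticipate is precisely this index-bookkeeping: making rigorous the claim that restricting $\gamma$ to a finite band and projecting back yields a well-defined isomorphism in $\operatorname{Kar}(\mathcal{C}_i(\mathcal{A}))$ rather than merely a morphism, i.e. controlling the error terms coming from the part of $\gamma(\operatorname{im}\psi)$ that leaves the window. The way to handle it cleanly is to choose the band width $k'$ large compared to \emph{both} $k$ and the degree $s$ of $\eta$ (e.g. $k' = k + 3s$ with room to spare), so that on the overhanging slabs $j_{i+1} \in \{k'-s, \dots, k'\}$ and $j_{i+1}\in\{-k',\dots,-k'+s\}$ the idempotent $\psi$ is already identically $0$ or $1$; then the truncation of $\gamma$ differs from the honest conjugation only by morphisms that are killed upon pre- and post-composition with the respective idempotents, exactly as in Lemma analogous to the negative-$K$-theory argument. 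I would also remark that the hypothesis that $\gamma$ is \emph{elementary} (unipotent block-triangular) is what guarantees $\gamma^{-1}$ is again bounded of the same degree, which is what makes the two-sided inverse work; a general bounded automorphism would not obviously suffice without the splitting already in place.
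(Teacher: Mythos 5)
Your strategy is viable but genuinely different from the paper's, and the one step you defer to ``routine bookkeeping'' is exactly where the two proofs diverge. The paper never truncates $\gamma$ at all: it splits $\eta = \eta' + \eta''$ according to whether the \emph{source} component of $B$ lies inside or outside a band $|j_{i+1}|\le 2l$ (with $l>k$ the degree of $\eta$), so that $\gamma = \gamma'\gamma''$ with $\gamma''=1+\eta''$ supported entirely where $\psi$ is scalar ($0$ above, $1$ below) and hence commuting with $\psi$, while $\gamma'=1+\eta'$ is literally the identity outside a slightly larger band and therefore restricts to an honest automorphism of the band object with no error terms. This factorization buys you a proof with no spillover analysis at all. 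Your route --- conjugating the truncated idempotent by the truncated $\bar\gamma$ --- does work, but for a reason slightly different from the one you give: writing $\gamma = 1+N$ with $N=\left(\begin{smallmatrix}0&\eta\\0&0\end{smallmatrix}\right)$, the truncation $\bar\gamma = 1+\bar N$ is still exactly unipotent ($\bar N^2=0$ componentwise, since $N$ kills the $A$-summand), so $\bar\gamma$ is an automorphism of the band object with inverse $1-\bar N$; and the discrepancy $\bar\gamma\bar\psi\bar\gamma^{-1}-\overline{\gamma\psi\gamma^{-1}}$, supported near the band edges, vanishes at the upper edge because $\psi=0$ there and at the lower edge because the only surviving spillover terms are of the form $N^y_z\,\delta^y_w\,(\,\cdot\,)^x_y$ or $N^y_zN^x_y$ with $y$ below the band, which die by strict upper-triangularity and by $x,z$ lying inside the band.

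The imprecision to fix in your write-up is the claim that the spillover is ``killed upon pre- and post-composition with the respective idempotents.'' At the lower edge the idempotent is $1$, not $0$, so composing with it kills nothing; what actually saves you there is the nilpotent block structure of $\eta$ (and the diagonal spillover $\sum_{y\ \mathrm{below}}\gamma^y_z(\gamma^{-1})^x_y$ vanishing because its source and target indices would have to lie below the band). If you prefer to avoid this bookkeeping entirely, adopt the paper's splitting of $\eta$ by source support; either way the elementary (unipotent) form of $\gamma$ is doing the essential work, as you correctly observe at the end.
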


\begin{proof}
 
Assume $\eta$ has filtration degree $l > k$. Choose $B'$ and $B''$ such that
\[
B'(j_1, \dots, j_{i+1}) =
\begin{cases}
B(j_1, \dots, j_{i+1}) & \text{if } |j_{i+1}| \leq 2l \\
0 & \text{if } |j_{i+1}| > 2l
\end{cases}
\]
and $B = B' \oplus B''$. Also define $\eta', \eta'' : B \to A$ as the composites
\[
B \to B' \oplus 0 \to B \xrightarrow{\eta} A \quad \text{and} \quad B \to 0 \oplus B'' \to B \xrightarrow{\eta} A.
\]
Writing
\[
\gamma' = \begin{pmatrix} 1 & \eta' \\ 0 & 1 \end{pmatrix}, \quad \gamma'' = \begin{pmatrix} 1 & \eta'' \\ 0 & 1 \end{pmatrix},
\]
it is clear that $\gamma = \gamma'' \cdot \gamma' = \gamma' \cdot \gamma''$.

But
\[
\gamma \psi \gamma^{-1} = \gamma' \gamma'' \psi (\gamma'')^{-1} (\gamma')^{-1} = \gamma' \psi (\gamma')^{-1},
\]
 since $\psi$ is $1$ or $0$ outside a small band around $j_{i+1} = 0$ and $\gamma''$ is the identity in a bigger band around $j_{i+1} = 0$. As $\gamma'$ restricts to an isomorphism in the band, $\gamma' \psi (\gamma')^{-1}$ and $\psi$ are equivalent in $K_0(\operatorname{Kar}(\mathcal{C}_{i}(\CA))).$
   
\end{proof}

\begin{lem}
    The construction $\phi: K_1(\mathcal{C}_{i+1}(\CA))\rightarrow K_0(\operatorname{Kar}(\mathcal{C}_{i}(\CA)))$ is a well-defined homomorphism. 
\end{lem}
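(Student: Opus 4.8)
The plan is to verify that $\phi$ is a well-defined homomorphism by checking, in order: (1) independence of the choice of band width $r$ and of the chosen decomposition $A = A^- \oplus A^+$; (2) invariance under isomorphisms $(A,\alpha) \cong (A',\alpha')$ in $\operatorname{Aut}(\mathcal{C}_{i+1}(\CA))$; (3) compatibility with the two defining relations of $K_1$, namely $[A,\alpha]+[A,\alpha'] = [A,\alpha\alpha']$ and additivity along short exact sequences; and (4) additivity under direct sum, which will actually follow from (3). Throughout, the key technical tool is Lemma~\ref{lem:elementary}: any bounded automorphism of $\mathcal{C}_{i+1}(\CA)$ can be factored (after stabilization) into elementary pieces of the form $\begin{pmatrix} 1 & \eta \\ 0 & 1\end{pmatrix}$ and $\begin{pmatrix} 1 & 0 \\ \eta' & 1\end{pmatrix}$ together with an automorphism that is ``diagonal'' with respect to the $\pm$ splitting, and Lemma~\ref{lem:elementary} says conjugation by an elementary piece does not change the class in $K_0(\operatorname{Kar}(\mathcal{C}_i(\CA)))$ of the restricted projection.

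For step (1): if $r' > r$, then $\bar A$ computed with width $r'$ differs from $\bar A$ computed with width $r$ by adding summands $A(j_1,\dots,j_i,j)$ with $r < |j| \le r'$; on exactly these summands $\alpha p_- \alpha^{-1}$ agrees with $p_-$ (it is $1$ for $j < -2r$ and $0$ for $j > 2r$), so the extra contributions to $[\bar A, \alpha p_- \alpha^{-1}]$ and to $[\bar A, p_-]$ cancel in the difference. For changing the decomposition, any two choices of $A^-$ differ by a bounded reshuffling supported near $j_{i+1}=0$, which is absorbed into the band. For step (2): an isomorphism $f : (A,\alpha) \to (A',\alpha')$ intertwines the projections $\alpha p_- \alpha^{-1}$ and $\alpha' p'_- \alpha'^{-1}$ up to a bounded map, and restricting $f$ to a large enough band gives the required isomorphism in $\operatorname{Kar}(\mathcal{C}_i(\CA))$; one must check that $f$ also carries $p_-$ to something equivalent to $p'_-$, which again holds up to a near-diagonal correction.

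Step (3) is the heart of the argument. For the composition relation, write $\phi([A,\alpha]) = [\bar A, \alpha p_- \alpha^{-1}] - [\bar A, p_-]$ and $\phi([A,\alpha']) = [\bar A, \alpha' p_- \alpha'^{-1}] - [\bar A, p_-]$, and compare to $\phi([A, \alpha\alpha']) = [\bar A, \alpha\alpha' p_- \alpha'^{-1}\alpha^{-1}] - [\bar A, p_-]$. The idea is to insert the identity $[\bar A, \alpha p_- \alpha^{-1}] - [\bar A, p_-] = [\bar A, \alpha' p_- \alpha'^{-1}] - [\bar A, \alpha' p_- \alpha'^{-1}]$ cleverly, i.e.\ to show $[\bar A, \alpha\alpha' p_- (\alpha\alpha')^{-1}] - [\bar A, \alpha' p_- \alpha'^{-1}]$ equals $[\bar A, \alpha p_- \alpha^{-1}] - [\bar A, p_-]$; since $\alpha(\alpha' p_- \alpha'^{-1})\alpha^{-1}$ and $\alpha'p_-\alpha'^{-1}$ differ by conjugation by $\alpha$, and both projections are $0$/$1$ outside a band, one reduces via Lemma~\ref{lem:elementary} and a telescoping/Eilenberg-swindle-style rearrangement (conjugating the two ``half-infinite'' projections $p_-$ and $p_+$ and using that $K_0$ of an idempotent-complete category detects stable isomorphism of idempotents) to the case already handled. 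The short-exact-sequence relation is handled by choosing compatible $\pm$ splittings on $A$, $B$, $C$ so that the band decompositions fit into a short exact sequence in $\operatorname{Kar}(\mathcal{C}_i(\CA))$, giving additivity of $[\bar{\cdot},\, \cdot\, p_- \,\cdot^{-1}]$ and of $[\bar{\cdot}, p_-]$ separately. Finally step (4), $\phi([A\oplus B, \alpha \oplus \beta]) = \phi([A,\alpha]) + \phi([B,\beta])$, is immediate from the product decomposition of $\bar{A \oplus B}$ and of the projection $p_-$.

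I expect step (3), and within it the composition relation $\phi([A,\alpha\alpha']) = \phi([A,\alpha]) + \phi([A,\alpha'])$, to be the main obstacle: it is where one must genuinely use that we are working in the \emph{idempotent completion} and must organize a swindle-type cancellation of the infinitely-supported tails $p_-$ and $p_+$ so that only the bounded band contributes. The bookkeeping of filtration degrees — ensuring that after conjugation everything that is supposed to be supported in a band really is, so that Lemma~\ref{lem:elementary} applies — is routine but must be done carefully, since a naive argument would produce maps of unbounded filtration degree. Well-definedness on isomorphism classes (step 2) and additivity (step 4) are comparatively straightforward.
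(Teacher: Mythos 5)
Your outline assembles the right ingredients (Lemma~\ref{lem:elementary}, cancellation of the infinite tails, bookkeeping of filtration degrees), but the step you yourself identify as the heart of the matter --- the composition relation $\phi([A,\alpha\alpha'])=\phi([A,\alpha])+\phi([A,\alpha'])$ --- is not actually established, and the route you sketch does not close with the tools at hand. You reduce it to the claim that $[\overline{A},\alpha q\alpha^{-1}]-[\overline{A},q]=[\overline{A},\alpha p_-\alpha^{-1}]-[\overline{A},p_-]$ for $q=\alpha'p_-\alpha'^{-1}$, i.e.\ that the ``index'' of $\alpha$ is independent of which band projection is used to compute it. That is a substantive assertion needing proof, and Lemma~\ref{lem:elementary} does not supply it: the lemma controls conjugation only by \emph{elementary} (unipotent triangular) isomorphisms, whereas $q$ differs from $p_-$ by conjugation by the arbitrary automorphism $\alpha'$. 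The ``telescoping/Eilenberg-swindle rearrangement'' is left unspecified. Relatedly, your stated key technical tool --- that every bounded automorphism factors, after stabilization, into elementaries times an automorphism diagonal with respect to the $\pm$ splitting --- is false as written: if $\alpha=E\cdot D$ with $D$ preserving $A^\pm$, then $\alpha p_-\alpha^{-1}=Ep_-E^{-1}$ and Lemma~\ref{lem:elementary} would give $\phi\equiv 0$, contradicting surjectivity. What is true (and what you presumably intend) is the Whitehead lemma, which concerns diagonality with respect to a direct-sum decomposition $A\oplus A'$, not the $\pm$ decomposition.

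That Whitehead trick is precisely how the paper closes the gap: $\phi([A,\alpha\alpha'])=\phi([A\oplus A,\alpha\alpha'\oplus 1])=\phi([A\oplus A,\alpha\oplus\alpha'])=\phi([A,\alpha])+\phi([A,\alpha'])$, the middle equality holding because $(\alpha\alpha'\oplus 1)(\alpha\oplus\alpha')^{-1}$ is a product of elementary isomorphisms, each disposed of by Lemma~\ref{lem:elementary} applied to the block-diagonal projection $(\alpha\oplus\alpha')p_-(\alpha\oplus\alpha')^{-1}$. The same mechanism handles the exact-sequence relation: after splitting, the middle automorphism is an elementary isomorphism composed with the direct sum of the outer automorphisms, so its induced projection differs from the direct sum of projections only by an elementary conjugation. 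Your treatment of that relation (``compatible splittings \dots\ giving additivity separately'') glosses over exactly this off-diagonal unipotent correction. The remaining steps you list --- independence of the band width, invariance under isomorphism, additivity under direct sum --- are fine as described.
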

\begin{proof}
    Given a diagram in $\mathcal{C}_{i+1}(\CA)$
        \[
\begin{array}{ccccccccc}
0 & \longrightarrow & A & \longrightarrow & A\oplus B & \longrightarrow & B & \longrightarrow & 0 \\
  &                 & \downarrow{\scriptstyle \alpha} 
  &                 & \downarrow{\scriptstyle \gamma} 
  &                 & \downarrow{\scriptstyle \beta} 
  &                 &   \\
0 & \longrightarrow & A & \longrightarrow & A \oplus B & \longrightarrow & B & \longrightarrow & 0
\end{array},
\]
then $\gamma  (\alpha^{-1}\oplus \beta^{-1})$ is an elementary isomorphism. Apply Lemma \ref{lem:elementary}, 
\begin{align}
    \phi([A\oplus B, \gamma])&=[\overline{A}\oplus \overline{B}, \gamma (p^A_-\oplus p^B_-)\gamma^{-1}]-[\overline{A}\oplus \overline{B},  p^A_-\oplus p^B_-] \notag\\
    &=[\overline{A}\oplus \overline{B},   (\alpha^{-1}\oplus \beta^{-1})^{-1} (p^A_-\oplus p^B_-) (\alpha^{-1}\oplus \beta^{-1})]-[\overline{A}\oplus \overline{B},  p^A_-\oplus p^B_-] \notag \\
    &=\phi([A, \alpha])+\phi([B, \beta]).
\end{align}
By definition $\phi([A, 1])=0$ and for two automorphisms $\alpha, \alpha'$ of $A$
$$\phi([A, \alpha\alpha'])=\phi([A\oplus A, \alpha\alpha'\oplus 1])=\phi([A\oplus A, \alpha\oplus \alpha'])= \phi([A, \alpha])+ \phi([A,\alpha']).$$
\end{proof}

\begin{lem}
    The map $\phi$ is surjective.
\end{lem}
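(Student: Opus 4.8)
The plan is to exhibit an explicit preimage under $\phi$ for an arbitrary class $[\bar A, e] - [\bar B, f] \in K_0(\operatorname{Kar}(\mathcal{C}_{i}(\CA)))$, where $\bar A, \bar B \in \mathcal C_i(\CA)$ and $e, f$ are idempotents. Since $\phi$ is a homomorphism (established in the previous lemma) and since every element of $K_0(\operatorname{Kar}(\mathcal C_i(\CA)))$ is a difference $[\bar A, e] - [\bar B, f]$, it suffices to hit classes of the form $[\bar A, e]$ together with the classes of the "correction terms'' $[\bar A, \id]$ coming from the definition $\phi([A,\alpha]) = [\bar A, \alpha p_- \alpha^{-1}] - [\bar A, p_-]$; in fact it is cleanest to show directly that an arbitrary idempotent class in the image is represented by a shift-type automorphism. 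The key construction: given $(\bar A, e)$ with $\bar A \in \mathcal C_i(\CA)$, form the object $A \in \mathcal C_{i+1}(\CA)$ whose stalk in the last coordinate $j_{i+1} = n$ is a copy of $\bar A$ for all $n \in \ZZ$ (a "constant in the new direction'' object — note local finiteness still holds since each ball meets only finitely many values of $j_{i+1}$, times the local finiteness of $\bar A$). On this object define the automorphism $\alpha$ that acts as "shift by one in the $j_{i+1}$-direction on the sub-object $e\bar A$, and as the identity on $(1-e)\bar A$'' — more precisely, using $\bar A = \im(e) \oplus \im(1-e)$ inside the idempotent completion, let $\alpha$ translate the $\im(e)$-part of the stalk at level $n$ to level $n+1$ and fix the $\im(1-e)$-part. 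This $\alpha$ is a bounded automorphism (filtration degree $1$), and it is invertible with inverse the opposite shift on the $\im(e)$-part.

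The next step is to compute $\phi([A,\alpha])$ and check it equals $[\bar A, e]$ (up to the separated/constant correction terms, which are themselves in the image or cancel). The point is that $\alpha p_- \alpha^{-1}$ differs from $p_-$ exactly in that the $\im(e)$-component of the stalk at level $0$ has been pushed across the cut $j_{i+1} = 0$: on the $\im(1-e)$-part the two projections agree, while on the $\im(e)$-part the new projection is the old one shifted by one level. Restricting both to a band $-2r \le j_{i+1} \le 2r$ (here $r = 1$ suffices up to enlarging) and forming the difference in $K_0(\operatorname{Kar}(\mathcal C_i(\CA)))$, all the levels except one boundary level cancel telescopically, leaving exactly the class of $(\im(e), \id) \cong (\bar A, e)$. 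This is the "pumping'' picture the author alludes to in the appendix's preamble: the automorphism pumps one layer of the $e$-summand across the wall, and $\phi$ reads off what got pumped.

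Two technical points will need care. First, one must verify that the constant-in-$j_{i+1}$ object $A$ genuinely lies in $\mathcal C_{i+1}(\CA)$ — this is where the hypothesis that $\bar A \in \mathcal C_i(\CA)$ is itself locally finite is used, together with the observation that any metric ball in $\ZZ^{i+1}$ projects to a bounded set in the last coordinate. Second, and this is the part I expect to be the main obstacle, one must make sense of the shift on the $\im(e)$-summand as an honest morphism in $\mathcal C_{i+1}(\CA)$ rather than in its idempotent completion, since objects of $\mathcal C_{i+1}(\CA)$ are literally collections of objects of $\CA$ (not of $\operatorname{Kar}(\CA)$). The standard fix is to choose the object $A$ to be $\bar A$ at every level and let $\alpha$ be the automorphism $\id + (s - 1)\!\cdot\! e_{\mathrm{stalk}}$ built honestly from $e$: write $\alpha$ in block form with respect to $\bar A = \bar A$ at each level, where the map from level $n$ to level $n$ is $1-e$, the map from level $n$ to level $n+1$ is $e$, and all other blocks vanish; then check $e^2 = e$ makes this invertible with the evident candidate inverse. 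Verifying invertibility and boundedness of $\alpha^{-1}$ — i.e. producing a genuine two-sided inverse with the matrix entries living in the appropriate filtration — is the computational heart of the argument, but it is a finite check using only $e^2 = e$ and the fact that the shift in one direction is invertible on the bi-infinite line. Once $\alpha$ and $\alpha^{-1}$ are in hand, the evaluation of $\phi$ is the telescoping cancellation sketched above, and surjectivity follows.
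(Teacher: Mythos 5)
Your construction is exactly the paper's: given an idempotent $(B,p)$ in $\operatorname{Kar}(\mathcal C_i(\CA))$, the paper also forms the object constant in the $j_{i+1}$-direction and takes $\alpha$ to be the block map acting by $1-p$ from level $n$ to level $n$ and by $p$ from level $n$ to level $n+1$, so that $\alpha p_-\alpha^{-1}$ differs from $p_-$ only by the $p$-summand pumped across the cut at $j_{i+1}=0$, giving $\phi([A,\alpha])=[B,p]$. The proposal is correct and takes essentially the same approach.
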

\begin{proof}
    Let $B$ be an object in $\mathcal{C}_i(\CA)$ and $p: B\rightarrow B$ a projection. Define $A(j_1, \dots, j_{i+1})= B(j_1, \dots, j_i)$ for every $j_{i+1}$ and $\alpha: A\rightarrow A$ by

\[ 
\begin{tikzcd}[column sep=2.5em, row sep=2.5em]
A \arrow[d, "\al="] & \cdots \arrow[dr, "p"] & B \arrow[d, "1-p"] \arrow[dr, "p"] & B \arrow[d, "1-p"] \arrow[dr, "p"] & B \arrow[d, "1-p"] \arrow[dr, "p"] & B \arrow[d, "1-p"] \arrow[dr, "p"] &\cdots   \\
A          &  \cdots                 & B  & B  & B & B  &\cdots
\end{tikzcd}
\]

We compute $\al p^A_- \al^{-1}$ and get

\[ 
\begin{tikzcd}[column sep=2.5em, row sep=2.5em]
A \arrow[d, "\al p^A_- \al^{-1}="] & \cdots  & B \arrow[d, "1"] & B \arrow[d, "1"]  & B \arrow[d, "p"]  & B \arrow[d, "0"] & B \arrow[d, "0"] &\cdots   \\
A          &  \cdots                 & B  & B  & B & B &B  &\cdots
\end{tikzcd}
\]

Therefore, $\phi([A, \al])=[B^{\oplus 5}, 1\oplus 1 \oplus p \oplus 0 \oplus 0]=[B, p].$

\end{proof}
\begin{defn}
    Let $\alpha$ be an automorphism of $A$ in $\mathcal C_{i+1}(\CA)$. We say $\alpha$ is split at $m$ if the following holds:
    $$j_{i+1}\geq m \text{ implies } \alpha(A(j_1, \dots, j_{i+1}))\subset \bigoplus_{(k_1, \dots, k_{i+1}): k_{i+1}\geq m}A(k_1, \dots, k_{i+1})$$
    and 
    $$j_{i+1}< m \text{ implies } \alpha(A(j_1, \dots, j_{i+1}))\subset \bigoplus_{(k_1, \dots, k_{i+1}): k_{i+1}< m}A(k_1, \dots, k_{i+1}).$$
\end{defn}
\begin{lem}
    A split automorphism is trivial in $K_1(\mathcal C_{i+1}(\CA))$.
\end{lem}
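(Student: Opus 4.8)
I would prove that the class $[A,\alpha]\in K_1(\mathcal{C}_{i+1}(\CA))$ vanishes by an Eilenberg swindle tailored to the half-space cutoff. Write $A=A^{-}\oplus A^{+}$, where $A^{\pm}$ is the restriction of $A$ to the positions with $j_{i+1}<m$, resp.\ $j_{i+1}\geq m$; these are again objects of $\mathcal{C}_{i+1}(\CA)$. The hypothesis that $\alpha$ is split at $m$ is exactly the statement that $\alpha(A^{\pm})\subseteq A^{\pm}$, i.e.\ that $\alpha$ is block-diagonal, $\alpha=\alpha_-\oplus\alpha_+$; a one-line computation with $\alpha^{-1}$ then shows $\alpha_-$ and $\alpha_+$ are automorphisms. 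Feeding the split exact sequence $0\to A^{-}\to A\to A^{+}\to 0$, compatible with $\alpha_-,\alpha,\alpha_+$, into relation~(2) of the definition of $K_1$ yields $[A,\alpha]=[A^{-},\alpha_-]+[A^{+},\alpha_+]$, so it is enough to kill each half separately.

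For the $A^{+}$ half: let $\tau$ be the auto-equivalence of $\mathcal{C}_{i+1}(\CA)$ given by translation by $1$ in the last coordinate, so $\tau^{k}A^{+}$ is supported on $\{j_{i+1}\geq m+k\}$, and set $B:=\bigoplus_{k\geq 0}\tau^{kM}A^{+}$ for some fixed $M\geq 1$, with block-diagonal automorphism $\beta:=\bigoplus_{k\geq 0}\tau^{kM}\alpha_+$. Because the copies are stacked only towards $+\infty$, $B$ is a genuine object of $\mathcal{C}_{i+1}(\CA)$: in any ball the last coordinate is bounded above, so only finitely many summands are supported there and each of those is locally finite, whence $B$ is locally finite with each stalk a finite direct sum of stalks of $A^{+}$. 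Block-diagonality makes $\beta$ of bounded filtration degree, with inverse $\bigoplus_{k}\tau^{kM}(\alpha_+^{-1})$.

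Now pulling off the $k=0$ summand gives a split exact sequence $0\to(A^{+},\alpha_+)\to(B,\beta)\to(\tau^{M}B,\tau^{M}\beta)\to 0$, using $\bigoplus_{k\geq 1}\tau^{kM}A^{+}=\tau^{M}B$ and the matching identity for the automorphisms, so relation~(2) gives $[A^{+},\alpha_+]+[\tau^{M}B,\tau^{M}\beta]=[B,\beta]$. Separately, sliding every summand over by one copy is an isomorphism of pairs $(B,\beta)\xrightarrow{\sim}(\tau^{M}B,\tau^{M}\beta)$: on the $k$-th summand it is the filtration-degree-$M$ translation morphism $\tau^{kM}A^{+}\xrightarrow{\sim}\tau^{(k+1)M}A^{+}$, which intertwines $\tau^{kM}\alpha_+$ with $\tau^{(k+1)M}\alpha_+$. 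Hence $[B,\beta]=[\tau^{M}B,\tau^{M}\beta]$, and subtracting the two relations gives $[A^{+},\alpha_+]=0$. The identical argument with $\tau^{-1}$ in place of $\tau$ (copies of $A^{-}$ pushed off to $-\infty$) gives $[A^{-},\alpha_-]=0$, so $[A,\alpha]=0$.

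The only step that needs real care is checking that $B$ actually lies in $\mathcal{C}_{i+1}(\CA)$: the one-sided stacking is essential, since copies accumulating in both directions would break local finiteness (so one genuinely must first peel off $A^{\pm}$ rather than swindle on $A$ itself). Everything downstream — that the displayed sequences are split exact in the merely additive category $\mathcal{C}_{i+1}(\CA)$, which is automatic from $B=A^{+}\oplus\tau^{M}B$; that $\beta$ and the sliding isomorphism have bounded filtration degree; and that the sliding isomorphism intertwines the two automorphisms — is routine bookkeeping with multi-indexed matrices, of the same flavour as Lemma~\ref{lem:elementary}.
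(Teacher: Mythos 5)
Your proof is correct and follows essentially the same route as the paper: decompose $A$ into the two half-space pieces $A^{\pm}$ on which the split automorphism restricts, then kill each piece by an Eilenberg swindle that stacks translated copies off to $\pm\infty$ (one-sidedness being exactly what keeps the infinite sum locally finite). The paper simply invokes ``Eilenberg's swindle'' for this last step, so your write-up supplies the details it leaves implicit.
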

\begin{proof}
    Given $[A, \alpha]$ with $\alpha$ split at $m$. Decompose $A=A'\oplus A''$ where 
    $$A'(j_1, \dots, j_{i+1}):=\begin{cases}
        A(j_1, \dots, j_{i+1}), & j_{i+1} \geq m\\
        0, & j_{i+1}<m.
    \end{cases}$$
    Then $\alpha$ restricts to automorphisms $\alpha': A'\rightarrow A'$ and $\alpha'': A''\rightarrow A''$ with $[A, \al]=[A', \alpha']+[A'', \alpha'']$. 

    Define $s^lA' (j_1, \dots, j_{i+1})=A' (j_1, \dots, j_{i+1}-l)$. Conjugating $\al'$ by $s^l$ gives an automorphism on $s^lA'$ we denote by $(\al')_l.$ Apparently, $[s^{l}A', (\al')_{l}]=[s^{l+1}A', (\al')_{l+1}]$ by the obvious isomorphism. Therefore, 
    $$ [A', \al']+\left[\bigoplus_{l=1}^\infty s^l A', \bigoplus_{l=1}^\infty (\al')_l\right]=\left[\bigoplus_{l=0}^\infty s^l A', \bigoplus_{l=0}^\infty (\al')_l\right] = \left[\bigoplus_{l=1}^\infty s^l A', \bigoplus_{l=1}^\infty (\al')_l\right],$$
    which implies $[A', \alpha']=0.$ Note the infinite direct sums above are well defined because over every point $(j_1, \dots, j_{i+1})$, there are finitely many nonzero summands. A similar argument shows that $[A'', \alpha'']=0.$
\end{proof}
\begin{lem}
    Let $[A, p_1], [A, p_2]\in K_0(\operatorname{Kar}(\mathcal{C}_{i}(\CA)))$. Then $[A, p_1]=[A, p_2]$ if and only if there are objects $A'$ and $A''$ in $\mathcal{C}_{i}(\CA)$ and an automorphism $\phi$ of $A\oplus A'\oplus A''$ such that $(p_2\oplus 1\oplus 0)  \phi=\phi   (p_1\oplus 1\oplus 0)$.
\end{lem}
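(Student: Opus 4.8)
\subsection*{Proof proposal}

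The plan is to translate the equality $[A,p_1]=[A,p_2]$ in $K_0(\operatorname{Kar}(\mathcal{C}_{i}(\CA)))$ into an isomorphism of objects of the idempotent completion, and conversely to inflate such an isomorphism into a genuine automorphism of a larger object of $\mathcal{C}_{i}(\CA)$ by a doubling trick; the summand $A''$ carrying the zero idempotent is precisely the slack this trick needs. For the ``if'' direction, suppose $\phi$ is an automorphism of $X:=A\oplus A'\oplus A''$ with $F\phi=\phi E$, where $E:=p_1\oplus 1\oplus 0$ and $F:=p_2\oplus 1\oplus 0$ are the evident idempotents on $X$. Then $\phi$ induces a morphism $F\phi E=\phi E\colon (X,E)\to(X,F)$ in $\operatorname{Kar}(\mathcal{C}_{i}(\CA))$ with two-sided inverse $E\phi^{-1}F=\phi^{-1}F$, so $(X,E)\cong(X,F)$. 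Since $(X,E)\cong(A,p_1)\oplus(A',1_{A'})\oplus(A'',0)\cong(A,p_1)\oplus(A',1_{A'})$ using $(A'',0)\cong 0$, and likewise $(X,F)\cong(A,p_2)\oplus(A',1_{A'})$, passing to $K_0$ and cancelling $[A',1_{A'}]$ gives $[A,p_1]=[A,p_2]$.

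For the ``only if'' direction, recall that $K_0$ is the group completion of the monoid of isomorphism classes of objects under $\oplus$, so $[A,p_1]=[A,p_2]$ means $(A,p_1)\oplus(C,q)\cong(A,p_2)\oplus(C,q)$ for some $(C,q)$ in $\operatorname{Kar}(\mathcal{C}_{i}(\CA))$. Adding $(C,1-q)$ to both sides and using $(C,q)\oplus(C,1-q)\cong(C,1_C)$, I obtain $(A\oplus C,\,p_1\oplus 1_C)\cong(A\oplus C,\,p_2\oplus 1_C)$ in $\operatorname{Kar}$. Set $A':=C$, $B:=A\oplus A'$, $e:=p_1\oplus 1_{A'}$, $f:=p_2\oplus 1_{A'}$, and let $g,h\colon B\to B$ in $\mathcal{C}_{i}(\CA)$ underlie this isomorphism and its inverse, so that $ge=g=fg$, $eh=h=hf$, $gh=f$, $hg=e$. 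Taking $A'':=B$, set
\[
\phi=\begin{pmatrix} g & 1-f\\ 1-e & -h\end{pmatrix}\colon B\oplus B\longrightarrow B\oplus B .
\]
A routine $2\times 2$ computation from those identities shows $\phi$ is an automorphism, with inverse $\left(\begin{smallmatrix} h & 1-e\\ 1-f & -g\end{smallmatrix}\right)$, and that $\phi\,(e\oplus 0)=(f\oplus 0)\,\phi$; since $e=p_1\oplus 1_{A'}$ and $f=p_2\oplus 1_{A'}$, this is exactly $(p_2\oplus 1\oplus 0)\phi=\phi(p_1\oplus 1\oplus 0)$ with $A''=B=A\oplus A'$. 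Finally $\phi$ is a legitimate morphism of $\mathcal{C}_{i}(\CA)$, since a finite matrix of morphisms of $\mathcal{C}_{i}(\CA)$ has bounded propagation and bounded filtration degree.

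I do not expect a real obstacle here; the argument is bookkeeping once the doubling matrix is written down. The only step that requires a moment of care is arranging the stabilizing summand into the ``free'' shape $(A',1_{A'})$, so that the two idempotents in the statement genuinely differ only in the $A$-block --- this is what the completion identity $(C,q)\oplus(C,1-q)\cong(C,1_C)$ provides --- after which the zero idempotent on $A''$ supplies exactly the degrees of freedom needed to promote the $\operatorname{Kar}$-isomorphism $g$ to an honest automorphism. This lemma is the categorical input one would feed into an injectivity argument for the map $\phi\colon K_1(\mathcal{C}_{i+1}(\CA))\to K_0(\operatorname{Kar}(\mathcal{C}_{i}(\CA)))$ of Theorem~\ref{thm:Main}.
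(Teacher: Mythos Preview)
Your proof is correct, and the overall strategy matches the paper's: stabilize by some $(C,q)$, then add $(C,1-q)$ so that the extra idempotent becomes ``free'' (i.e., of the form $1\oplus 0$), and read off the automorphism. The execution of the last step differs. The paper conjugates $q\oplus(1-q)$ on $A'\oplus A'$ to $1\oplus 0$ via the involution $\left(\begin{smallmatrix} q & 1-q\\ 1-q & q\end{smallmatrix}\right)$ and declares the result with $A''=A'$; this is terser and yields a smaller $A''$, but leaves implicit how the resulting Kar-isomorphism is promoted to an honest automorphism of $A\oplus A'\oplus A''$. You instead carry out that promotion explicitly with the doubling matrix $\left(\begin{smallmatrix} g & 1-f\\ 1-e & -h\end{smallmatrix}\right)$ on $B\oplus B$, at the cost of a larger $A''=A\oplus A'$. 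Both routes are standard; yours is more self-contained, while the paper's is more economical in the size of the stabilizer.
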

\begin{proof}
    The if direction is trivial. Assume $[A, p_1]=[A, p_2]$, which implies, for some object $[A', q]$, \((A \oplus A', p_1 \oplus q)\) is isomorphic to 
\((A \oplus A', p_2 \oplus q)\). But then \((A \oplus A' \oplus A', p_1 \oplus q \oplus (1 - q))\) is isomorphic to 
\((A \oplus A' \oplus A', p_2 \oplus q \oplus (1 - q))\). Conjugating \((A' \oplus A', q \oplus (1 - q))\) by 
\[
\ \begin{pmatrix}
q & 1 - q \\
1 - q & q
\end{pmatrix} 
\]
gives \((A' \oplus A', 1 \oplus 0)\) so we obtain the desired result by letting \(A'' = A'\). 

\end{proof}
\begin{lem}
    The map $\phi$ is injective.
\end{lem}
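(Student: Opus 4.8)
The plan is to show that if $\phi([A,\alpha]) = 0$ in $K_0(\operatorname{Kar}(\mathcal{C}_i(\CA)))$, then $[A,\alpha] = 0$ in $K_1(\mathcal{C}_{i+1}(\CA))$, using the preceding lemmas as the toolkit: split automorphisms vanish, elementary conjugations don't change $\phi$, and the characterization of when two projections agree in $K_0(\operatorname{Kar})$. The core strategy is a \emph{pumping} (or \emph{Eilenberg swindle}) argument: given $[A,\alpha]$ with $\phi([A,\alpha]) = [\bar A, \alpha p_- \alpha^{-1}] - [\bar A, p_-] = 0$, I first want to rigidify $\alpha$ so that the projection $\alpha p_- \alpha^{-1}$ literally equals $p_-$ after stabilizing and conjugating by an elementary isomorphism; then $\alpha$ preserves the decomposition $A = A^- \oplus A^+$ up to a controlled error, which lets me split it and invoke the vanishing of split automorphisms.

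First I would stabilize: replace $A$ by $A \oplus A' \oplus A''$ and $\alpha$ by $\alpha \oplus 1 \oplus 1$ (this does not change the class in $K_1$), chosen via the last lemma so that there is an automorphism $\Phi$ of the banded object $\bar A \oplus \bar{A'} \oplus \bar{A''}$ conjugating $p_- \oplus 1 \oplus 0$ to $\alpha p_- \alpha^{-1} \oplus 1 \oplus 0$. The next step is to promote this band-level automorphism $\Phi$ to a global automorphism of $\mathcal{C}_{i+1}(\CA)$ objects: extend $\Phi$ by the identity outside the band (this is legitimate because both projections are $0$ far above and $1$ far below $j_{i+1} = 0$, so the conjugation is trivial there). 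After replacing $\alpha$ by $\Phi^{-1}\alpha$ — which I must check is again a legitimate move, i.e.\ that $\Phi$ extended this way is trivial in $K_1(\mathcal{C}_{i+1}(\CA))$, again by a swindle since it is supported in a bounded band and hence split — I arrange that $\alpha p_- \alpha^{-1} = p_-$, i.e.\ $\alpha$ commutes with $p_-$. Then $\alpha$ restricts to automorphisms of $A^-$ and $A^+$ separately; since $A^-$ is supported on the negative half-line and $A^+$ on the nonnegative half-line, each is killed by Eilenberg's swindle (push the whole half-infinite tower toward infinity), so $[A,\alpha] = [A^-, \alpha^-] + [A^+, \alpha^+] = 0$.

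The main obstacle I anticipate is the bookkeeping in the middle step: carefully arranging, via stabilization, that $\alpha p_-\alpha^{-1}$ is conjugate to $p_-$ by an automorphism that (a) is supported in a bounded band in the $j_{i+1}$-direction so that extending by the identity is consistent and produces a genuine bounded morphism in $\mathcal{C}_{i+1}(\CA)$, and (b) represents the trivial class in $K_1$, so that premultiplying $\alpha$ by its inverse is harmless. This requires matching up the last lemma's characterization (which lives in $\mathcal{C}_i(\CA)$ for the banded objects) with the elementary-isomorphism freedom from Lemma~\ref{lem:elementary}, and verifying that the residual discrepancy between $\alpha p_-\alpha^{-1}$ and $p_-$ — which a priori could differ by something only defined up to the idempotent-completion equivalence $(B,0) \cong (B',0)$ — can be absorbed into a direct summand on which $\alpha$ is split. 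Once $\alpha$ genuinely commutes with a splitting projection, the swindle is routine; getting to that point cleanly is where the care is needed.
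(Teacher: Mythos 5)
Your proposal is correct and follows essentially the same route as the paper's proof: invoke the characterization of equality of projection classes in $K_0(\operatorname{Kar}(\mathcal{C}_i(\CA)))$ to produce a band-level conjugating automorphism, absorb the stabilizing summands $A'$, $A''$ into $A$ at fixed $j_{i+1}$-coordinates, extend the conjugator by the identity outside the band (noting it is split, hence trivial in $K_1$), and conclude that the corrected automorphism commutes with $p_-$ and is therefore split at $0$ and killed by the Eilenberg swindle. The bookkeeping you flag as the main obstacle is handled in the paper exactly as you describe, by placing $A'$ at $j_{i+1}=-1$ and $A''$ at $j_{i+1}=0$ before extending the conjugator $\beta$ by the identity.
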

\begin{proof}

Assume \( \phi([A, \alpha]) = 0 \) for some $A\in \mathcal{C}_{i+1}(\CA)$ and automorphism $\alpha$. We have
\[
[\overline{A}, \alpha p^A_- \alpha^{-1}] - [\overline{A}, p^A_-] = 0.
\]
    Find \( A' \) and \( A'' \) in \( \mathcal{C}_i(\CA) \) such that
\begin{equation} \label{eq: LHS}
    (\overline{A} \oplus A' \oplus A'', p^A_- \oplus 1 \oplus 0)
\end{equation}

is isomorphic to
\begin{equation}\label{eq: RHS}
    (\overline{A} \oplus A' \oplus A'', \alpha p^A_- \alpha^{-1} \oplus 1 \oplus 0) = (\overline{A} \oplus A' \oplus A'', (\alpha \oplus 1 \oplus 1)(p^A_- \oplus 1 \oplus 0)(\alpha \oplus 1 \oplus 1)^{-1}).
\end{equation}

Define $ (C, \gamma)= (A \oplus B, \alpha \oplus 1) $ where
\[
B(j_1, \ldots, j_{i+1}) =
\begin{cases}
A'(j_1, \ldots, j_i ) & j_{i+1} = -1 \\
A''(j_1, \ldots, j_i ) & j_{i+1} = 0\\
0 & \text{otherwise}.
\end{cases}
\]

Then expression~\eqref{eq: LHS} is equal to $[\overline{C}, p_-^C]$ and expression~\eqref{eq: RHS} is equal to $[\overline{C}, \gamma p_-^C\gamma^{-1}]$. Thus, there is an isomorphism \( \beta : \overline{C} \to \overline{C} \) satisfying
$$
\beta \gamma p_-^C\gamma^{-1} = p^C_- \beta.
$$
Since $\overline{C}$ is obtained from summing terms in a band of $C$, one could lift $\beta$ to an automorphism of this band which agrees with $\beta$ when descending
to $\overline{C}.$ Extending this automorphism from the band to all $C$ by the identity map outside the band, we get an automorphism $\tilde \beta$ of $C$ that 
\[
\tilde\beta \gamma p^C_- = p^C_- \tilde\beta \gamma.
\]
This means that \( \tilde\beta \gamma \) is split at \( 0 \), so \( [C, \tilde\beta \gamma] = 0 \). However \( \tilde\beta \) is the identity outside a finite band, so \( \tilde\beta \) is split. Hence \( [C, \tilde\beta] = 0 \) and \( [A, \al]=[C, \gamma] = 0 \). 
\end{proof}
Conflict of interest statement: NO; Data availability statement: NO.
\printbibliography
\end{document}